\definecolor{lavaflower}{HTML}{710F12}
\definecolor{regalblue}{HTML}{002366}
\title{Threshold Constraints with Guarantees for Parity Objectives in Markov Decision Processes\footnote{Work partially supported by the ERC Starting grant 279499 (inVEST) and the ARC project ``Non-Zero Sum Game Graphs: Applications to Reactive Synthesis and Beyond'' (F\'ed\'eration Wallonie-Bruxelles). J.-F. Raskin is Professeur Francqui de Recherche, M.~Randour is an F.R.S.-FNRS postdoctoral researcher.}}
\titlerunning{Threshold Constraints with Guarantees for Parity Objectives in MDPs} 
\author[1,2]{Rapha\"el Berthon}
\author[2]{Mickael Randour}
\author[2]{Jean-Fran\c{c}ois Raskin}
\affil[1]{ENS Rennes, France}
\affil[2]{Computer Science Department, ULB - Universit\'e libre de Bruxelles, Belgium}
\authorrunning{R.~Berthon, M.~Randour, and J.-F.~Raskin}
\begin{document}
\maketitle

\begin{abstract}
The {\em beyond worst-case synthesis problem} was introduced recently by Bruy\`ere et al.~\cite{DBLP:conf/stacs/BruyereFRR14}: it aims at building system controllers that provide strict worst-case performance guarantees against an antagonistic environment while ensuring higher expected performance against a stochastic model of the environment. Our work extends the framework of~\cite{DBLP:conf/stacs/BruyereFRR14} and follow-up papers, which focused on quantitative objectives, by addressing the case of $\omega$-regular conditions encoded as parity objectives, a natural way to represent functional requirements of systems.

We focus on building strategies that satisfy a main parity objective on all plays, while ensuring a secondary one with sufficient probability. This setting raises new challenges in comparison to quantitative objectives, as one cannot easily mix different strategies without endangering the functional properties of the system. Interestingly, we establish that, for all variants of this pro\-blem, deciding the existence of a strategy  lies in ${\sf NP} \cap {\sf coNP}$, the same complexity class as classical parity games. Hence, our framework provides additional modeling power while staying in the same complexity class.
\end{abstract}

\section{Introduction}

\smallskip\noindent\textbf{Beyond worst-case synthesis.} \textit{Two-player zero-sum games}~\cite{DBLP:conf/dagstuhl/2001automata,rECCS} and \textit{Markov decision processes} (MDPs)~\cite{filar1997,baier2008principles} are two popular frameworks to model decision making in adversarial and uncertain environments respectively. In the former, a system controller (player~1) and its environment (player~2) compete antagonistically, and synthesis aims at building strategies for the controller that ensure a specified behavior \textit{against all possible strategies of the environment}. In the latter, the system is faced with a given stochastic model of its environment, and the focus is on satisfying a given level of expected performance, or a \textit{specified behavior with a sufficient probability}.

The \textit{beyond worst-case synthesis} framework was introduced by Bruy\`ere et al.~\cite{DBLP:conf/stacs/BruyereFRR14} to unite both views: in this setting, we look for strategies that provide both \textit{strict worst-case guarantees} and a \textit{good level of performance against the stochastic model}. Such requirements are natural in many practical situations (e.g., see~\cite{DBLP:journals/corr/BruyereFRR14,DBLP:conf/vmcai/RandourRS15} for applications to the shortest path problem). The original paper~\cite{DBLP:conf/stacs/BruyereFRR14} dealt with mean-payoff and shortest path quantitative settings. Substantial follow-up work include, e.g., multi-dimensional extensions~\cite{clemente2015multidimensional}, optimization of the expected mean-payoff under hard Boolean constraints~\cite{DBLP:conf/concur/AlmagorKV16} or under energy constraints~\cite{DBLP:conf/atva/BrazdilKN16}, or integration of beyond worst-case concepts in the tool \textsc{Uppaal}~\cite{DBLP:conf/atva/DavidJLLLST14}.

\smallskip\noindent\textbf{Parity objectives.} In this paper, we study the beyond worst-case problem for $\omega$-regular conditions encoded as \textit{parity objectives}. Parity games have been under close scrutiny for a long time both due to their importance (e.g., they subsume modal $\mu$-calculus model checking~\cite{DBLP:conf/cav/EmersonJS93}) and their intriguing complexity: they belong to the class of problems in $\sf NP \cap coNP$~\cite{DBLP:journals/ipl/Jurdzinski98} and despite many efforts (see~\cite{DBLP:journals/corr/BruyereHR16} for pointers), whether they belong to $\sf P$ is still an open question. 

In the aforementioned papers dealing with beyond worst-case problems, the focus was on \textit{quantitative} objectives (e.g., mean-payoff). While it is usually the case that \textit{qualitative} objectives, such as parity, are easier to deal with than quantitative ones, this is not true in the setting considered in this paper. Indeed, in the context of quantitative objectives, it is conceivable to alternate between two strategies along a play, such that one -- efficient -- strategy balances the performance loss due to playing the other -- less efficient -- strategy for a limited stretch of play infinitely often. In the context of qualitative objectives, this is no more possible in general, as one strategy may induce behaviors (such as invalidating the parity condition infinitely often) that can never be counteracted by the other one. Hence, in comparison, we need to define more elaborate analysis techniques to detect when satisfying \textit{both} the worst-case and the probabilistic constraints with a single strategy is actually possible.

\begin{figure}[tbh]
\centering
\scalebox{0.8}{\begin{tikzpicture}[every node/.style={font=\small,inner sep=1pt}]
\draw (0,0) node[rond,bleu] (s0) {$1,0$};
\draw (-2,1) node[carre,bleu] (s1) {$1,0$};
\draw (-2,-1) node[carre,vert] (s2) {$2,0$};
\draw (2,0) node[carre,jaune] (s3) {$2,1$};
\draw (0,-0.6) node (l0) {$a$};
\draw (-2.5,1) node (l1) {$b$};
\draw (-2.5,-1) node (l2) {$c$};
\draw (2,-0.6) node (l3) {$d$};
\draw[-latex] (s1) to (s2);
\draw[-latex] (s2) to (s0);
\draw[-latex] (s0) to[out=100,in=355] (s1);
\draw[-latex] (s1) to[out=290,in=175] (s0);
\draw[-latex] (s0) to[out=30,in=150] (s3);
\draw[-latex] (s3) to[out=210,in=330] (s0);
\end{tikzpicture}}
\caption{An MDP where player~1 can ensure $p_1$ surely and $p_2$ almost-surely.}
\label{fig:example}
\end{figure}
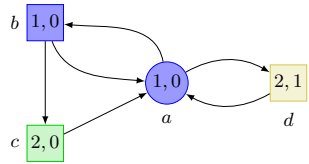

\begin{example} 
Consider the MDP of Figure~\ref{fig:example}. Circle states are owned by player~1 (system) and square states are owned by player~2 (environment). In the stochastic model of the environment, square states are probabilistic, and, when not specified, we consider the uniform distribution over their successors. Each state is labelled with a name and two integers $x, y$ representing priorities defined by two functions, $p_1$ and $p_2$.
An infinite path in the graph is winning for player~1 and parity objective $p_i$, $i \in \{1,2\}$, if the \textit{maximal} priority seen infinitely often along the path for function $p_i$ is {\em even}. We claim that player~1 has a strategy $\lambda$ to ensure that (i) all plays consistent with $\lambda$ satisfy $p_1$ (i.e., $p_1$ is surely satisfied) and (ii) the probability measure induced by $\lambda$ on this MDP ensures that $p_2$ is satisfied with probability one (i.e., almost-surely).

One such $\lambda$ is as follows. It plays an infinite sequence of rounds of $n_i$ steps, $i \in \mathbb{N}$. In round $i$, in state $a$, the strategy chooses $b$ for $n_i$ steps, such that the probability to reach~$c$ during round $i$ is larger than $1-2^{-i}$ (this is possible as at each step $c$ is reached from $b$ with probability $\frac{1}{2}$). If during round $i$, $c$ is not reached (which can happen with a small probability) then $\lambda$ goes to $d$ once. Then the next round $i+1$ is started. This infinite-memory strategy ensures both (i) and (ii). Indeed, it can be shown that the probability that $\lambda$ plays~$d$ infinitely often is zero. Also, during each round, the maximal priority for $p_1$ is guaranteed to be even because if $c$ is not visited, $d$ is systematically played.

Finally, we can prove that player~1 needs infinite memory to ensure $p_1$ surely and $p_2$ almost-surely, and also, that this is the best that player~1 can do here: he has no strategy to enforce surely both $p_1$ and $p_2$ at the same time.
\end{example}

\smallskip\noindent\textbf{Outline and contributions.} This paper extends a preceding conference version~\cite{icalp2017}. We consider MDPs with two parity objectives (i.e., using different priority functions). We study the problem of deciding the existence of a strategy that ensures the first parity objective \textit{surely} (i.e., on all plays) while yielding a \textit{probability at least equal to (resp.~greater than) a given rational threshold} to satisfy the second parity objective. In Section~\ref{sec:prelim}, we formally define the framework and recall important results from the literature. In Section~\ref{sec:reach}, as an intermediate step, we solve the problem of ensuring the first parity objective surely while visiting a target set of states with sufficient probability: this tool will help us several times later. We prove that the corresponding decision problem is in ${\sf NP} \cap {\sf coNP}$ and at least as hard as parity games, and that finite-memory strategies are sufficient. In Section~\ref{sec:asp}, we solve the problem for the two parity objectives, where the second one must hold \textit{almost-surely} (i.e., with probability one). Our main tools are the novel notion of \textit{ultra-good end-components}, as well as the reachability problem solved in Section~\ref{sec:reach}. We generalize our approach to arbitrary probability thresholds in Section~\ref{sec:threshp}, in which we introduce the notion of \textit{very-good end-components}. In both the almost-sure and the arbitrary threshold cases, we prove that the decision problem belongs to ${\sf NP} \cap {\sf coNP}$ and is at least as hard as parity games. In contrast to the reachability case, we prove that infinite memory is in general necessary.

\smallskip\noindent\textbf{Additional related work.} The beyond worst-case synthesis framework is an example of the usefulness of non-zero-sum games for reactive synthesis. More can be found in these expository papers~\cite{DBLP:conf/lata/BrenguierCHPRRS16,DBLP:journals/corr/Randour16}. Other types of multi-objective specifications in stochastic models have been considered in the literature: e.g., percentile queries generalize the classical threshold probability problem to several dimensions~\cite{percentile2017}.

In~\cite{DBLP:conf/atva/BaierGC09}, Baier et al. study the quantitative analysis of MDPs under weak and strong fairness constraints. They provide algorithms for computing the probability for $\omega$-regular properties in worst and best-case scenarios, when considering strategies that in addition satisfy weak or strong fairness constraints almost-surely. In contrast, we are able to consider similar objectives but for strategies that satisfy weak or strong fairness constraints surely, i.e., with certainty and not only with probability one.

In~\cite{DBLP:conf/concur/AlmagorKV16}, Almagor et al.~consider the optimization of the expected mean-payoff under hard Boolean constraints in weighted MDPs. Our concept of ultra-good end-component builds upon their notion of super-good end-component. A reduction to mean-payoff parity games~\cite{DBLP:conf/lics/ChatterjeeHJ05} is part of the identification process of both types of end-components.
\section{Preliminaries}
\label{sec:prelim}

\subsection{Core concepts}

\smallskip\noindent\textbf{Directed graphs.} A \emph{directed graph} is a pair $G=(S,E)$ with $S$ a set of vertices, called \textit{states}, and $E\subseteq S\times S$ a set of directed edges. We focus here on \textit{finite} graphs (i.e., $\vert S \vert < \infty$).
Given a state $s\in S$, we denote by $\mathtt{Succ}(s)=\{s'\in S \mid (s,s')\in E\}$ the set of successors of $s$ by edges in $E$. 
We assume that graphs are non-blocking, i.e., for all $s\in S$, $\mathtt{Succ}(s)\neq \emptyset$.

A \textit{play} in $G$ from an initial state $s \in S$ is an infinite sequence of states $\pi = s_0 s_1 s_2 \dotso$ such that $s_0=s$ and $(s_i,s_{i+1})\in E$ for all $i \geq 0$. The \textit{prefix} up to the $(n+1)$-th state of $\pi$ is the finite sequence $\pi(0,n)=s_0 s_1 \dotso s_n$.
We resp. denote the first and last states of a prefix $\rho = s_0 s_1 \dotso s_n$ by $\mathtt{First}(\rho)=s_0$ and $\mathtt{Last}(\rho) = s_n$.
For a play $\pi$, we naturally extend the notation to $\mathtt{First}(\pi)$.
Finally, for $i\in\mathbb{N}$, $\pi(i) = s_i$, and for $j > i$, $\pi(i,j) = s_i \dotso s_j$. 
The set of plays of $G$ is denoted by $\mathtt{Plays}(G)$. 
The corresponding set of prefixes is denoted by $\mathtt{Pref}(G)$. For a set of plays $\Pi$, we will also denote by $\mathtt{Pref}(\Pi)$ the set of all prefixes of these plays. Given two prefixes $\rho = s_0 \dotso s_m$ and $\rho' = s'_0 \dotso s'_n$ in $\mathtt{Pref}(G)$, we denote their \textit{concatenation} as $\rho \cdot \rho' = s_0 \dotso s_m s'_0 \dotso s'_n$. This concatenation is not necessarily a valid prefix of $G$. The same holds for a prefix concatenated with a play.

\smallskip\noindent\textbf{Probability distributions.} Given a countable set $A$, a (rational) \textit{probability distribution} on $A$ is a function $p\colon A \rightarrow [0, 1] \cap \mathbb{Q}$ such that $\sum_{a \in A} p(a) = 1$. We denote the set of probability distributions on $A$ by $\mathcal{D}(A)$. The \textit{support} of the probability distribution $p$ on $A$ is $\mathtt{Supp}(p) = \{a \in A \mid p(a) >0\}$.

\smallskip\noindent\textbf{Markov decision processes.} A (finite) \emph{Markov decision process} (MDP) is a tuple $\mymdp = (G, S_1, S_2, \delta)$ where (i) $G=(S,E)$ is a directed graph; (ii) $(S_{1},S_{2})$ is a partition of $S$ into states of player $1$ (denoted by $\playerOne$ and representing the system) and states of player~2 (denoted by $\playerTwo$ and representing the \textit{stochastic} environment); (iii) $\delta\colon S_2 \rightarrow \mathcal{D}(S)$ is the transition function that, given a stochastic state $s\in S_{2}$, defines the probability distribution $\delta(s)$ over the successors of $s$, such that for all $s \in S_2$, $\mathtt{Supp}(\delta(s)) = 
\mathtt{Succ}(s)$ (i.e., all outgoing edges of have non-zero probability to be taken). An MDP where for all $s \in S_1$, $\vert \mathtt{Succ}(s)\vert = 1$ is a fully-stochastic process called a \textit{Markov chain} (MC).

A prefix $\rho \in \mathtt{Pref}(\mymdp)$ belongs to $\player{i}$, $i \in \{1, 2\}$, if $\mathtt{Last}(\rho) \in S_i$. The set of prefixes that belong to $\player{i}$ is denoted by $\mathtt{Pref}_{i}(\mymdp)$.

\smallskip\noindent\textbf{Strategies.} A \textit{strategy} for $\playerOne$ is a function $\lambda\colon \mathtt{Pref}_1(\mymdp) \rightarrow \mathcal{D}(S)$, such that for all $\rho \in \mathtt{Pref}_1(\mymdp)$, we have $\mathtt{Supp}(\lambda(\rho)) \subseteq \mathtt{Succ}(\mathtt{Last}(\rho))$. The set of all strategies in $\mymdp$ is denoted by $\Lambda$. \textit{Pure} strategies have their support equal to a singleton for all prefixes. When a strategy $\lambda$ is pure, we simplify its notation and write $\lambda(\rho) = s$ instead of $\lambda(\rho)(s) = 1$, for any $\rho \in \mathtt{Pref}_1(\mymdp)$ and the unique state $s \in \mathtt{Supp}(\lambda(\rho))$. We sometimes mention that a strategy is \textit{randomized} to stress on the need for randomness in general (i.e., when pure strategies do not suffice).

A strategy $\lambda$ for $\playerOne$ can be encoded by a stochastic state machine with outputs, called \emph{stochastic Moore machine}, $\mathtt{M} = (M, m_0, \alpha_u, \alpha_n)$ where (i) $M$ is a finite or infinite set of memory elements, (ii) $m_0\in M$ is the initial memory element, (iii) $\alpha_u\colon M\times S\rightarrow M$ is the update function, and (iv) $\alpha_n\colon M \times S_1 \rightarrow \mathcal{D}(S)$ is the next-action function. If the MDP is in $s \in S_1$ and $m \in M$ is the current memory element, then the strategy chooses $s'$, the next state of the MDP, according to the probability distribution $\alpha_n(m,s)$. When the MDP leaves a state $s \in S$, the memory is updated to $\alpha_u(m, s)$. Hence updates are deterministic and outputs are potentially stochastic. Formally, $(M, m_0, \alpha_u, \alpha_n)$ defines the strategy $\lambda$ such that $\lambda(\rho\cdot s) = \alpha_n(\widehat{\alpha}_u(m_0, \rho), s)$ for all $\rho \in \mathtt{Pref}(\mymdp)$ and $s \in S_1$, where $\widehat{\alpha}_u$ extends $\alpha_u$ to sequences of states starting from $m_0$ as expected. Note that pure strategies have deterministic next-action functions. A strategy $\lambda$ is \textit{finite-memory} if $\vert M\vert < \infty$ and \textit{memoryless} if $\vert M\vert = 1$. That is, it does not depend on the history but only on the current state of the MDP: in this case, we have that $\lambda\colon S_1 \rightarrow \mathcal{D}(S)$. Finally, if the same strategy can be used regardless of the initial state, we say that a \textit{uniform} strategy exists.

A play $\pi$ is said to be \textit{consistent} with a strategy $\lambda$ if for all $n \geq 0$ such that $\pi(n) \in S_1$, we have that $\pi(n+1) \in \mathtt{Supp}(\lambda(\pi(0,n))$. This notion is defined similarly for prefixes. 
We denote by $\out^{\mymdp}(\lambda)\subseteq \mathtt{Plays}(G)$ the set of plays consistent with $\lambda$.
We use $\out^{\mymdp}_s(\lambda)$ when fixing an initial state $s$.

Given a strategy $\lambda$ in $\mymdp$ and a prefix $\rho \in \mathtt{Pref}_1(\mymdp)$, we define the \textit{initialized strategy} $\lambda[\rho]$ as follows:
$$\forall\, \rho'\in\mathtt{Pref}_1(\mymdp),\, \lambda[\rho](\rho')=\begin{cases}
	\lambda(\rho\cdot\rho')\text{ if } \rho\cdot\rho' \in \mathtt{Pref}_1(\mymdp),\\
	\lambda(\rho')\text{ otherwise.}\\
\end{cases} $$

\smallskip\noindent\textbf{Markov chain induced by a strategy.} An MDP $\mymdp = (G = (S, E), S_1, S_2, \delta)$ and a strategy $\lambda$ for $\playerOne$ encoded by the stochastic Moore machine $\mathtt{M} = (M, m_0, \alpha_u, \alpha_n)$ determine an MC $\mc = (G', \delta')$ on the state space  $S' = S \times M$ as follows. For any pair of states $s'_1 = (s_1, m_1)$ and $s'_2 = (s_2, m_2)$ in $S'$, $\delta'(s'_1)(s'_2) = \alpha_n(s_1,m_1)(s_2)$ if $m_2 = \alpha_u(s_1,m_1)$ and $0$ otherwise. Observe that given a finite MDP, a finite-memory (resp. infinite-memory) strategy induces a finite (resp. infinite) MC. We define plays and prefixes of an induced MC as before, considering only transitions with non-zero probability.

Let $\mymdp = (G, S_1, S_2, \delta)$ be an MDP, $s \in S$ an initial state and $\lambda$ a strategy of $\playerOne$ encoded by $\mathtt{M} = (M, m_0, \alpha_u, \alpha_n)$. Let $\mathcal{A} \subseteq \mathtt{Plays}(G)$. We denote by $\prob^{\lambda}_{\mymdp, s}[\mathcal{A}]$ the \textit{probability} of the plays of the induced MC $\mc$ (starting in $(s, m_0)$) whose projection (defined by removing the memory elements) to $\mymdp$ is in $\mathcal{A}$, i.e., the probability of event $\mathcal{A}$ when $\mymdp$ is executed with initial state $s$ and strategy $\lambda$. Note that every Borel set $\mathcal{A}$ has a uniquely defined probability~\cite{vardi1985automatic} (Carathéodory's extension theorem induces a unique probability measure on the Borel $\sigma$-algebra over $\mathtt{Plays}(G)$). Throughout the paper, we may also write $\prob^{\lambda}_{\mymdp, s}[\mathcal{A}]$ for $\mathcal{A}$ a set of \textit{prefixes}: this is an abuse of notation meaning that we consider the probability of the set of plays $\mathcal{A}'$ containing all consistent continuations of prefixes in $\mathcal{A}$ (i.e., we consider the probability of the cylinders defined by those prefixes).

\smallskip\noindent\textbf{Two-player games.} Observe that in MDPs, we call both $\playerOne$ and $\playerTwo$ \textit{players}, though $\playerTwo$ does not have controllable choices. Indeed, an MDP can be seen as a game where $\playerOne$ plays against a stochastic adversary $\playerTwo$ using the fixed memoryless randomized strategy $\delta$: they are sometimes called $1\frac{1}{2}$-player games. Throughout this paper, we also need to consider real two-player games, denoted $\game = (G = (S,E), S_1, S_2)$. In those games, both players control their actions using strategies: we extend all previous notions (e.g., strategies) to $\playerTwo$ and games.
In particular, any MDP $\mymdp = (G, S_1, S_2, \delta)$ can be seen as a two-player game by forgetting about the transition function $\delta$ and letting $\playerTwo$ pick the strategy of its choice. Mixing both interpretations (stochastic and antagonistic versions of $\playerTwo$) is crucial in the beyond worst-case framework that we study here.

\smallskip\noindent\textbf{Objectives.} Given an MDP $\mymdp = (G, S_1, S_1, \delta)$, an \textit{objective} is a set of plays $\mathcal{A} \subseteq \mathtt{Plays}(G)$. We consider two classical objectives from the literature. Both define measurable events. To define them, we introduce the following notation: given a play $\pi \in \mathtt{Plays}(G)$, let
\(
\infny(\pi) = \{ s \in S \mid \forall\, i \geq 0,\, \exists\, j \geq i,\, \pi(j) = s\}
\)
be the set of states seen infinitely often along $\pi$.

\begin{itemize}
\item \textit{Reachability.} Given a target $T \subseteq S$, the reachability objective asks for plays that visit $T$:
\(
\mathtt{Reach}(T) = \{\pi \in \mathtt{Plays}(G) \mid \exists\, n \geq 0,\, \pi(n) \in T\}.
\)
We later use the LTL notation $\Diamond T$ to denote the event $\mathtt{Reach}(T)$.
\item \textit{Parity.} Let $p\colon S \rightarrow \{1, 2, \dotso, d\}$ be a \textit{priority function} that maps each state to an integer priority, where $d \leq \vert S\vert +1$ (w.l.o.g.). The parity objective asks that, among the priorities seen infinitely often, the \textit{maximal} one be even: $\mathtt{Parity}(p) = \{\pi\in\mathtt{Plays}(G) \mid {\displaystyle\max_{s \in \infny(\pi)} p(s)} \text{ is even}\}$. We later simply use $p$ to denote the event $\mathtt{Parity}(p)$.
\end{itemize}

\subsection{Technical tools}

For the following definitions, let $\mymdp = (G= (S, E), S_1, S_2, \delta)$ be an MDP.

\smallskip\noindent\textbf{Attractors.} The \textit{attractor} for $\player{i}$, $i \in \{1, 2\}$, of a target set of states $T$, denoted $\mathtt{Attr}_{i}(T)$ is computed as the fixed point of the sequence 
\(
\mathtt{Attr}^{n+1}_{i}(T) = \mathtt{Attr}^{n}_{i}(T) \cup \{ s \in S_i \mid \mathtt{Succ}(s) \cap \mathtt{Attr}^{n}_{i}(T) \neq \emptyset\} \cup \{ s \in S_{3-i} \mid \mathtt{Succ}(s) \subseteq \mathtt{Attr}^{n}_{i}(T)\}
\)
with $\mathtt{Attr}^{0}_{i}(T) = T$. It contains all states from which $\player{i}$ can force a visit of $T$. Observe that this notion interprets $\playerTwo$ as an antagonistic adversary, i.e., having the choice of his strategy.

\smallskip\noindent\textbf{Traps, end-components and sub-MDPs.}
A \textit{trap} for $\player{i}$, $i \in \{1, 2\}$, is a set $R\subseteq S$ that $\player{i}$ cannot leave: $\forall\, s\in R\cap S_i,\, \mathtt{Succ}(s)\subseteq R$ and $\forall\, s\in R\cap S_{3-i}$, $\mathtt{Succ}(s)\cap R\neq \emptyset$.

An \textit{end-component} (EC) of $\mymdp$ is a trap $C$ for $\playerTwo$ that is \textit{strongly connected}, i.e., for any two states $s, s' \in C$, there exists a path from $s$ to $s'$ that stays in $C$. It is well-known that inside an EC~$C$, $\playerOne$ can force the visit of any state $s \in C$ with probability $1$ (that is, when $\playerTwo$ is seen as stochastic and obeys the strategy $\delta$), see e.g.,~\cite{baier2008principles}. The union of two ECs with non-empty intersection is an EC. An EC $C$ is thus \textit{maximal} if, for every EC $C'$, $C' \subseteq C \vee C' \cap C = \emptyset$.

Given an EC $C \subseteq S$ of $\mymdp$, we denote by $\mymdp_{\downharpoonright C}$ the \textit{sub-MDP} that is naturally defined by $\mymdp_{\downharpoonright C} = (G' = (C, E \cap C \times C), S'_1 = S_1 \cap C, S'_2 = S_2 \cap C, \delta')$, where $\delta'\colon S'_2 \rightarrow \mathcal{D}(C)$ is simply the restriction of $\delta$ to the domain $C$. Note that $\mymdp_{\downharpoonright C}$ is a well-defined MDP: it has no deadlock since $C$ is strongly connected and in all stochastic states $s$, the support of $\delta'(s)$ is included in $C$ (as $C$ was an EC in $\mymdp$).

\smallskip\noindent\textbf{Technical lemmas.} Before turning to the beyond worst-case problem, we recall some classical results about MDPs that will be useful later on.

\begin{restatable}[Optimal reachability~\cite{baier2008principles}]{lemma}{lemma_opti_reach}
\label{lemma_opti_reach}
Given an MDP $\mymdp = (G = (S, E), S_1, S_2, \delta)$ and a target set $T \subseteq S$, we can compute for each state $s \in S$ the maximal probability $v^{\ast}_{s} = \sup_{\lambda \in \Lambda} \prob^{\lambda}_{\mymdp, s}[\Diamond T]$ to reach $T$, in polynomial time. There is an optimal uniform pure memoryless strategy $\lambda^{\ast}$ that enforces $v^{\ast}_{s}$ from all $s \in S$.

Fix $s \in S$ and $c \in \mathbb{Q}$ such that $c < v^{\ast}_{s}$. Then there exists $k \in \mathbb{N}$ such that by playing $\lambda^{\ast}$ from $s$ for $k$ steps, we reach $T$ with probability larger than $c$.
\end{restatable}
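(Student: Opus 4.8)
The plan is to prove the three assertions in turn, all leaning on the standard fixpoint/linear-programming characterization of optimal reachability probabilities in finite MDPs.

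\textbf{Step 1: computing $v^{\ast}_s$ and existence of an optimal strategy.} First I would recall that the vector $(v^{\ast}_s)_{s\in S}$ is the least fixpoint of the Bellman operator: $v^{\ast}_s = 1$ for $s \in T$, $v^{\ast}_s = \max_{s' \in \mathtt{Succ}(s)} v^{\ast}_{s'}$ for $s \in S_1 \setminus T$, and $v^{\ast}_s = \sum_{s' \in \mathtt{Succ}(s)} \delta(s)(s') \cdot v^{\ast}_{s'}$ for $s \in S_2 \setminus T$. This system can be solved in polynomial time by a linear program (minimize $\sum_s v_s$ subject to $v_s \ge$ the right-hand sides, with $v_s = 1$ on $T$), which is the classical result from~\cite{baier2008principles}. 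Since this is cited as known, I would simply invoke it rather than re-derive it. The memoryless pure uniform optimal strategy $\lambda^{\ast}$ is obtained by selecting, in each $s \in S_1$, some successor attaining the maximum in the Bellman equation; that this choice is optimal simultaneously from every initial state (hence uniform) and requires no memory is again the standard statement, which I would cite.

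\textbf{Step 2: the approximation claim.} Fix $s$ with $c < v^{\ast}_s$. Running $\lambda^{\ast}$ from $s$ induces a finite Markov chain $\mc$ on state space $S$ (memoryless, so no memory component is needed). In this chain, $\prob^{\lambda^{\ast}}_{\mymdp, s}[\Diamond T] = v^{\ast}_s$. The key point is monotone convergence: let $A_k$ be the event of reaching $T$ within the first $k$ steps; then $A_k \uparrow \Diamond T$, so $\prob^{\lambda^{\ast}}_{\mymdp, s}[A_k] \to v^{\ast}_s$ as $k \to \infty$. Since $c < v^{\ast}_s$, there is some $k$ with $\prob^{\lambda^{\ast}}_{\mymdp, s}[A_k] > c$, and reaching $T$ within $k$ steps is exactly "reaching $T$ by playing $\lambda^{\ast}$ for $k$ steps." This finishes the argument.

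\textbf{Main obstacle.} There is no genuine obstacle here since all heavy lifting (polynomial-time computability, existence of uniform pure memoryless optimality) is quoted from~\cite{baier2008principles}; the only content to write carefully is the monotone-convergence step in Step 2, and even that is elementary. The one subtlety worth a sentence is that "playing $\lambda^{\ast}$ for $k$ steps" and then doing anything afterwards still reaches $T$ with probability $>c$, because reaching $T$ within $k$ steps is a prefix-closed (upward in the cylinder sense) event — so the conclusion is robust to how the strategy continues. I would state the lemma's last part in that slightly stronger, more usable form if needed downstream, but as phrased it follows immediately.
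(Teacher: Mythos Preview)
The paper does not give a proof of this lemma at all: it is stated with the citation to~\cite{baier2008principles} in its header and treated as a known result from the literature, with no accompanying proof block. Your sketch is correct (the LP/Bellman characterization for the first two claims and monotone convergence $A_k \uparrow \Diamond T$ for the last), but there is nothing in the paper to compare it against.
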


We say that $\lambda^{\ast}$ is \textit{optimal} as no strategy can achieve a probability strictly higher than $v^{\ast}_{s}$, for any $s \in S$.

\begin{restatable}[Long-run appearance of ECs~\cite{baier2008principles}]{lemma}{lemma_as_ec}
\label{lemma_as_ec}
Given an MDP $\mymdp = (G = (S, E), S_1, S_2, \delta)$ and $\mathcal{E} = \{C \subseteq S \mid C \text{ is an EC in } \mymdp\}$ the set of all its end-components, for any strategy $\lambda$ of $\playerOne$ and any state $s \in S$, the following holds:
\[
\prob^{\lambda}_{\mymdp, s}\left[ \{\pi \in \out^{\mymdp}(\lambda) \mid \infny(\pi) \in \mathcal{E}\}\right] = 1. 
\]
\end{restatable}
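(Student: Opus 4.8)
The plan is to prove the complementary statement, that $\prob^{\lambda}_{\mymdp, s}$ of the event ``$\infny(\pi)$ is not an end-component'' equals $0$, by a union bound over subsets of $S$. Since $S$ is finite and every play is infinite (graphs are non-blocking), $\infny(\pi)$ is always a nonempty subset of $S$, and for each fixed $U \subseteq S$ the event $\{\pi \mid \infny(\pi) = U\}$ is a tail event, hence Borel and measurable. Writing $\mathcal{E}$ for the set of end-components, we thus get the \emph{finite} union $\{\pi \mid \infny(\pi) \notin \mathcal{E}\} = \bigcup_{U \notin \mathcal{E}} \{\pi \mid \infny(\pi) = U\}$, so it suffices to show $\prob^{\lambda}_{\mymdp, s}[\{\pi \mid \infny(\pi) = U\}] = 0$ for every fixed non-EC $U$ and then sum.

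So fix $U \subseteq S$ that is not an end-component and set $\mathcal{A}_U = \{\pi \in \out^{\mymdp}_s(\lambda) \mid \infny(\pi) = U\}$. The structural fact I would use is that any $\pi \in \mathcal{A}_U$ admits a finite time $N$ after which it never leaves $U$ (only finitely many states lie outside $U$, each visited finitely often), while visiting every state of $U$ infinitely often thereafter. Now $U \notin \mathcal{E}$ means $U$ is either not a trap for $\playerTwo$ or not strongly connected, which I would handle in three cases. \textbf{(i)} If $U$ is not strongly connected, pick $s_1, s_2 \in U$ with no path inside $U$ from $s_1$ to $s_2$; then $\mathcal{A}_U = \emptyset$, because on $\mathcal{A}_U$ some visit to $s_1$ after time $N$ is followed by a later visit to $s_2$, whose connecting prefix is exactly such a path --- contradiction. \textbf{(ii)} If some $s \in U \cap S_1$ has $\mathtt{Succ}(s) \cap U = \emptyset$, then again $\mathcal{A}_U = \emptyset$, since after time $N$ the play stays in $U$ yet visits $s$ and is forced to leave $U$ at its next step. \textbf{(iii)} The substantive case: some $s^{\ast} \in U \cap S_2$ has a successor $s' \in \mathtt{Succ}(s^{\ast}) \setminus U$, and we set $\eta := \delta(s^{\ast})(s') > 0$. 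On $\mathcal{A}_U$ the stochastic state $s^{\ast}$ is visited infinitely often, and at each visit the next state equals $s'$ with probability exactly $\eta$ conditionally on the entire history so far --- this is dictated by $\delta$, not by $\lambda$, precisely because $s^{\ast} \in S_2$. A conditional Borel--Cantelli argument (L\'evy's extension of the second Borel--Cantelli lemma applied to the events ``$\pi(n) = s^{\ast}$ and $\pi(n+1) = s'$'' w.r.t.\ the natural filtration of the cylinder $\sigma$-algebra, or equivalently the strong Markov property at the successive visit times to $s^{\ast}$) then shows that the event ``$s^{\ast}$ visited infinitely often'' $\cap$ ``$s'$ visited finitely often'' is null. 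Since $s^{\ast} \in U$ and $s' \notin U$, this event contains $\mathcal{A}_U$, hence $\prob^{\lambda}_{\mymdp, s}[\mathcal{A}_U] = 0$. Summing over all non-EC $U$ gives the claim.

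The part I expect to be the main obstacle is that $\lambda$ is an arbitrary, possibly infinite-memory (and randomized) strategy, so the induced Markov chain $\mc$ on $S \times M$ may be infinite; one therefore cannot simply invoke the finite-Markov-chain fact that the recurrently-visited states form a bottom strongly connected component and project it down to $S$ --- when $M$ is infinite, a state of $S$ may be visited infinitely often while no individual pair $(s,m)$ is. The subset-by-subset argument above sidesteps this because it never refers to the memory structure: it uses only the finiteness of $S$ and the probabilities governing $\playerTwo$'s moves. The one genuinely probabilistic step is case \textbf{(iii)}, where the ``infinitely many sufficiently independent trials'' reasoning must be made rigorous relative to the cylinder $\sigma$-algebra on $\mathtt{Plays}(G)$, which is why the conditional form of Borel--Cantelli (or the strong Markov property) is invoked rather than a naive independence argument.
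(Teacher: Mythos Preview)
The paper does not prove this lemma; it is stated as a classical result with a citation to Baier and Katoen's textbook and is used as a black box throughout. There is therefore no ``paper's own proof'' to compare against.

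Your argument is correct and is essentially the standard proof one finds in that reference: the finite union bound over subsets $U \subseteq S$, the case analysis on how $U$ fails the end-component definition, the observation that cases \textbf{(i)} and \textbf{(ii)} make $\mathcal{A}_U$ literally empty, and the conditional (L\'evy) Borel--Cantelli step in case \textbf{(iii)} to show that infinitely many visits to a stochastic state $s^{\ast}$ almost-surely force infinitely many traversals of each outgoing edge. Your closing remark is also on point: because $\lambda$ may be infinite-memory, the induced chain $\mc$ can be infinite, so one cannot just invoke the finite-MC fact that $\infny$ is a bottom SCC and project; the argument must work directly on $S$ via the fixed transition function $\delta$, which is exactly what you do.
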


\subsection{The beyond worst-case framework}

\smallskip\noindent\textbf{Events and probabilistic operators.} Consider an MDP $\mymdp = (G = (S, E), S_1, S_2, \delta)$. Recall that we have defined two types of measurable events (specific subsets of $\mathtt{Plays}(G)$) with respective notations $\Diamond T$ for $T \subseteq S$ (reachability), %
and $p$ for $p\colon S \rightarrow \{1, \dotso, d\}$ a priority function (parity).
We define three operators to reason about the probabilities of these events: $\s$, $\p{\sim c}$, and $\as$. Given an event $\mathcal{A}$ and a state $s$, they are used as follows:
\begin{itemize}
\item $\mathcal{A}$ is \textit{sure} from $s$, denoted $s \models \s(\mathcal{A})$, if there exists a strategy $\lambda$ of $\playerOne$ such that $\out^{\mymdp}_{s}(\lambda) \subseteq \mathcal{A}$. Here probabilities are ignored and we consider $\playerTwo$ as antagonistic.
\item $\mathcal{A}$ \textit{holds with probability at least equal to (resp.~greater than)} $c \in \mathbb{Q}$ from $s$, denoted $s \models \p{\geq c}(\mathcal{A})$ (resp.~$s \models \p{> c}(\mathcal{A})$) if there exists $\lambda$ such that $\prob^{\lambda}_{\mymdp, s}[\mathcal{A}] \geq c$ (resp. $> c$).
\item $\mathcal{A}$ is \textit{almost-sure} from $s$, denoted $s \models \as(\mathcal{A})$, if there exists $\lambda$ such that $\prob^{\lambda}_{\mymdp, s}[\mathcal{A}] = 1$.
\end{itemize}
For any operator~$\mathtt{O}$, we say that such a $\lambda$ is a \textit{witness strategy} for $s \models \mathtt{O}(\mathcal{A})$ and we write $s, \lambda \models \mathtt{O}(\mathcal{A})$ to denote it. We will also consider \textit{combinations} of the type $s \models \mathtt{O}_1(\mathcal{A}_1) \wedge \mathtt{O}_2(\mathcal{A}_2)$ for two operators and events: in this case, we require that the same strategy be a witness for both conjuncts, i.e., that there exists $\lambda$ such that $s, \lambda \models \mathtt{O}_1(\mathcal{A}_1)$ and $s, \lambda \models \mathtt{O}_2(\mathcal{A}_2)$. Finally, we will sometimes use different MDPs, in which case we add the considered MDP $\mymdp$ as a subscript on $\models$, e.g., $s \models_{\mymdp} \mathtt{O}(\mathcal{A})$. We drop this subscript when the context is clear.

\smallskip\noindent\textbf{Beyond worst-case problems.} Let $\mymdp = (G = (S, E), S_1, S_2, \delta)$ be an MDP, $s \in S$ be an initial state, and $p_1$, $p_2$ be two priority functions on $\mymdp$. Throughout this paper, we provide algorithms to decide the existence of a witness strategy --- and synthesize it --- for the following formulae combining worst-case and probabilistic guarantees regarding parity objectives:
\begin{enumerate}
\item $s \models \s(p_1) \wedge \as(p_2)$ --- this is the canonical beyond worst-case (BWC) problem;
\item $s \models \s(p_1) \wedge \p{\sim c}(p_2)$ for $\sim\, \in \{>, \geq\}$ and $c \in \mathbb{Q} \cap [0, 1)$.
\end{enumerate}

\section{Reachability under parity constraints}
\label{sec:reach}

In this section, we consider the synthesis of strategies that enforce to reach a target set of states $T \subseteq S$ with high probability while enforcing a parity condition surely. Using the previously defined notations, we consider two variants of the problem, given a state $s \in S$ and a priority function $p\colon S \rightarrow \{1, \dotso d\}$:
\begin{enumerate}
\item $s \models \s(p) \wedge \as(\Diamond T)$,
\item $s \models \s(p) \wedge \p{\sim c}(\Diamond T)$ for $\sim\, \in \{>, \geq\}$ and $c \in \mathbb{Q} \cap [0, 1)$.
\end{enumerate}

\subsection{Almost-sure reachability}

\begin{theorem}
\label{thm:reach-prob-one}
Given an MDP $\mymdp = (G = (S,E), S_1, S_2, \delta)$, a state $s_{0} \in S$, a priority function $p\colon S \rightarrow \{1, \dotso, d\}$, and a target set of states $T \subseteq S$, it can be decided in ${\sf NP} \cap {\sf coNP}$ if  $s_0 \models \s(p) \wedge \as(\Diamond T)$. If the answer is $\yes$, then there exists a finite-memory witness strategy. This decision problem is at least as hard as solving parity games.
\end{theorem}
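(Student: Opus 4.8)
\textbf{Proof plan.}

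The lower bound is immediate: take $T = S$, so that $\as(\Diamond T)$ holds under every strategy; then $s_0 \models \s(p) \wedge \as(\Diamond T)$ reduces to $s_0 \models \s(p)$, i.e.\ to $\playerOne$ having a strategy winning $\mathtt{Parity}(p)$ against an adversarial $\playerTwo$, which is exactly solving a parity game. So the rest of the plan concerns membership and finite memory. First I would solve the parity game $(G,p)$ and let $W$ be $\playerOne$'s winning region; deciding $s_0 \in W$ is in $\sf NP \cap coNP$. If $s_0 \notin W$ the answer is $\no$, because any witness strategy, read against an adversarial $\playerTwo$, is parity-winning from $s_0$. Otherwise I would check that $W$ is a trap for $\playerTwo$ ($\playerTwo$'s successors from a state of $W$ stay in $W$, else that state would be losing; and $\playerOne$ keeps a successor in $W$), so that $\mymdp' := \mymdp_{\downharpoonright W}$ is a well-defined MDP and every witness keeps all consistent plays inside $W$; it therefore suffices to work in $\mymdp'$.

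Next I would analyse the end-components of $\mymdp'$. Call a maximal EC $C$ \emph{good} if $\playerOne$ wins the parity game on the sub-arena $\mymdp'_{\downharpoonright C}$ from every state of $C$; each such test is again a parity-game computation. Good ECs are precisely the places where the probability-zero ``escape'' plays of a witness are allowed to settle, since on all plays (not only almost all) $\mathtt{Parity}(p)$ must still hold. The core technical step is a \emph{reachability-under-surely-parity} sub-problem: given a sub-MDP and a target, decide from which states $\playerOne$ has a finite-memory strategy that \emph{surely} satisfies $\mathtt{Parity}(p)$ and, simultaneously, reaches the target with probability one. Crucially this does not reduce to ``the target is reachable from a good EC'': a target state can lie inside a good EC and yet reaching it with probability one forces visiting a dominating \emph{odd} priority infinitely often on the measure-zero play where the stochastic choices never cooperate (already a three-state EC with priorities $2$ — owned by $\playerOne$, with a self-loop —, $3$ — owned by $\playerTwo$ — and $1$ — the target, reachable only via the priority-$3$ state — shows this). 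This is exactly the non-mixability phenomenon the introduction flags, and handling it is what I expect to be the main obstacle.

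With this sub-problem solved, I would characterise the answer set $X \subseteq W$ by a monotone iteration: start from $X := T \cap W$ (there $\Diamond T$ is already satisfied and only surety of $\mathtt{Parity}(p)$ is needed, which holds in $W$); then repeatedly add every state from which $\playerOne$ can, in $\mymdp'$, surely satisfy $\mathtt{Parity}(p)$ while reaching the current $X$ with probability one, the probability-zero escape plays being allowed to end up in good ECs. Each step is realised by a two-player game obtained from $\mymdp'$ by making the states of $X$ and of the good ECs absorbing with an even top priority, and combining the resulting parity condition with the structural constraint of almost-sure reachability — computed via attractors and an optimal-reachability argument in the spirit of Lemma~\ref{lemma_opti_reach}, together with the end-component characterisation of Lemma~\ref{lemma_as_ec}; this game is solved with polynomially many parity-game solves. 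The iteration stabilises within $|S|$ rounds, and $s_0 \models \s(p) \wedge \as(\Diamond T)$ iff $s_0 \in X$.

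Finally, for the complexity and memory claims: membership in $\sf NP \cap coNP$ follows by guessing, in one shot, all the intermediate regions together with memoryless certificates ($\playerOne$-strategies for the positive parity sub-instances, $\playerTwo$-strategies for their complements) and verifying everything in polynomial time; reading the same certificates dually puts the complement problem in $\sf NP$ as well. A finite-memory witness is then assembled by gluing the finitely many memoryless parity-game and optimal-reachability strategies produced along the iteration, with a small ``mode'' memory recording the current round and whether $T$ has already been visited. The only delicate piece, as said above, is the reachability-under-surely-parity sub-problem: pinning down exactly where the probability-zero escape plays may settle without ever sacrificing the sure satisfaction of the parity condition.
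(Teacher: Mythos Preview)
Your lower bound and the restriction to the parity winning region $W$ are correct and match the paper. The gap is in the ``core technical step''. You explicitly name the \emph{reachability-under-surely-parity} sub-problem as the main obstacle, correctly illustrate with your three-state example why ``good ECs'' alone do not settle it, and then\ldots\ do not solve it. Your monotone iteration asks, at every round, to compute the set of states from which $\playerOne$ can ``surely satisfy $\mathtt{Parity}(p)$ while reaching the current $X$ with probability one''---but that is exactly the problem you are trying to decide, so a single round with $X = T\cap W$ would already give the answer and the iteration is circular rather than reductive. The game you sketch for one round (``make $X$ and the good ECs absorbing with an even top priority'') is not a correct encoding either: making a good EC absorbing with an even priority declares it a legitimate terminal destination for the measure-zero escape plays, yet your own example shows that inside a good EC one may be unable to almost-surely reach the target while keeping parity sure---so absorbing it throws away precisely the information you need. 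In short, the hard combinatorial content (how the measure-zero plays are allowed to wander while the parity stays \emph{surely} satisfied) is acknowledged but not discharged.

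For comparison, the paper does not build any fixed point. After restricting to $W$, it reduces directly to a known result of Almagor et al.\ that decides $s_0 \models \s(\neg(\Diamond T)\rightarrow p)\wedge \as(\Diamond T)$ (parity required only on plays that never reach $T$) via a single B\"uchi-parity game: each $\playerTwo$ state is split into a ``square'' copy (B\"uchi-accepting) and a ``circle'' copy owned by $\playerOne$; $\playerOne$ wins this conjunction of objectives iff the original MDP property holds, and such generalized-parity games are in ${\sf NP}\cap{\sf coNP}$ with finite-memory strategies. The witness for the theorem then plays the B\"uchi-parity strategy until $T$ is hit and switches to a memoryless sure-parity strategy afterwards; prefix-independence of parity makes this sound. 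That gadget---not any EC classification or iteration---is what resolves the ``non-mixability'' obstacle you flagged.
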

\begin{proof}
A slight variant of this result has been established inside the proof of~\cite[Lemma 3]{almagor2016minimizing} (extended version of~\cite{DBLP:conf/concur/AlmagorKV16}), where it is asked that the parity condition holds not necessarily on all consistent plays but only on all plays that do not reach the target set $T$. We reduce our problem to the latter as follows. First, we compute all states in $S$ from which $\playerOne$ can enforce $p$ surely. This boils down to solving a classical parity game~\cite{DBLP:journals/ipl/Jurdzinski98}. We remove all states that are not surely winning for $p$ in $\mymdp$ as entering one of those states implies that the parity condition will be violated on some consistent play. Then we apply the approach described in Theorem~\ref{thm:reach-Almagor} (Appendix~\ref{app:density}) to the resulting MDP: it is based on a reduction to a particular parity-B\"uchi game based on the MDP. If the answer is $\yes$ for $s_0$, we answer $\yes$ to our problem, otherwise we answer $\no$. We claim that this reduction is correct. First, it is clear that satisfying their constraint is a necessary condition to satisfy ours. Now, if there is a strategy $\lambda$ that ensures to reach $T$ almost-surely in the reduced MDP and enforces $p$ on all the consistent plays that never reach $T$, then we construct a strategy $\lambda'$ that plays as $\lambda$ up to reaching $T$ (if ever) and then switches to the strategy $\lambda^p$ that enforces $p$ surely. This new strategy is a witness for our problem since the parity objective is prefix-independent. That is, $s_0, \lambda' \models \s(p) \wedge \as(\Diamond T)$. Since both solving the parity game and solving the parity-B\"uchi game from Theorem~\ref{thm:reach-Almagor} are in ${\sf NP} \cap {\sf coNP}$, we conclude that our decision problem lies in ${\sf P}^{{\sf NP} \cap {\sf coNP}} = {\sf NP} \cap {\sf coNP}$~\cite{Bra79}. Furthermore, our problem clearly generalizes parity games: it suffices to fix $T = S$ to trivially ensure the second conjunct and obtain a classical parity game. Hence, the hardness follows.

Finally, since $\lambda^p$ is memoryless w.l.o.g.~and the strategy $\lambda$ obtained via Theorem~\ref{thm:reach-Almagor} is also finite-memory (due to the reduction to a parity-B\"uchi game), we have that our witness strategy $\lambda'$ is also finite-memory.
\end{proof}

\subsection{Reachability with threshold probability}

\smallskip\noindent\textbf{Optimal reachability strategies.} We first study strategies that maximize the probability of reaching a target set of states $T \subseteq S$ in an MDP $\mymdp$. Recall that Lemma~\ref{lemma_opti_reach} states the existence of an optimal uniform pure memoryless strategy $\lambda^{\ast}$ that enforces $v^{\ast}_{s}$ from all $s \in S$, where $v^{\ast}_{s}$ is the maximal probability to reach $T$ that can be achieved by $\playerOne$ in $s$. We define the set $E^{\neg \text{opt}} = \{(s, s') \in E \mid s \in S_1 \wedge v^{\ast}_{s} > v^{\ast}_{s'}\}$ that contains all edges that are \textit{non-optimal choices} for $\playerOne$ in the sense that they result in a strict decrease of the probability to reach $T$.

We show that playing, for a finite number of steps, edges that are optimal for reachability (i.e., edges in $E^{\text{opt}} = E \setminus E^{\neg \text{opt}}$), and then switching to an optimal strategy to reach $T$, like $\lambda^{\ast}$, produces an optimal strategy too.

\begin{lemma}
\label{lem:geqk}
Let $\lambda^{\ast}$ be an optimal uniform pure memoryless strategy in $\mymdp$ to reach $T$, from all states in $S$.
If $\lambda$ is a strategy that plays only edges in $E^{opt}$ for $m$ steps, for $m \in \mathbb{N}$, and then switches to $\lambda^{\ast}$, then $\lambda$ is also optimal to reach $T$ from all states in $S$.
\end{lemma}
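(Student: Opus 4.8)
The plan is to show that playing edges in $E^{\text{opt}}$ for finitely many steps does not decrease the optimal reachability value, by induction on the number of steps $m$, combined with the characterization of $E^{\text{opt}}$ edges as value-preserving moves. The base case $m=0$ is immediate: $\lambda$ is just $\lambda^{\ast}$, which is optimal by Lemma~\ref{lemma_opti_reach}. For the inductive step, the key observation I would exploit is that the optimal value function $s \mapsto v^{\ast}_s$ satisfies the Bellman optimality equations: for $s \in S_1$, $v^{\ast}_s = \max_{s' \in \mathtt{Succ}(s)} v^{\ast}_{s'}$, and for $s \in S_2$, $v^{\ast}_s = \sum_{s' \in \mathtt{Succ}(s)} \delta(s)(s') \cdot v^{\ast}_{s'}$ (with the usual convention that states in $T$ have value $1$ and states from which $T$ is unreachable have value $0$). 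An edge $(s,s') \in E$ with $s \in S_1$ lies in $E^{\text{opt}}$ precisely when $v^{\ast}_s = v^{\ast}_{s'}$ (since $v^{\ast}_s \geq v^{\ast}_{s'}$ always holds for successors, by the Bellman equation, and membership in $E^{\neg\text{opt}}$ is the strict inequality).

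First I would set up the induction properly: let $\lambda$ play edges in $E^{\text{opt}}$ for $m+1$ steps, then switch to $\lambda^{\ast}$. Decompose this as making one optimal move from the initial state $s$, then following a strategy $\lambda'$ that plays $E^{\text{opt}}$ for $m$ steps and switches to $\lambda^{\ast}$; by the induction hypothesis, $\lambda'$ is optimal from every state. Now I case on whether $s \in S_1$ or $s \in S_2$. If $s \in S_1$, then $\lambda$ moves to some $s'$ with $(s,s') \in E^{\text{opt}}$, hence $v^{\ast}_{s'} = v^{\ast}_s$; applying $\lambda'$ from $s'$ achieves $\prob^{\lambda'}_{\mymdp,s'}[\Diamond T] = v^{\ast}_{s'} = v^{\ast}_s$, so $\prob^{\lambda}_{\mymdp,s}[\Diamond T] = v^{\ast}_s$ as well (if $s \in T$ the claim is trivial). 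If $s \in S_2$, every outgoing edge is taken with the fixed probability $\delta(s)(s')$, and $\prob^{\lambda}_{\mymdp,s}[\Diamond T] = \sum_{s'} \delta(s)(s') \cdot \prob^{\lambda'}_{\mymdp,s'}[\Diamond T] = \sum_{s'} \delta(s)(s') \cdot v^{\ast}_{s'} = v^{\ast}_s$ by the Bellman equation for stochastic states. In both cases $\lambda$ achieves $v^{\ast}_s$ from $s$, which by optimality of $v^{\ast}$ is the maximum possible, so $\lambda$ is optimal from $s$; since $s$ was arbitrary, $\lambda$ is optimal from all states.

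**The main obstacle** I anticipate is handling the first transition cleanly when the initial state already belongs to $T$ or lies outside $\mathtt{Attr}$ of $T$ (value $0$): these degenerate cases must be dispatched first so the Bellman recursion is applied only where it is meaningful, but they are routine. A second, slightly more delicate point is making precise the decomposition ``play one step, then behave like $\lambda'$'': one should note that the strategy obtained by prepending one fixed move to $\lambda'$ is exactly an instance of the family in the statement with parameter $m$ shifted, so the induction hypothesis genuinely applies to the tail — formally this uses the fact that the relevant strategies here are memoryless-after-a-prefix, and that $\prob^{\lambda}_{\mymdp,s}[\Diamond T]$ decomposes over the first transition by the law of total probability / the Markov property of the induced chain. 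Once these bookkeeping matters are settled, the argument is a short induction; no quantitative estimates are needed, in contrast to the ``finite truncation'' part of Lemma~\ref{lemma_opti_reach}.
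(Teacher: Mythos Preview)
Your proposal is correct and follows essentially the same approach as the paper: induction on $m$, decomposing the reachability probability over the first transition and applying the induction hypothesis to the tail strategy. The only cosmetic difference is that you invoke the Bellman equations to get the exact equality $v^{\ast}_{s'} = v^{\ast}_s$ for $E^{\text{opt}}$-edges out of $S_1$ states and split on $S_1$ versus $S_2$, whereas the paper handles both cases in one line using only the inequality $v^{\ast}_{s'} \geq v^{\ast}_{s_0}$ that comes directly from the definition of $E^{\text{opt}}$.
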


\begin{proof}
We prove this lemma by induction. For $m=0$, we only play $\lambda^{\ast}$, thus $\lambda$ is trivially optimal. Now, assume the property holds up to $m-1$ steps, for $m > 0$, i.e., that all strategies playing in $E^{\text{opt}}$ for $m-1$ steps, and then playing $\lambda^{\ast}$, are optimal. Consider a strategy $\lambda^m$ that does this for $m$ steps now. Strategy $\lambda^m$ thus chooses an edge in $E^{opt}$, and then switches to a strategy $\lambda^{m-1}$ that satisfies the induction hypothesis. So for any state $s_0 \in S_1 \setminus T$ (for $s_0 \in T$, we have probability one whatever the strategy), we have that:
\begin{align*}
\prob^{\lambda^m}_{\mymdp, s_0}\left[\Diamond T\right] &= \sum_{s \in S} \left[ \prob^{\lambda^m}_{\mymdp, s_0}\left[\{\pi \mid \pi(1) = s\} \right] \cdot \prob^{\lambda^{m-1}}_{\mymdp, s}\left[\Diamond T\right]\right]\\
&= \sum_{s \in S} \left[ \prob^{\lambda^m}_{\mymdp, s_0}\left[\{\pi \mid \pi(1) = s\} \right] \cdot v^{\ast}_s\right] &\text{(by induction hypothesis)}\\
&\geq \sum_{s \in S} \left[ \prob^{\lambda^m}_{\mymdp, s_0}\left[\{\pi \mid \pi(1) = s\} \right] \cdot v^{\ast}_{s_0}\right] = v^{\ast}_{s_0}. &\text{(by definition of } E^{\text{opt}})
\end{align*}
Hence, we obtain that $\lambda^m$ is also optimal from $s_0$, which concludes our proof.
\end{proof}

\smallskip\noindent\textbf{Solving the problem.} We now turn to the problem $s_0 \models \s(p) \wedge \p{\sim c}(\Diamond T)$.

\begin{theorem}
\label{theorem_reach_p}
Given an MDP $\mymdp = (G = (S,E), S_1, S_2, \delta)$, a state $s_0 \in S$, a priority function $p\colon S \rightarrow \{1, \dotso, d\}$, a target set of states $T \subseteq S$, and a probability threshold $c \in [0, 1) \cap \mathbb{Q}$, it can be decided in ${\sf NP} \cap {\sf coNP}$ if $s_0 \models \s(p) \wedge \p{\sim c}(\Diamond T)$ for $\sim\, \in \{>, \geq\}$. If the answer is $\yes$, then there exists a finite-memory witness strategy. This decision problem is at least as hard as solving parity games.
\end{theorem}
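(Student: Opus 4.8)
The plan is to decide the problem by combining a classical parity game, the polynomial-time reachability analysis of Lemma~\ref{lemma_opti_reach}, and one invocation of the procedure of Theorem~\ref{thm:reach-prob-one}. Since all three lie in ${\sf NP} \cap {\sf coNP}$ and are glued together by polynomial-time reductions, membership will follow from ${\sf P}^{{\sf NP} \cap {\sf coNP}} = {\sf NP} \cap {\sf coNP}$~\cite{Bra79}. Hardness is immediate: with $T = S$ the objective $\p{\sim c}(\Diamond T)$ is trivially satisfied (as $c < 1$), so the instance reduces to a plain parity game for $p$.

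First I would remove every state from which $\playerOne$ cannot enforce $p$ surely: solving the parity game for $p$ in $\mymdp$ with $\playerTwo$ antagonistic yields the surely-winning region $W$. If $s_0 \notin W$, the answer is $\no$. Otherwise any $\s(p)$-witness stays inside $W$ even against the antagonistic $\playerTwo$, hence also under the stochastic one, so we may work in the sub-MDP $\mymdp' = \mymdp_{\downharpoonright W}$ (well defined, $W$ being a $\playerTwo$-trap) and replace $T$ by $T \cap W$; note that $\playerOne$ then has a memoryless strategy $\lambda^p$ surely winning $p$ from every state of $\mymdp'$. Applying Lemma~\ref{lemma_opti_reach} I compute in polynomial time the optimal reachability values $v^{\ast}_s$ towards $T$ in $\mymdp'$, an optimal uniform pure memoryless $\lambda^{\ast}$, the zero-value region $L = \{s \mid v^{\ast}_s = 0\}$, and the set $E^{\text{opt}}$ of reachability-optimal edges.

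The decision then splits on $v^{\ast}_{s_0}$, an upper bound on the reach-$T$ probability achievable by any strategy living in $\mymdp'$, in particular by any $\s(p)$-witness. If $v^{\ast}_{s_0} > c$, the answer is $\yes$: a finite-memory witness follows an optimal-edge strategy (e.g.\ $\lambda^{\ast}$, Lemma~\ref{lem:geqk}) for a bounded number $k$ of steps, chosen by Lemma~\ref{lemma_opti_reach} so that $T$ is already reached within $k$ steps with probability exceeding $c$, and then switches to $\lambda^p$; prefix-independence of parity and confinement to $W$ give $\s(p)$ on every play. If $v^{\ast}_{s_0} < c$, or $v^{\ast}_{s_0} = c$ with $\sim$ being $>$, the answer is $\no$. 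The remaining case $v^{\ast}_{s_0} = c$ with $\sim$ being $\geq$ is the crux: I would show that $s_0 \models \s(p) \wedge \p{\geq c}(\Diamond T)$ iff $s_0 \models_{\widehat{\mymdp}} \s(p) \wedge \as(\Diamond (T \cup L))$, where $\widehat{\mymdp}$ is $\mymdp'$ with each state of $T$ turned into an absorbing sink of even priority and then restricted to the edges of $E^{\text{opt}}$ --- a statement Theorem~\ref{thm:reach-prob-one} decides in ${\sf NP} \cap {\sf coNP}$. For ``$\Leftarrow$'', a witness $\mu$ in $\widehat{\mymdp}$ only takes edges keeping $v^{\ast}$ constant, so $(v^{\ast}_{\pi(n)})_n$ is a bounded martingale along its plays; since $v^{\ast}$ is $1$ on $T$, $0$ on $L$, and $\mu$ reaches $T \cup L$ almost surely, optional stopping shows $\mu$ reaches $T$ with probability exactly $v^{\ast}_{s_0} = c$, and reading $\mu$ back in $\mymdp'$ and switching to $\lambda^p$ once $T$ is hit gives a witness for our problem. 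For ``$\Rightarrow$'', any witness $\lambda$ for our problem is necessarily optimal for reaching $T$, hence before reaching $T$ plays only edges of $E^{\text{opt}}$, and by Lemma~\ref{lemma_as_ec} together with the fact that an optimal reachability strategy cannot keep a positive-measure set of plays trapped in an end-component of positive value without reaching $T$, almost all of $\lambda$'s plays reach $T$ or settle in $L$, so $\lambda$ induces the required witness in $\widehat{\mymdp}$. Finite memory is preserved throughout, as we only compose finitely many finite-memory strategies over a finite arena.

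The main obstacle is this last equivalence. In the strict case one can truncate the optimal reachability behaviour and divert to $\lambda^p$ at the cost of an arbitrarily small loss of probability, which cleanly decouples the two requirements; at the threshold $v^{\ast}_{s_0} = c$ this trick is unavailable --- the witness must be genuinely optimal for reaching $T$ \emph{and} satisfy $p$ surely on every play --- and establishing that this is possible precisely when $\widehat{\mymdp}$ admits an almost-sure reachability of $T \cup L$ under $\s(p)$, together with the martingale computation and the analysis of where optimal reachability strategies may get trapped, is where the real work lies.
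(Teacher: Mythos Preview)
Your overall plan matches the paper's: restrict to the $\s(p)$-winning region, compute optimal reachability values, handle $v^{\ast}_{s_0} > c$ by truncation, and for the boundary case $v^{\ast}_{s_0} = c$ with $\geq$ reduce to one call of Theorem~\ref{thm:reach-prob-one}. The reduction you build, however, is genuinely different from the paper's, and so are the correctness arguments.

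The paper constructs $\mymdp'$ by \emph{deleting} the zero-value states $S^{\text{w}}_{\Box\neg T}$ (your $L$), \emph{renormalizing} the stochastic transitions that pointed into them, adding a bit that unlocks the full edge set after $T$ is reached, and then checking $\s(p)\wedge\as(\Diamond T')$. You instead keep $L$ in the arena, make $T$ an absorbing even sink, restrict player~1 to $E^{\text{opt}}$, and check $\s(p)\wedge\as(\Diamond(T\cup L))$. For the ``$\Leftarrow$'' direction the paper argues by contradiction through a set of bad prefixes and Lemma~\ref{lem:geqk}; you use that $(v^{\ast}_{\pi(n)})_n$ is a bounded martingale under $E^{\text{opt}}$-play and recover $\mathbb{P}[\Diamond T]=v^{\ast}_{s_0}$ by optional stopping. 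For the ``$\Rightarrow$'' direction the paper shows the target is \emph{dense} in the strategy tree and then invokes the parity--B\"uchi machinery of the appendix to upgrade density to almost-sure reachability; you instead combine Lemma~\ref{lemma_as_ec} with the (super)martingale argument to conclude directly that almost every play of an optimal witness ends in $T$ or in $L$. Both routes are sound. Yours is more self-contained and sidesteps the renormalization and the density appendix; the paper's route reuses the density tool that is needed anyway elsewhere (Lemmas~\ref{lem:ugec} and~\ref{lemma_rightleft}). One point you should spell out more carefully is why an optimal witness must avoid $E^{\neg\text{opt}}$: the paper also leaves this as ``it can be shown'', but the clean justification is exactly your supermartingale, observing that equality $\mathbb{P}^{\lambda}[\Diamond T]=v^{\ast}_{s_0}$ forces the supermartingale increments to vanish, hence only optimal edges are played and $v^{\ast}_{\pi(n)}\to\mathbb{1}_{\Diamond T}$ almost surely.
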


\begin{proof}
First, we remove from $\mymdp$ the states from which $\playerOne$ cannot enforce $\s(p)$. In particular if $s_0 \not\models \s(p)$, the answer is $\no$. Remember that it can be done by computing the winning states in $\mymdp$ seen as a parity game: a problem that belongs to ${\sf NP} \cap {\sf coNP}$~\cite{DBLP:journals/ipl/Jurdzinski98}. We now reason on the MDP $\mymdp^{\text{w}}$ where those losing states have been removed: it is well-defined as the set of winning states for $\playerOne$ in $\mymdp$ is a trap for $\playerTwo$. We note $\lambda^p$ a uniform pure memoryless strategy that ensures $\s(p)$ from any state of $\mymdp^{\text{w}}$.
The proof of this theorem is easy for the case $>$, and more involved for $\geq$. We start with the easy case.

The algorithm for the case $>c$ is as follows. First, we compute the maximal probability $v_{s_0}^{\ast}$ to reach $T$, as in Lemma~\ref{lemma_opti_reach}. This takes polynomial time. If $v_{s_0}^{\ast} \leq c$, then the answer is clearly $\no$. Otherwise, we claim that the answer is $\yes$. Let $\lambda^{\ast}$ be an optimal uniform pure memoryless strategy enforcing probability $v_{s_0}^{\ast}$ from $s_0$. We construct a witness strategy $\lambda$ for $s_0 \models \s(p) \wedge \p{> c}(\Diamond T)$ from $\lambda^{\ast}$ and $\lambda^{p}$ as follows.
First, starting in $s_0$, the strategy $\lambda$ plays as $\lambda^{\ast}$ for $k$ steps where $k$ is taken as in Lemma~\ref{lemma_opti_reach}: hence the probability to reach $T$ after $k$ steps is strictly greater than $c$, which implies that $s_0, \lambda \models \p{> c}(\Diamond T)$. Then, $\lambda$ switches to $\lambda^{p}$. Since the parity objective is prefix-independent, we have that $s_0, \lambda \models \s(p)$, and we are done. Overall, our procedure lies in ${\sf P}^{{\sf NP} \cap {\sf coNP}} = {\sf NP} \cap {\sf coNP}$~\cite{Bra79}. Also observe that our strategy $\lambda$ is finite-memory since $\lambda^\ast$ and $\lambda^p$ are memoryless and $k$ is finite thanks to Lemma~\ref{lemma_opti_reach}.

We now turn to the case $\geq c$. The algorithm, as in the previous case, first computes the maximal probability $v_{s_0}^{\ast}$ to reach $T$, in polynomial time. If $v_{s_0}^{\ast} > c$, then we answer $\yes$ as we can apply the same reasoning as in the previous case. If $v_{s_0}^{\ast} < c$, then we trivially answer $\no$. The more involved case is when $v_{s_0}^{\ast} = c$. In this case, we must verify that probability $c$ is still achievable if, in addition, it is required to enforce $\s(p)$. To answer this question, we modify $\mymdp^{\text{w}}$ and we reduce our problem to the almost-sure case 
of Theorem~\ref{thm:reach-prob-one}. We present the details in the following.
Intuitively, we construct the MDP $\mymdp'$ as follows:
  \begin{itemize}
	\item we enrich states with one bit that records if the set $T$ has been visited or not;
	\item while $T$ has not been visited, we suppress all edges controlled by $\playerOne$ that are not optimal for reachability, i.e., all edges in $E^{\neg \text{opt}}$;
	\item while $T$ has not been visited, we delete all states that cannot reach $T$ (i.e., all states that do not have a path to $T$ in the underlying graph) and normalize the probability of the edges that survive this deletion: we note the set of states to be removed by $S^{\text{w}}_{\Box \neg T}$. 
  \end{itemize}

  Formally, the new MDP $\mymdp'$ is defined as follows:
  \begin{itemize}
  	\item $S'=(S^{\text{w}} \times \{0,1\}) \setminus ( S^{\text{w}}_{\Box \neg T}\times \{0\})$, $S'_1=S' \cap (S^{\text{w}}_1 \times \{0,1\})$, and $S'_2=S' \cap (S^{\text{w}}_2 \times \{0,1\})$;
	\item $s'_0$ is equal to $(s_0,0)$ if $s_0 \not\in T$, and to $(s_0,1)$ if $s_0 \in T$;
	\item $E' = \{ ((s,0),(s',0)) \mid (s,s') \in E^{\text{opt}} \land s' \not\in T\}$ $\cup$ $\{ ((s,0),(s',1)) \mid (s,s') \in E^{\text{opt}} \land s' \in T\}$ $\cup$ $\{ ((s,1),(s',1)) \mid (s,s') \in E \}$;
	\item the probability distributions are then defined as:
		\begin{itemize}
		\item for all $(s,0) \in S'_2$,  and $((s,0),(s',b)) \in E'$, $\delta'((s,0))((s',b)) = \frac{\delta(s)(s')}{X_s}$ where we define $X_s = \displaystyle\sum_{t \in S^{\text{w}} \mid ((s,0),(t,b)) \in E'} \delta(s)(t)$,
		
			\item for all $(s,1) \in S'_2$,  and $((s,1),(s',1)) \in E'$, $\delta'((s,1))((s',1))=\delta(s)(s')$;
		\end{itemize}
	\item the priority function $p'\colon S' \rightarrow \{1,\dotso,d\}$ is such that for all $(s,b) \in S'$, $p'((s,b))=p(s)$.
  \end{itemize}

Note that each prefix (resp.~play) $\rho'$ in $\mymdp'$ is naturally mapped to a unique prefix $\gamma(\rho')$ in $\mymdp^{\text{w}}$, obtained by projecting away the additional bit of information (about visiting the target~$T$). 
Conversely, each prefix (resp.~play) $\rho$ in $\mymdp^{\text{w}}$ that does not use edges in $E^{\neg {\sf opt}}$ and states in $S^{\text{w}}_{\Box \neg T}$ is naturally mapped to a unique prefix $\alpha(\rho)$ in $\mymdp'$. Those mappings are such that for all $\rho'$ in $\mymdp'$, $\rho'=\alpha(\gamma(\rho'))$.
  
Finally, we claim that $s_0 \models \s(p) \wedge \p{\geq c}(\Diamond T)$ in $\mymdp^{\text{w}}$ if, and only if, $s'_0 \models \s(p) \wedge \as(\Diamond T')$ in $\mymdp'$, where $T' = T \times \{1\}$. For the sake of readability, we postpone the correctness of this reduction to Lemma~\ref{lemma_geq_to_as}. Given that the latter problem was shown to be in ${\sf NP} \cap {\sf coNP}$ in Theorem~\ref{thm:reach-prob-one}, and that $\mymdp'$ is only polynomially larger than $\mymdp^{\text{w}}$ (and thus $\mymdp$), we obtain the claimed complexity for the problem, using ${\sf P}^{{\sf NP} \cap {\sf coNP}} = {\sf NP} \cap {\sf coNP}$~\cite{Bra79}. The proof of Lemma~\ref{lemma_geq_to_as} also implies that the witness strategy can be finite-memory.

Again, problem $s_0 \models \s(p) \wedge \p{\sim c}(\Diamond T)$ trivially generalizes parity games, by taking $T = S$, which concludes our proof.
\end{proof}

\begin{lemma}[Correctness of the reduction used for Theorem~\ref{theorem_reach_p}]
\label{lemma_geq_to_as}
The following equivalence holds:
\[
s'_0 \models_{\mymdp'} \s(p) \wedge \as(\Diamond T') \iff s_0 \models_{\mymdp^{\textsc{w}}} \s(p) \wedge \p{\geq c}(\Diamond T).
\]
\end{lemma}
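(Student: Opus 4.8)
The plan is to prove the two implications separately, transporting strategies between $\mymdp^{\text{w}}$ and $\mymdp'$ along the maps $\gamma$ (erase the extra bit) and $\alpha$ (lift, defined on histories that, before visiting $T$, use only edges of $E^{\text{opt}}$ and avoid $S^{\text{w}}_{\Box\neg T}$). The reduction is invoked only when $c=v^{\ast}_{s_0}$; the subcase $c=0$ is degenerate but trivial --- $(s_0,0)\notin S'$, yet $\p{\geq 0}(\Diamond T)$ always holds and $s_0\models_{\mymdp^{\text{w}}}\s(p)$ by construction of $\mymdp^{\text{w}}$, so both sides hold --- hence I assume $c\in(0,1)$ and $s'_0=(s_0,0)$ is a \emph{live} state (positive value to reach $T$). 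Two elementary observations recur: (i) along a bit-$0$ history each stochastic step of $\mymdp'$ carries probability $1/X_s\ge 1$ times the corresponding step of $\mymdp^{\text{w}}$, so $\mymdp'$-probabilities dominate $\mymdp^{\text{w}}$-probabilities on cylinders of bit-$0$ histories, while every consistent bit-$0$ history of either MDP projects to a positive-probability history of the other (all the factors being positive); (ii) a state of $\mymdp^{\text{w}}$ has $v^{\ast}=0$ exactly when it lies in the absorbing set $S^{\text{w}}_{\Box\neg T}$. The main preparatory fact is that any strategy $\lambda$ achieving $\prob^{\lambda}_{\mymdp^{\text{w}},s_0}[\Diamond T]=v^{\ast}_{s_0}$ stays optimal: along every positive-probability history $\rho$ not yet visiting $T$, $\lambda(\rho)$ is supported on $E^{\text{opt}}$-edges and $\lambda[\rho]$ is again optimal from $\mathtt{Last}(\rho)$. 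I would prove this via a martingale comparison: $M_n$, equal to $v^{\ast}_{\pi(n)}$ before $T$ is seen and to $1$ afterwards, is a supermartingale under every strategy with $M_0=c$ (Bellman equality at stochastic states, inequality at $\playerOne$-states, using that a $\playerOne$-state with a successor in $T$ has value $1$); the martingale $N_n$ recording $\prob^{\lambda[\pi(0,n)]}_{\mymdp^{\text{w}},\mathtt{Last}(\pi(0,n))}[\Diamond T]$ (also $1$ after $T$) has $N_0=c$ and $N_n\le M_n$; hence $\mathbb{E}[M_n]=\mathbb{E}[N_n]=c$, so $M$ is a martingale --- forcing $\lambda$ to pick value-preserving ($E^{\text{opt}}$) edges on positive-probability histories --- and $M_n=N_n$ a.s., i.e.\ $\lambda[\rho]$ is optimal from $\mathtt{Last}(\rho)$. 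A corollary used on both sides: such a $\lambda$ cannot, with positive probability, remain forever inside an end-component that avoids $T$ and contains a live state, since from a recurrent configuration inside it $\lambda$ reaches $T$ with probability $0$, contradicting optimality there.

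For the implication $\Leftarrow$, a witness $\lambda$ for $s_0\models_{\mymdp^{\text{w}}}\s(p)\wedge\p{\geq c}(\Diamond T)$ achieves probability exactly $v^{\ast}_{s_0}$, so by the preparatory fact its positive-probability bit-$0$ histories use only $E^{\text{opt}}$-edges and therefore stay live; hence $\lambda$ lifts along $\alpha$ to a strategy $\widetilde\lambda$ of $\mymdp'$ (on the bit-$1$ part, a verbatim copy of $\mymdp^{\text{w}}$, simply keep running $\lambda$). Every $\widetilde\lambda$-consistent play $\gamma$-projects to a $\lambda$-consistent play with the same priorities, and $\lambda\models\s(p)$, so $\widetilde\lambda\models\s(p')$. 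For $\as(\Diamond T')$: by Lemma~\ref{lemma_as_ec} almost every $\widetilde\lambda$-play settles in an end-component $D$ of $\mymdp'$; if $D$ lay in the bit-$0$ part and were disjoint from $T'$, then $\gamma(D)$ would consist of live states (the bit-$0$ part of $\mymdp'$ contains no state of $S^{\text{w}}_{\Box\neg T}$) and avoid $T$, and a positive-probability $\widetilde\lambda$-history reaching a recurrent configuration of $D$ would $\gamma$-project to a positive-probability $\lambda$-history from which, in $\mymdp^{\text{w}}$, $\lambda$ stays forever inside $\gamma(D)\cup S^{\text{w}}_{\Box\neg T}$ and never reaches $T$ --- contradicting the corollary above. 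Hence $D$ meets $T'$ and $\Diamond T'$ holds almost surely.

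For the implication $\Rightarrow$, from a witness $\lambda'$ I build $\lambda$ on $\mymdp^{\text{w}}$ that $\alpha$-de-lifts $\lambda'$ while the play is live and has not seen $T$, and switches to a uniform memoryless $\lambda^p$ enforcing $\s(p)$ (available everywhere in $\mymdp^{\text{w}}$) as soon as a state of $S^{\text{w}}_{\Box\neg T}$ or of $T$ is entered. Sureness of $p$ is then immediate by prefix-independence: after a switch $\lambda^p$ handles it, while a never-switching consistent play projects onto a bit-$0$-forever $\lambda'$-consistent play, for which $\lambda'\models\s(p)$. For $\p{\geq c}(\Diamond T)$: under $\lambda$ the process $M_n$ --- $v^{\ast}_{\pi(n)}$ before $T$, absorbed at $1$ on $T$ and at $0$ on $S^{\text{w}}_{\Box\neg T}$ --- is a \emph{martingale} ($\lambda$ plays only $E^{\text{opt}}$-edges before any switch, so the value is preserved at $\playerOne$-states, and it is a martingale at stochastic states by Bellman) with $M_0=c$; moreover almost every $\lambda$-play reaches $T$ or reaches $S^{\text{w}}_{\Box\neg T}$, because on the complementary event $\lambda$ never switches and its projection is a bit-$0$-forever $\lambda'$-consistent play, and by observation (i) a positive probability of that event under $\lambda$ would force a positive probability of $\neg\Diamond T'$ under $\lambda'$, impossible. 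Thus $M_n\to\mathbf{1}[\Diamond T]$ almost surely, and since $\mathbb{E}^{\lambda}[M_n]=c$ for all $n$, bounded convergence gives $\prob^{\lambda}_{\mymdp^{\text{w}},s_0}[\Diamond T]=c$. Feeding the finite-memory witness supplied by Theorem~\ref{thm:reach-prob-one} into this construction yields a finite-memory $\lambda$ (the finite-memory de-lift, followed by the memoryless $\lambda^p$), which is what the proof of Theorem~\ref{theorem_reach_p} needs.

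The step I expect to be the crux is the ``no trapping in a live, $T$-avoiding end-component'' phenomenon used on both sides: on the $\Leftarrow$ side it rests on the preparatory optimality-propagation lemma (hence on the martingale identity $M_n=N_n$), and on the $\Rightarrow$ side on the probability-domination bookkeeping together with $\lambda'$ being almost-surely winning. It is the interplay between these facts and the characterisation $\{v^{\ast}=0\}=S^{\text{w}}_{\Box\neg T}$ that makes the probability $c$ transport \emph{exactly} between the two problems, and this is where I expect the delicate measure-theoretic bookkeeping (localising positive-probability events at recurrent configurations, passing domination from cylinders to tail events) to concentrate.
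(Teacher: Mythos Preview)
Your approach is essentially correct but genuinely different from the paper's, and in some ways stronger. The paper proceeds combinatorially: for $\Rightarrow$ it argues by contradiction via a finite set of ``bad prefixes'' and invokes Lemma~\ref{lem:geqk} (playing $E^{\text{opt}}$-edges for finitely many steps then $\lambda^{\ast}$ stays optimal) to derive $\prob[\Diamond T]<c$ for an optimal strategy; for $\Leftarrow$ it only establishes that $T'$ is \emph{dense} in the lifted strategy and then appeals to the B\"uchi--parity machinery of Appendix~\ref{app:density} to manufacture a \emph{new} witness achieving $\as(\Diamond T')$. You instead use the value process $M_n=v^{\ast}_{\pi(n)}$ as a (super)martingale: the identity $\mathbb{E}[M_n]=\mathbb{E}[N_n]=c$ gives both the $E^{\text{opt}}$-restriction and the propagation of optimality to $\lambda[\rho]$ in one stroke (replacing the paper's ad hoc justification of this fact), and your $\Rightarrow$ direction concludes $\prob^{\lambda}[\Diamond T]=c$ directly by bounded convergence rather than by the bad-prefix contradiction. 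Most notably, for $\Leftarrow$ you argue that the lifted strategy $\widetilde\lambda$ \emph{itself} achieves $\as(\Diamond T')$, which is stronger than the paper's density statement and dispenses with Appendix~\ref{app:density} entirely.

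The one place that needs tightening is precisely where you flag it: the ``recurrent configuration'' step in the $\Leftarrow$ direction. Since $\lambda$ may have infinite memory, the induced Markov chain need not have recurrent states, so you cannot in general find a prefix $\rho'$ from which $\widetilde\lambda$ is \emph{surely} confined to $D$; hence you cannot conclude that $\lambda$ surely stays in $\gamma(D)\cup S^{\text{w}}_{\Box\neg T}$ from $\gamma(\rho')$. The clean fix uses your own ingredients: for \emph{every} positive bit-$0$ prefix $\rho'$, the domination observation (i) gives $\prob^{\lambda}_{\mymdp^{\text{w}}}[\Diamond T\mid\gamma(\rho')]\le\prob^{\widetilde\lambda}_{\mymdp'}[\Diamond T'\mid\rho']$, while by your optimality-propagation fact the left side equals $v^{\ast}_{\gamma(\mathtt{Last}(\rho'))}\ge\min\{v^{\ast}_s:s\text{ live}\}>0$. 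Thus the conditional probability of $\Diamond T'$ is uniformly bounded below, and L\'evy's 0--1 law forces $\prob^{\widetilde\lambda}[\neg\Diamond T']=0$. This replaces the end-component detour and makes the argument go through for arbitrary $\lambda$.
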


\begin{proof}
We start with the \textit{left-to-right} implication. Assume that there exists a witness strategy $\lambda'$ for $\playerOne$ in $\mymdp'$ such that $s'_0,\lambda' \models_{\mymdp'} \s(p) \wedge \as(\Diamond T')$. By Theorem~\ref{thm:reach-prob-one}, we may assume $\lambda'$ to be finite-memory. From $\lambda'$, we build a strategy $\lambda$ in $\mymdp^{\text{w}}$ as follows:
  \begin{itemize}
	\item for all $\rho \in \mathtt{Pref}(G)$ that does not visit $T$, only uses edges in $E^{\text{opt}}$ and does not contain states in $S^{\text{w}}_{\Box \neg T}$, $\lambda(\rho) = \gamma(\lambda'(\alpha(\rho)))$, i.e., $\lambda$ mimics $\lambda'$,
	\item for all other $\rho  \in \mathtt{Pref}(G)$, the strategy plays as $\lambda^p$, the memoryless strategy ensuring $\s(p)$ in $\mymdp^{\text{w}}$.
  \end{itemize}
Observe that $\lambda$ is a finite-memory strategy: the conditions to check can be encoded using a finite number of memory elements in addition to the finite Moore machine representing $\lambda'$. We now claim that $s_0, \lambda \models_{\mymdp^{\textsc{w}}} \s(p) \wedge \p{\geq c}(\Diamond T)$.

First, consider the conjunct $\s(p)$. Along any play $\pi \in \mathtt{Plays}(G)$ starting in $s_0$, either (i) $\lambda$ always plays as $\lambda'$, or (ii) it eventually switches to $\lambda^p$. In case (i), $\pi$ has a unique corresponding play in $\mymdp'$, $\pi' = \alpha(\pi)$, and $\pi'$ is consistent with $\lambda'$. Since $\lambda'$ ensures $\s(p)$ by hypothesis and the priority function is the same in $\mymdp'$ and $\mymdp^{\text{w}}$, $\pi$ also satisfies the parity objective. In case~(ii), $\pi$ also satisfies the parity objective by definition of $\lambda^p$ and prefix-independence of parity. Hence, we have that $s_0, \lambda \models_{\mymdp^{\textsc{w}}} \s(p)$.

It remains to prove that $s_0, \lambda \models_{\mymdp^{\textsc{w}}} \p{\geq c}(\Diamond T)$. We prove it by contradiction. Assume that $\prob^{\lambda}_{\mymdp^{\text{w}}, s_0}\left[\Diamond T\right] < c$, or equivalently, that $\prob^{\lambda}_{\mymdp^{\text{w}}, s_0}\left[\square \neg T\right] > 1 - c$, using the classical \textit{always} LTL operator. This means that, because of the strict inequality $> 1 - c$, there exists $m$ in $\mathbb{N}$ such that the set
\begin{align*}
\mathtt{BadPref} = \Big\lbrace \rho \in \mathtt{Pref}(G) &\mid \rho \text{ consistent with } \lambda \wedge \vert \rho \vert = m + 1\\
&\wedge \forall\, \text{ continuation } \pi \text{ of } \rho \text{ consistent with } \lambda,\, \pi \not\models \Diamond T\Big\rbrace
\end{align*}
satisfies $ \sum_{\rho \in \mathtt{BadPref}} \prob^{\lambda}_{\mymdp^{\text{w}}, s_0}\left[\rho\right] > 1 - c$. Note that this sum is indeed a lower bound on the probability of $\square \neg T$ as all prefixes of $\mathtt{BadPref}$ define disjoint cylinder sets.

Now, we claim that for all $\rho \in \mathtt{BadPref}$, we have that $\mathtt{Last}(\rho) \in S^{\text{w}}_{\Box \neg T}$. To prove it, recall the definition of $\lambda$. If it were not true, then $\lambda$ would have mimicked $\lambda'$ all along $\rho$ (i.e., not switching to $\lambda^p$). But since all continuations of $\rho$ consistent with $\lambda$ in $\mymdp^{\text{w}}$ never reach $T$, this would also hold for continuations of $\alpha(\rho)$ consistent with $\lambda'$ in $\mymdp'$. Hence prefix $\alpha(\rho)$ would prove that $\prob^{\lambda'}_{\mymdp', s'_0}\left[\Diamond T'\right] < 1$, which contradicts the hypothesis stating that $s'_0, \lambda' \models \as(\Diamond T')$. Hence, we can now confirm that for all $\rho \in \mathtt{BadPref}$, we have that $\mathtt{Last}(\rho) \in S^{\text{w}}_{\Box \neg T}$.

To pursue, we define a new strategy on $\mymdp^{\text{w}}$, $\lambda^r$: it plays as $\lambda$, which only using edges in $E^{\text{opt}}$, for $m$ steps ($m$ as defined above), then switches to $\lambda^\ast$, the optimal strategy to reach $T$ in $\mymdp^{\text{w}}$, which ensures probability $v_{s_0}^{\ast} = c$ of reaching $T$ from $s_0$ (see Proof of Theorem~\ref{theorem_reach_p}). By Lemma~\ref{lem:geqk}, we have that $\lambda^r$ is still optimal, hence that $\prob^{\lambda^r}_{\mymdp^{\text{w}}, s_0}\left[\Diamond T\right] = c$. But by the previous arguments, we also have that
\begin{align*}
\prob^{\lambda^r}_{\mymdp^{\text{w}}, s_0}\left[\square \neg T\right] \geq \sum_{\rho \in \mathtt{BadPref}} \prob^{\lambda}_{\mymdp^{\text{w}}, s_0}\left[\rho\right] > 1 - c,
\end{align*}
hence that $\prob^{\lambda^r}_{\mymdp^{\text{w}}, s_0}\left[\Diamond T\right] < c$. We have obtained the contradiction, hence proving that $s_0, \lambda \models_{\mymdp^{\textsc{w}}} \p{\geq c}(\Diamond T)$ and concluding the proof of the left-to-right implication.

We now prove the \textit{right-to-left} implication.  Assume that there exists a witness strategy $\lambda$ for $\playerOne$ in $\mymdp^{\text{w}}$ such that $s_0,\lambda \models_{\mymdp^{\textsc{w}}} \s(p) \wedge  \p{\geq c}(\Diamond T)$. Recall that the maximal achievable probability to reach $T$ in $\mymdp^{\textsc{w}}$ is $v_{s_0}^{\ast} = c$. Hence, it can be shown that $\lambda$ must restrict its choices to $E^{\text{opt}}$ before reaching $T$, otherwise it would not be possible to ensure $ \p{\geq c}(\Diamond T)$. Similarly, we can assume w.l.o.g. that $\lambda$ switches to $\lambda^p$, the strategy ensuring $\s(p)$, after $T$ or $S^{\text{w}}_{\Box \neg T}$ is reached: if not the case, we can build a strategy with those properties based on $\lambda$ that is still a witness for the considered property. We first study $\lambda'$, which is simply $\lambda$ played on $\mymdp'$, and show that it is a witness for $s'_0 \models_{\mymdp'} \s(p)$.

To prove that $s'_0, \lambda' \models_{\mymdp'} \s(p)$, observe that on any play $\pi'$ consistent with $\lambda'$, we either (i) reach $T'$ and switch to $\lambda^p$ (more precisely, its translation over $\mymdp'$), (ii) do not reach $T'$ but neither reach $S^{\text{w}}_{\Box \neg T}$ (by construction), hence keep playing like $\lambda$. In case (i), and since all edges from the initial MDP $\mymdp^{\text{w}}$ are present after reaching $T'$, $\lambda^p$ can be implemented faithfully and any consistent play $\pi'$ will satisfy the parity objective thanks to prefix-independence. In case (ii), any consistent play $\pi'$ of $\lambda'$ has a unique corresponding play $\pi = \gamma(\pi')$ in $\mymdp^{\text{w}}$ that is consistent with $\lambda$, hence satisfies the parity objective since $\lambda$ ensures $\s(p)$. We thus conclude that $\lambda'$ is a witness for $s'_0 \models_{\mymdp'} \s(p)$.

We now show that $T'$ is dense in $\lambda'$, i.e., that any prefix $\rho'$ consistent with $\lambda'$ has a continuation $\pi'$ such that $\pi' \models \Diamond T'$. In other words, seeing strategy $\lambda'$ as an execution tree, there is no subtree such that all branches in this subtree never visit~$T'$. Formally, this is saying that the following set must be empty:
\begin{align*}
\mathtt{BadPref}' = \Big\lbrace \rho' \in \mathtt{Pref}(G') &\mid \rho' \text{ consistent with } \lambda'\\
&\wedge \forall\, \text{ continuation } \pi' \text{ of } \rho' \text{ consistent with } \lambda',\, \pi' \not\models \Diamond T'\Big\rbrace.
\end{align*}
We prove its emptiness by contradiction. Assume there exists a prefix $\rho'$ in $\mathtt{BadPref}'$. Since it is a prefix in $\mymdp'$, it implies by construction that $\mathtt{Last}(\gamma(\rho')) \not\in S^{\text{w}}_{\Box \neg T}$ in $\mymdp^{\text{w}}$, hence that some paths reach $T'$ from $\mathtt{Last}(\gamma(\rho'))$ in the underlying graph. Hence, we can define a new strategy $\lambda^r$ in $\mymdp^{\text{w}}$ that plays exactly as $\lambda$ except after prefix $\gamma(\rho')$, where it switches to the optimal reachability strategy $\lambda^{\ast}$: the probability to reach $T$ after $\gamma(\rho')$ under $\lambda^r$ is thus strictly positive, say equal to $q > 0$. We know have that
\[
\prob^{\lambda^r}_{\mymdp^{\text{w}}, s_0}\left[\Diamond T\right] = c + \prob^{\lambda^r}_{\mymdp^{\text{w}}, s_0}\left[\gamma(\rho')\right] \cdot q > c
\]
which is in contradiction with $c = v_{s_0}^{\ast}$ being the maximal achievable probability. Hence, we deduce that $\mathtt{BadPref}'$ is indeed empty and that $T'$ is dense for $\lambda'$. Now, using the reduction to a parity-B\"uchi game which underlies Theorem~\ref{thm:reach-prob-one}, we deduce from the density argument presented in Appendix~\ref{app:density} that we can build $\lambda''$ from $\lambda'$ such that $s'_0, \lambda'' \models_{\mymdp'} \s(p) \wedge \as(\Diamond T')$. This concludes the proof of the right-to-left implication and of this lemma.
\end{proof}

\section{Almost-sure parity under parity constraints}
\label{sec:asp}

\smallskip\noindent\textbf{Overview and key lemma.} Here, we study strategies that enforce a parity objective~$p_1$ surely (i.e., on all consistent plays) and a parity objective $p_2$ almost-surely (i.e., with probability one). Formally, we look at the problem $s \models \s(p_1) \wedge \as(p_2)$. The cornerstone of our approach is the notion of \textit{ultra-good end-component}. Throughout this section, we consider an MDP $\mymdp = (G = (S, E), S_1, S_1, \delta)$ with two priority functions $p_1$ and~$p_2$.

\begin{definition}
\label{def:ugec}
An end-component $C$ of $\mymdp$ is \emph{ultra-good} ($\ugec$) if in the sub-MDP $\mymdp_{\downharpoonright C}$, the following two properties hold:
\begin{itemize}
	\item $\mathbf{(1_U)}$ $\forall\, s\in C,\, s\models_{\mymdp_{\downharpoonright C}} \s(p_1) \wedge \as(\Diamond C^{\max}_{\even}(p_1))$, where
	\[
	C^{\max}_{\even}(p_i) = \big\lbrace s\in C \mid (p_i(s) \text{ is even}) \wedge (\forall\, s'\in C,\, p_i(s') \text{ is odd } \implies p_i(s') < p_i(s))\big\rbrace
	\]
	contains the states with even priorities that are larger than any odd priority in $C$ (this set can be empty for arbitrary ECs but needs to be non-empty for UGECs);
	\item $\mathbf{(2_U)}$ $\forall\, s\in C,\, s\models_{\mymdp_{\downharpoonright C}} \as(p_1) \wedge \as (p_2)$, or equivalently, $s\models_{\mymdp_{\downharpoonright C}} \as(p_1 \cap p_2)$.
\end{itemize}
We introduce the following notations:
  \begin{itemize}
  	\item $\ugec(\mymdp)$ is the set of all $\ugec$s of $\mymdp$, 
	\item $\mathcal{U} = {\displaystyle \cup_{U \in \ugec(\mymdp)} U}$ is the set of states that belong to a $\ugec$ in $\mymdp$.
  \end{itemize}
\end{definition}

Intuitively, within a $\ugec$, $\playerOne$ has a strategy to almost-surely visit $C^{\max}_{\even}(p_1)$ while guaranteeing $\s(p_1)$, and he also has a (generally different) strategy that almost-surely ensures both parity objectives. Figure~\ref{fig:full} gives an example of UGEC: here, the strategy ensuring $\mathbf{(1_U)}$ is to go to $d$, and the strategy ensuring $\mathbf{(2_U)}$ is to go to $b$.

\begin{figure}[tbh]
\centering
\scalebox{0.8}{\begin{tikzpicture}[every node/.style={font=\small,inner sep=1pt}]
\draw (0,0) node[rond,bleu] (s2) {$1,0$};
\draw (-2,0) node[carre,bleu] (s1) {$1,0$};
\draw (-4,0) node[carre,vert] (s0) {$2,0$};
\draw (2,0) node[carre,jaune] (s3) {$4,1$};
\draw (4,0) node[carre,rouge] (s4) {$3,1$};
\draw (-4,0.6) node (l0) {$a$};
\draw (-2,0.6) node (l1) {$b$};
\draw (-0,0.6) node (l2) {$c$};
\draw (2,0.6) node (l3) {$d$};
\draw (4,0.6) node (l4) {$e$};
\draw[-latex] (s1) to (s0);
\draw[-latex] (s2) to (s1);
\draw[-latex] (s3) to (s4);
\draw[-latex] (s2) to (s3);
\draw[-latex] (s0) to[out=330,in=210] (s2);
\draw[-latex] (s4) to[out=210,in=330] (s2);
\draw[-latex] (s1) to[out=45,in=135] (s2);
\end{tikzpicture}}
\caption{This MDP is a UGEC: going to $d$ satisfies $\mathbf{(1_U)}$, whereas going to $b$ satisfies $\mathbf{(2_U)}$.}
\label{fig:full}
\end{figure}
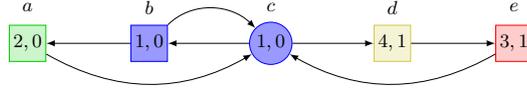

The notion of $\ugec$ strengthens the concept of \textit{super-good} EC that was introduced in~\cite{DBLP:conf/concur/AlmagorKV16}: essentially, the super-good ECs are exactly the ECs satisfying $\mathbf{(1_U)}$. Thus, every $\ugec$ is a super-good EC, but the converse is false.

The central lemma underpinning our approach is the following.
\begin{lemma}
\label{lemma_asp_to_asr}
The following equivalence holds:
\[
s_0 \models \s(p_1) \wedge \as(\Diamond \mathcal{U}) \iff s_0 \models \s(p_1) \wedge \as(p_2).
\]
\end{lemma}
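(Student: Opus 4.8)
The plan is to prove the two implications separately, with the right-to-left direction being essentially immediate and the left-to-right direction carrying the real content.

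For the \emph{right-to-left} implication, assume $s_0 \models \s(p_1) \wedge \as(p_2)$ via a witness strategy $\lambda$. By Lemma~\ref{lemma_as_ec}, almost every play consistent with $\lambda$ has its set of states seen infinitely often equal to some end-component $C$. Conditioning on this event, I would argue that almost every such $C$ is in fact ultra-good. The parity objective $p_1$ is satisfied \emph{surely} and $p_2$ \emph{almost-surely} by $\lambda$; restricting to plays that end up cycling in $C$, one reads off that within $C$ both parity conditions are satisfiable almost-surely (property $\mathbf{(2_U)}$), and that the sure-satisfaction of $p_1$ forces the maximal priority for $p_1$ in $C$ to be even, which makes $C^{\max}_{\even}(p_1)$ non-empty and, since $\playerOne$ can reach any state of an EC almost-surely, gives property $\mathbf{(1_U)}$. (Strictly speaking the strategy $\lambda$ restricted to the sub-MDP $\mymdp_{\downharpoonright C}$ may not stay inside $C$; but one can recombine a strategy that stays inside $C$ and realizes $\mathbf{(1_U)}$ and $\mathbf{(2_U)}$ from the behaviour of $\lambda$ along plays converging to $C$ together with the almost-sure reachability of EC states — this recombination is the one fiddly point.) Hence $C \subseteq \mathcal{U}$ for almost every limit EC $C$, so $\prob^{\lambda}_{\mymdp, s_0}[\Diamond \mathcal{U}] = 1$, and since $\lambda$ already ensures $\s(p_1)$, the same $\lambda$ witnesses $s_0 \models \s(p_1) \wedge \as(\Diamond \mathcal{U})$.

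For the \emph{left-to-right} implication, assume $s_0 \models \s(p_1) \wedge \as(\Diamond \mathcal{U})$ via a witness $\lambda$. The strategy I want to build plays as $\lambda$ until it reaches a state of some $\ugec$ $U$, and then must behave inside $U$ so as to (i) keep $p_1$ satisfied \emph{on all plays}, including plays that leave $U$ (which can happen, since $\mymdp_{\downharpoonright C}$-properties only constrain behaviour inside $C$ — but $U$ is a trap for $\playerTwo$, so actually no play can leave $U$ once the strategy commits to staying inside), and (ii) satisfy $p_2$ almost-surely. The delicate issue, which the introduction flags as the new difficulty for qualitative objectives, is that the strategy realizing $\mathbf{(1_U)}$ and the strategy realizing $\mathbf{(2_U)}$ are different and cannot be freely interleaved. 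My approach is a round-based scheme in the spirit of the Example in the introduction: inside $U$, alternate infinitely between (a) long finite stretches of the $\mathbf{(2_U)}$-strategy $\lambda^{2_U}$, chosen long enough that in round $i$ the probability of having visited $C^{\max}_{\even}(p_2)$ exceeds $1 - 2^{-i}$ (possible by Lemma~\ref{lemma_opti_reach} since $\lambda^{2_U}$ ensures almost-sure reachability of $C^{\max}_{\even}(p_2)$-states — or more directly since $\mathbf{(2_U)}$ gives almost-sure $p_2$), and (b) a controlled ``repair'' phase using the $\mathbf{(1_U)}$-strategy $\lambda^{1_U}$, which is run until a state of $C^{\max}_{\even}(p_1)$ is reached (almost-surely finite), thereby guaranteeing that along \emph{every} play the maximal $p_1$-priority of the current round is even. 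Borel--Cantelli then gives that almost-surely only finitely many repair phases are ``long'', so almost every play spends cofinitely much time in the $\mathbf{(2_U)}$ regime and sees a maximal $p_2$-priority that is even; simultaneously, by $\mathbf{(1_U)}$ and the fact that each repair phase terminates in $C^{\max}_{\even}(p_1)$, \emph{every} play has even maximal $p_1$-priority. Composing with the prefix $\lambda$-phase (and using prefix-independence of parity, plus the fact that $\lambda$ surely ensures $p_1$ before $\mathcal{U}$ is hit), we obtain $s_0 \models \s(p_1) \wedge \as(p_2)$.

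\textbf{Main obstacle.} The hard part is the left-to-right direction: specifically, designing the in-$\ugec$ strategy so that the sure-satisfaction of $p_1$ survives the infinitely many switches between $\lambda^{1_U}$ and $\lambda^{2_U}$. The key structural facts that make it work are that each repair phase is guaranteed to \emph{terminate} at a state of $C^{\max}_{\even}(p_1)$ on every branch (so that no play is left with an odd $p_1$-priority dominating its round) and that $U$, being a trap for $\playerTwo$, cannot be exited — so the worst-case reasoning for $p_1$ stays confined to $\mymdp_{\downharpoonright U}$ where $\mathbf{(1_U)}$ applies. Care is also needed at the handover point where $\lambda$ first reaches $\mathcal{U}$: the prefix produced by $\lambda$ must itself be extendable into a play satisfying $p_1$, which is exactly what $s_0 \models \s(p_1)$ (with the witness being $\lambda$ here) provides, but one should check the bookkeeping so that the concatenated strategy is still a single well-defined (infinite-memory) strategy witnessing both conjuncts.
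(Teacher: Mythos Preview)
Your right-to-left direction has a real gap at precisely the point you flag as ``fiddly''. Knowing that the maximal $p_1$-priority in $C=\infny(\pi)$ is even gives you $C^{\max}_{\even}(p_1)\neq\emptyset$ and hence the $\as(\Diamond C^{\max}_{\even}(p_1))$ half of $\mathbf{(1_U)}$, but it does \emph{not} give the $\s(p_1)$ half in the sub-MDP $\mymdp_{\downharpoonright C}$. The witness $\lambda$ may rely on \emph{leaving} $C$ on adversarial branches to rescue $p_1$; once confined to $C$, $\playerTwo$ as an adversary may force an odd maximal $p_1$-priority. Concretely, in the UGEC of Figure~\ref{fig:full} the witness strategy of Definition~\ref{def:strat_ugec} has $\infny(\pi)=\{a,b,c\}$ almost-surely, yet in $\mymdp_{\downharpoonright\{a,b,c\}}$ player~2 can force $(cb)^\omega$, so $\{a,b,c\}$ is \emph{not} a UGEC. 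The paper's route is quite different: it studies the collection $\mathcal{S}$ of sets of the form $\mathtt{States}(\out_s^{\mymdp}(\lambda))$ for witnesses $(s,\lambda)$, and proves that the $\subseteq$-\emph{minimal} elements of $\mathcal{S}$ are UGECs --- minimality together with Lemma~\ref{lem:suf-bwc} is exactly what forces the witness to stay inside the set and hence transfers $\s(p_1)$ to the sub-MDP. A separate density argument (via a B\"uchi-parity game, Appendix~\ref{app:density}) is then needed to construct a \emph{new} witness for $\s(p_1)\wedge\as(\Diamond\mathcal{U})$; the original $\lambda$ is not shown to reach $\mathcal{U}$ almost-surely.

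Your left-to-right scheme is close in spirit to the paper's but not correct as written. If you \emph{always} execute the repair phase (b), then $\lambda^{1_U}$ is played infinitely often and may, in every round, visit states of $C$ with large odd $p_2$-priority, spoiling $\as(p_2)$. Targeting $C^{\max}_{\even}(p_2)$ in phase (a) does not help: for a UGEC $C$ this set can be empty (it is empty in Figure~\ref{fig:full}); only the sub-EC $D$ of Lemma~\ref{lemma_strat2} is guaranteed to have $D^{\max}_{\even}(p_1),D^{\max}_{\even}(p_2)\neq\emptyset$. The paper's fix is to make the repair \emph{conditional}: phase (a) plays $\lambda_{2,C}$ for $n_i$ steps and phase (b) is entered only if $D^{\max}_{\even}(p_1)$ was \emph{not} visited during phase (a). Since this miss has probability below $2^{-i}$, almost-surely only finitely many repairs ever occur, the play eventually runs $\lambda_{2,C}$ forever and settles in $D$, which satisfies both parities. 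Your ``Borel--Cantelli'' sentence seems to be reaching for this, but the scheme you describe does not implement it.
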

Essentially, this lemma permits to reduce the problem under study to the one treated in Theorem~\ref{thm:reach-prob-one}, provided that we are able to compute $\mathcal{U}$, the set of states appearing in a UGEC. The rest of this section is dedicated to the proof of this lemma and its consequences: in Sect.~\ref{subsec:sufficient}, we prove the left-to-right implication, in Sect.~\ref{subsec:necessary}, we prove its converse, and finally, in Sect.~\ref{subsec:algo}, we establish an algorithm to solve problem $s_0 \models \s(p_1) \wedge \as(p_2)$ based on this reduction and we study its complexity.

\subsection{Left-to-right implication (sufficient condition)}
\label{subsec:sufficient}

\smallskip\noindent\textbf{Available strategies in UGECs}. We first focus on witness strategies for conditions $\mathbf{(1_U)}$ and $\mathbf{(2_U)}$ of Definition~\ref{def:ugec}. 
For $\mathbf{(1_U)}$, it was shown in the proof of~\cite[Lemma 3]{almagor2016minimizing} (extended version of~\cite{DBLP:conf/concur/AlmagorKV16}) that deciding if the condition holds is in ${\sf NP} \cap {\sf coNP}$ and that uniform finite-memory witness strategies exist. For $\mathbf{(2_U)}$, we establish the following lemma.

\begin{lemma}
\label{lemma_strat2}
Let $C$ be an EC of $\mymdp$. The following assertions hold.
\begin{enumerate}
\item It can be decided in polynomial time if condition $\mathbf{(2_U)}$ holds.
\item If the answer is $\yes$, then there exists a (uniform randomized) memoryless witness strategy $\lambda_{2, C}$ and a sub-EC $D \subseteq C$ such that $D^{\max}_{\even}(p_1) \neq \emptyset$, $D^{\max}_{\even}(p_2) \neq \emptyset$, and for all $s \in C$,
\[
\prob^{\lambda_{2, C}}_{\mymdp_{\downharpoonright C}, s}\left[ \{\pi \in \out^{\mymdp_{\downharpoonright C}}(\lambda_{2, C}) \mid \infny(\pi) = D\}\right] = 1.
\]
\item Furthermore, $\lambda_{2, C}$ satisfies the following property: $\forall\, s \in C$, $\forall\, \varepsilon > 0$, $\exists\, n \in \mathbb{N}$ such that
\(
\prob^{\lambda_{2, C}}_{\mymdp_{\downharpoonright C}, s}\left[\left\lbrace \pi \in \out^{\mymdp_{\downharpoonright C}}(\lambda_{2, C}) \mid \exists\, i,\, 0 \leq i \leq  n,\, \pi(i) \in D^{\max}_{\even}(p_1)\right\rbrace\right] \geq 1 - \varepsilon. 
\)
\end{enumerate}
\end{lemma}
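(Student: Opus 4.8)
The plan is to reduce condition $\mathbf{(2_U)}$ to a purely structural property of the end-components of $\mymdp_{\downharpoonright C}$, and then to exhibit a single memoryless strategy that realizes items~2 and~3 simultaneously.

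\smallskip\noindent\textbf{Item 1.} First I would note that the ECs of $\mymdp_{\downharpoonright C}$ are exactly the sub-ECs of $C$ (a trap for $\playerTwo$ inside $\mymdp_{\downharpoonright C}$ is also one inside $\mymdp$, as $C$ itself is such a trap). Call a sub-EC $D \subseteq C$ \emph{good} if $D^{\max}_{\even}(p_1) \neq \emptyset$ and $D^{\max}_{\even}(p_2) \neq \emptyset$; a short case analysis shows this is equivalent to $\max_{s \in D} p_1(s)$ and $\max_{s \in D} p_2(s)$ both being even (if the maximum is even, a state achieving it lies in $D^{\max}_{\even}(p_i)$; if it is odd, no even priority can exceed it, so $D^{\max}_{\even}(p_i) = \emptyset$). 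I would then prove that $\mathbf{(2_U)}$ holds if and only if $C$ contains a good sub-EC. For the ``if'' direction, from any $s \in C$ one can almost-surely force a visit of $D$ (as $C$ is an EC) and then play so as to keep $\infny(\pi) = D$, the latter being provided by the construction for item~2 below. For the ``only if'' direction, take any witness $\lambda$; by Lemma~\ref{lemma_as_ec} the set $\infny(\pi)$ is almost surely an EC of $\mymdp_{\downharpoonright C}$, hence a sub-EC of $C$, and since $\lambda$ forces $p_1 \cap p_2$ almost surely, $\infny(\pi)$ almost surely has both priority maxima even; so with positive probability $\infny(\pi)$ is a good sub-EC, in particular one exists. (As a by-product, since $C$ is an EC and parity is prefix-independent, $\mathbf{(2_U)}$ is an all-or-nothing property of $C$.) Finally, deciding the existence of a good sub-EC is done by the classical recursive maximal-end-component decomposition: compute the MECs of $\mymdp_{\downharpoonright C}$; a MEC with both priority maxima even is good; a MEC whose $p_i$-maximum is odd is reprocessed after removing the states carrying that maximal $p_i$-priority and recomputing MECs inside it. Each removal strictly shrinks the considered state set, so the procedure runs in polynomial time.

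\smallskip\noindent\textbf{Items 2 and 3.} Assume $\mathbf{(2_U)}$ holds and fix a good sub-EC $D \subseteq C$ from the decomposition above; then $D^{\max}_{\even}(p_1) \neq \emptyset$ and $D^{\max}_{\even}(p_2) \neq \emptyset$. I would define $\lambda_{2, C}$ memorylessly on $\mymdp_{\downharpoonright C}$ by: on a state of $D$, play the uniform distribution over its successors in $\mymdp_{\downharpoonright D}$ (well defined, since $D$ is an EC of $\mymdp_{\downharpoonright C}$); on a state of $C \setminus D$, play $\lambda^{\to D}$, an optimal uniform pure memoryless strategy for reaching $D$ in $\mymdp_{\downharpoonright C}$ (it exists by Lemma~\ref{lemma_opti_reach}, and since $C$ is an EC it reaches $D$ with probability $1$ from every state of $C$). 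This $\lambda_{2, C}$ is uniform, memoryless and randomized. For item~2: from any $s \in C$, the play reaches $D$ almost surely because $\lambda^{\to D}$ does so and its post-$D$ behaviour is irrelevant to that event; once in $D$, $\lambda_{2, C}$ induces an irreducible finite Markov chain on $D$ (every edge of $\mymdp_{\downharpoonright D}$ has positive probability, $D$ is strongly connected and a trap for $\playerTwo$), so $\infny(\pi) = D$ almost surely. For item~3: since $\infny(\pi) = D$ almost surely and $\emptyset \neq D^{\max}_{\even}(p_1) \subseteq D$, the event $\Diamond D^{\max}_{\even}(p_1)$ has probability $1$; it is the increasing union over $n$ of the bounded-horizon events ``$D^{\max}_{\even}(p_1)$ is visited within $n$ steps'', so by continuity of measure, for every $\varepsilon > 0$ some finite $n$ (depending on $s$ and $\varepsilon$) yields probability at least $1 - \varepsilon$.

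\smallskip\noindent\textbf{Main obstacle.} The only delicate point is the polynomial-time bound in item~1: one must argue that the recursive MEC decomposition tailored to a \emph{conjunction} of two parity conditions both terminates in polynomial time and returns exactly the good sub-ECs. This is a routine adaptation of the end-component analysis for a single parity objective in MDPs (and coincides with the Streett case, a conjunction of two parities being a Streett condition), so I do not expect genuine difficulty --- only careful bookkeeping, together with the verification that gluing $\lambda^{\to D}$ to the uniform strategy inside $D$ preserves memorylessness without spoiling the almost-sure reachability of $D$.
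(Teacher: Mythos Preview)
Your proposal is correct and follows essentially the same approach as the paper: both reduce $\mathbf{(2_U)}$ to the existence of a sub-EC $D$ with both priority maxima even, both construct $\lambda_{2,C}$ by combining an almost-sure reachability strategy to $D$ with uniform random play inside $D$, and both obtain the polynomial-time bound via an iterative removal procedure. The only cosmetic differences are that the paper removes $\mathtt{Attr}_2(C^{\max}_{\odd}(p_1)\cup C^{\max}_{\odd}(p_2))$ at each step where you recompute MECs after deleting the offending states, and the paper appeals to Lemma~\ref{lemma_opti_reach} for item~3 where you invoke continuity of measure directly --- your argument for item~3 is in fact slightly cleaner, since Lemma~\ref{lemma_opti_reach} as stated only covers the optimal reachability strategy $\lambda^{\ast}$.
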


\begin{proof}
We first prove Point \textbf{2}. Let $C$ be an EC that satisfies condition $\mathbf{(2_U)}$, and a corresponding witness strategy $\lambda$ (that could use memory in full generality). By Lemma~\ref{lemma_as_ec}, we know that for any state $s \in C$,
\[
\prob^{\lambda}_{\mymdp_{\downharpoonright C}, s}\left[ \{\pi \in \out^{\mymdp_{\downharpoonright C}}(\lambda) \mid \infny(\pi) \in \mathcal{E}\}\right] = 1,
\]
where $\mathcal{E}$ is the set of ECs of $\mymdp_{\downharpoonright C}$. We claim that for any $D \in \mathcal{E}$ for which
\[
\prob^{\lambda}_{\mymdp_{\downharpoonright C}, s}\left[ \{\pi \in \out^{\mymdp_{\downharpoonright C}}(\lambda) \mid \infny(\pi) = D\}\right] > 0,
\]
we necessarily have that $D^{\max}_{\even}(p_1) \neq \emptyset$ and $D^{\max}_{\even}(p_2) \neq \emptyset$. Indeed, assume this is false for $p_i$. Then, with probability strictly greater than zero, $\lambda$ induces plays $\pi$ such that $\max_{s' \in \infny(\pi)} p_i(s')$ is odd (as the maximal priority in $\infny(\pi) = D$ is odd). This contradicts the fact that $\lambda$ is a witness strategy for $s\models_{\mymdp_{\downharpoonright C}} \as(p_1) \wedge \as (p_2)$ (condition $\mathbf{(2_U)}$).

Assume that $\lambda$ is a witness strategy, potentially leading to several ECs $D_j$ satisfying the above property with non-zero probability, and potentially using memory. Since each $D_j$ is included in the EC $C$, it is easy to conceive a witness strategy $\lambda_{2, C}$ that reaches and stays in only one of those ECs with probability one (we call it $D$ from now on) and without using any memory: indeed, each state of an EC can be almost-surely visited using a uniform randomized memoryless strategy~\cite{baier2008principles}.

Second, to prove Point \textbf{3}, it suffices to use Lemma~\ref{lemma_opti_reach}. See that $\lambda_{2, C}$ reaches $D^{\max}_{\even}(p_1)$ almost-surely from any state $s \in C$. By taking $n = k$ for the $k$ given in Lemma~\ref{lemma_opti_reach} for $c = 1 - \varepsilon$, we obtain the desired property.

Finally, consider Point \textbf{1}. We claim that (i) the existence of a sub-EC $D$ such that $D^{\max}_{\even}(p_1) \neq \emptyset$ and $D^{\max}_{\even}(p_2) \neq \emptyset$ is not only necessary but also sufficient to satisfy condition $\mathbf{(2_U)}$, and (ii) the existence of such a set can be decided in polynomial time.

For (i), as discussed above, for any sub-EC of $C$, in particular for $D$, we can build a uniform randomized memoryless strategy $\lambda$ such that 
\[
\prob^{\lambda}_{\mymdp_{\downharpoonright C}, s}\left[ \{\pi \in \out^{\mymdp_{\downharpoonright C}}(\lambda) \mid \infny(\pi) = D\}\right] = 1.
\]
By hypothesis on $D$, we thus have that $p_1$ and $p_2$ are almost-surely satisfied by $\lambda$, hence that $\lambda$ is a witness for $s\models_{\mymdp_{\downharpoonright C}} \as(p_1) \wedge \as (p_2)$ and that condition $\mathbf{(2_U)}$ holds in $C$.

It remains to check the existence of such a set $D \subseteq C$ in polynomial time. First, we check if $C^{\max}_{\even}(p_1) \neq \emptyset$ and $C^{\max}_{\even}(p_2) \neq \emptyset$. If this holds, then $D = C$ and the answer is $\yes$ (it takes linear time obviously). If it does not hold, then we compute the sets
\[
	C^{\max}_{\odd}(p_i) = \big\lbrace s\in C \mid (p_i(s) \text{ is odd}) \wedge (\forall\, s'\in C,\, p_i(s') \text{ is even } \implies p_i(s') < p_i(s))\big\rbrace
	\]
and we iterate this procedure in the sub-EC $C' \subset C$ defined as
\[
C' = C \setminus \mathtt{Attr}_{2}\big(C^{\max}_{\odd}(p_1) \cup C^{\max}_{\odd}(p_2)\big).
\]
It is easy to see that a suitable $D$ exists if and only if this procedure stops before $C' = \emptyset$. In addition, this procedure takes at most $\vert C \vert$ iterations (as we remove at least one state at each step) and each iteration takes linear time. This implies Point \textbf{3} and concludes our proof.
\end{proof}

\smallskip\noindent\textbf{Inside UGECs.} We will now prove that inside any UGEC, there is a strategy for the beyond worst-case problem $\s(p_1) \wedge \as(p_2)$. From now on, let $C$ be a UGEC of $\mymdp$, $\lambda_{1, C}$ be a uniform finite-memory witness strategy for condition $\mathbf{(1_U)}$ of Definition~\ref{def:ugec}, and $\lambda_{2, C}$ be a uniform randomized memoryless one for condition $\mathbf{(2_U)}$, additionally satisfying the properties of Lemma~\ref{lemma_strat2}. 
We build a strategy $\lambda_C$ based on $\lambda_{1, C}$ and $\lambda_{2, C}$.

\begin{definition}
\label{def:strat_ugec}
Let $C \in \ugec(\mymdp)$. Let $(n_i)_{i \in \mathbb{N}}$ be a sequence of naturals $n_i$ such that 
\[
\prob^{\lambda_{2, C}}_{\mymdp_{\downharpoonright C}, s}\left[\left\lbrace \pi \in \out^{\mymdp_{\downharpoonright C}}(\lambda_{2, C}) \mid \exists\, i,\, 0 \leq i \leq  n_i,\, \pi(i) \in D^{\max}_{\even}(p_1)\right\rbrace\right] \geq 1 - 2^{-i}, 
\]
whose existence is guaranteed by Lemma~\ref{lemma_strat2}. We build the beyond worst-case strategy $\lambda_C$ as follows, starting with $i = 0$.
\begin{enumerate}[label=\textbf{\alph*)}]
\item Play $\lambda_{2, C}$ for $n_i$ steps. Then $i = i + 1$ and go to \textbf{b)}.
\item If $D^{\max}_{\even}(p_1)$ was visited in phase \textbf{a)}, then go to \textbf{a)}.

Else, play $\lambda_{1, C}$ until $C^{\max}_{\even}(p_1)$ is reached and then go to \textbf{a)}.
\end{enumerate}
\end{definition}

Observe that $\lambda_C$ requires infinite memory. In the next lemma, we prove that $\lambda_C$ is a proper witness for $\s(p_1) \wedge \as(p_2)$ in the UGEC $C$.

\begin{lemma}
\label{lem:suf}
Let $C \in \ugec(\mymdp)$. For all $s \in C$, it holds that $s, \lambda_C \models \s(p_1) \wedge \as(p_2)$.
\end{lemma}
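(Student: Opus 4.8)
The goal is to show that the infinite-memory strategy $\lambda_C$ from Definition~\ref{def:strat_ugec} simultaneously guarantees $p_1$ on every consistent play and $p_2$ with probability one, from every state of the UGEC $C$. The plan is to treat the two conjuncts separately, exploiting that $\lambda_C$ never leaves $C$ (since both $\lambda_{1,C}$ and $\lambda_{2,C}$ are strategies in $\mymdp_{\downharpoonright C}$, and $C$ is a trap for $\playerTwo$), so we may reason entirely inside $\mymdp_{\downharpoonright C}$.

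\textit{Surely $p_1$.} I would fix an arbitrary play $\pi$ consistent with $\lambda_C$ and analyse its phase structure. There are two cases. Either the strategy switches to phase \textbf{b)} infinitely often and, each time, $D^{\max}_{\even}(p_1)$ was already visited during the preceding phase \textbf{a)} — in which case $\pi$ visits $D^{\max}_{\even}(p_1)$ infinitely often; or from some point on, whenever phase \textbf{b)} is entered, $D^{\max}_{\even}(p_1)$ was not visited in the last \textbf{a)}-phase, so $\lambda_{1,C}$ is played until $C^{\max}_{\even}(p_1)$ is reached, which by condition $\mathbf{(1_U)}$ — $\lambda_{1,C}$ being a witness for $\s(p_1) \wedge \as(\Diamond C^{\max}_{\even}(p_1))$ — is guaranteed to happen on every consistent play; hence $\pi$ visits $C^{\max}_{\even}(p_1)$ infinitely often. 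In the first case, since $D \subseteq C$ and by Lemma~\ref{lemma_strat2} $D^{\max}_{\even}(p_1) \neq \emptyset$ consists of states whose $p_1$-priority is even and strictly above every odd $p_1$-priority occurring in $D$, and since $\pi$ eventually stays in $D$ (the tail of $\pi$ only uses $\lambda_{2,C}$, which by Point~\textbf{2} of Lemma~\ref{lemma_strat2} concentrates all mass on $\infny(\pi) = D$ — but here I must be careful, as this is only an almost-sure statement; for the \emph{sure} part I instead observe directly that once we are in the ``always visited'' regime the play stays inside $\mymdp_{\downharpoonright C}$ and the maximal $p_1$-priority seen infinitely often is even because $\lambda_{2,C}$ is by construction a witness for $\as(p_1)$, which forces the maximal recurring $p_1$-priority of \emph{every} consistent play — not just almost every one — ... ). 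This last point is the delicate one and I address it below. In the second case, $C^{\max}_{\even}(p_1)$ being visited infinitely often means some even $p_1$-priority exceeding every odd $p_1$-priority of $C$ recurs, so the maximal recurring $p_1$-priority is even; thus $\pi \models p_1$ in both cases.

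\textit{Almost-surely $p_2$.} Here I would show $\prob^{\lambda_C}_{\mymdp_{\downharpoonright C}, s}[\text{phase \textbf{b)} plays } \lambda_{1,C}\text{ infinitely often}] = 0$. By the choice of $n_i$ in Definition~\ref{def:strat_ugec}, the probability that $D^{\max}_{\even}(p_1)$ is \emph{not} visited during the $i$-th \textbf{a)}-phase is at most $2^{-i}$, and crucially $\lambda_{2,C}$ is memoryless, so this bound holds regardless of the history (in particular regardless of whether $\lambda_{1,C}$ was played in between, since $\lambda_{1,C}$ also stays in $C$ and the \textbf{a)}-phase restarts $\lambda_{2,C}$ from whatever state is current). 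Since $\sum_i 2^{-i} < \infty$, the Borel–Cantelli lemma gives that almost surely only finitely many \textbf{a)}-phases fail to visit $D^{\max}_{\even}(p_1)$, hence almost surely $\lambda_{1,C}$ is invoked only finitely often and from some round on the play is governed solely by $\lambda_{2,C}$, staying forever in $D$ with $\infny(\pi) = D$. By Lemma~\ref{lemma_strat2}, $D^{\max}_{\even}(p_2) \neq \emptyset$, so the maximal $p_2$-priority recurring in $D$ is even, i.e.\ $\pi \models p_2$; since $p_2$ is prefix-independent, the finite prefix during which $\lambda_{1,C}$ was used does not matter. Therefore $\prob^{\lambda_C}_{\mymdp_{\downharpoonright C}, s}[p_2] = 1$.

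\textit{Main obstacle.} The subtle point is the \emph{sure} satisfaction of $p_1$ along plays on which phase \textbf{a)} always visits $D^{\max}_{\even}(p_1)$: such a play stays inside $\mymdp_{\downharpoonright C}$ and is consistent with $\lambda_{2,C}$ from some index on, but ``$\lambda_{2,C}$ is a witness for $\as(p_1)$'' is only a probabilistic guarantee, so it does not by itself certify that \emph{this particular} play satisfies $p_1$. I would resolve this by strengthening the bookkeeping: observe that in this regime the play visits $D^{\max}_{\even}(p_1)$ in every \textbf{a)}-phase, hence infinitely often, and these states have even $p_1$-priority strictly greater than every odd $p_1$-priority present in $D$ — but we also need that the play eventually remains within $D$, or at least within states whose $p_1$-priority does not exceed $\max_{s' \in D^{\max}_{\even}(p_1)} p_1(s')$. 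This requires knowing that $\lambda_{2,C}$'s support never leaves $D$ after reaching it (a structural fact about the memoryless witness built in Lemma~\ref{lemma_strat2}, where $D$ is a sub-EC and the strategy is chosen to stay in $D$ surely, not merely almost surely), and that $\lambda_{1,C}$, used only transiently, stays in $C$ and reaches $C^{\max}_{\even}(p_1)$ — but in this regime $\lambda_{1,C}$ is never invoked. Pinning down that $\lambda_{2,C}$ can be taken to remain in $D$ \emph{surely} once there, so that $\infny(\pi) \subseteq D$ for every relevant play, is the crux; granting it, the maximal recurring $p_1$-priority is the even value $\max_{s'\in D^{\max}_{\even}(p_1)}p_1(s')$ and $\pi \models p_1$ follows.
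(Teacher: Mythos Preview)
There is a genuine gap in your argument for $\s(p_1)$. You assert that when $\lambda_{1,C}$ is played in phase \textbf{b)}, reaching $C^{\max}_{\even}(p_1)$ ``is guaranteed to happen on every consistent play'', justifying this by condition $\mathbf{(1_U)}$. But $\mathbf{(1_U)}$ only says that $\lambda_{1,C}$ is a witness for $\s(p_1) \wedge \as(\Diamond C^{\max}_{\even}(p_1))$: the reachability conjunct is \emph{almost-sure}, not sure. Hence there may be plays consistent with $\lambda_{1,C}$ along which $C^{\max}_{\even}(p_1)$ is never reached --- phase \textbf{b)} then never terminates and $\lambda_{1,C}$ is played forever. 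Your two-way split (eventually every phase \textbf{a)} succeeds, versus eventually every phase \textbf{a)} fails) is also not exhaustive, and neither branch covers these ``$\lambda_{1,C}$ forever'' plays. Ironically, you correctly flag the almost-sure/sure distinction as the ``main obstacle'' for $\lambda_{2,C}$, but the very same distinction bites for $\lambda_{1,C}$.

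The paper closes this gap via a three-case analysis on an arbitrary consistent play $\pi$: (i) $\lambda_C$ switches infinitely often between the two sub-strategies --- then $C^{\max}_{\even}(p_1)$ is visited each time phase \textbf{b)} terminates, hence infinitely often; (ii) $\lambda_{1,C}$ is eventually played forever --- here one invokes the \emph{sure} half of $\mathbf{(1_U)}$, namely that $\lambda_{1,C}$ itself enforces $\s(p_1)$, and prefix-independence finishes; (iii) $\lambda_{2,C}$ is eventually played forever --- handled exactly as in your ``main obstacle'' paragraph (once $D^{\max}_{\even}(p_1)\subseteq D$ is reached and only the memoryless $\lambda_{2,C}$ is played, the play stays in $D$ surely, so $\infny(\pi)\subseteq D$). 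Your Borel--Cantelli argument for $\as(p_2)$ is a valid and slightly cleaner alternative to the paper's explicit computation showing $\prod_{j\ge i}(1-2^{-j})\to 1$; note, though, that the paper also records that each phase \textbf{b)} terminates almost surely (since $\lambda_{1,C}$ reaches $C^{\max}_{\even}(p_1)$ almost surely), which you use implicitly when speaking of the $i$-th \textbf{a)}-phase.
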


\begin{proof}
Consider the strategy $\lambda_C$ as described in Definition~\ref{def:strat_ugec} and any state $s \in C$. We first prove that $s, \lambda_C \models \s(p_1)$. We want to show that for any $\pi \in \out^{\mymdp_{\downharpoonright C}}_{s}(\lambda_C)$, $\max_{s' \in \infny(\pi)} p_1(s')$ is even. Fix such a play $\pi$. Three cases are possible: (i) $\lambda_C$ switches infinitely often between $\lambda_{1, C}$ and $\lambda_{2, C}$, (ii) it eventually plays $\lambda_{1, C}$ forever, and (iii) it eventually plays $\lambda_{2, C}$ forever. In case (i), we have that $C^{\max}_{\even}(p_1)$ is visited infinitely often along $\pi$. Since any state in this set has an even priority higher than any odd priority in $C$ by definition, we have that $\pi$ satisfies the parity objective $p_1$. In case (ii), we eventually always play as $\lambda_{1, C}$ and by hypothesis, we know that $s, \lambda_{1, C} \models \s(p_1)$. By prefix-independence, we thus have that $\pi$ satisfies the parity objective $p_1$. Finally, in case (iii), we have that $D^{\max}_{\even}(p_1)$ is visited infinitely often, and that eventually, play $\pi$ never leaves the sub-EC $D$ associated to strategy $\lambda_{2, C}$ (by definition of $\lambda_{2, C}$, the fact that $D^{\max}_{\even}(p_1) \subseteq D$ is reached, and that we never switch to $\lambda_{1, C}$ from some point on). Since any state in $D^{\max}_{\even}(p_1)$ has an even priority higher than any odd priority in $D$, we have that $\pi$ satisfies the parity objective $p_1$. Therefore, we conclude that $s, \lambda_C \models \s(p_1)$.

It remains to show that $s, \lambda_C \models \as(p_2)$. To do so, we will prove that $\lambda_C$ almost-surely ends up in playing only $\lambda_{2,C}$ (which ensures $\as(p_2)$).
First, observe that $\lambda_{1, C}$ almost-surely reaches $C^{\max}_{\even}(p_1)$, hence each phase \textbf{b)} of strategy $\lambda_C$ terminates with probability one. Now, consider the probability $x_i$ that $\lambda_{1, C}$ is never played again after round $i$ (in which $\lambda_{2, C}$ is played for $n_i$ steps). For any round $i \geq 1$, it is equal to $x_i = \prod_{j=i}^{\infty} (1 - 2^{-j})$ thanks to the choice of $n_i$. Since $(1 - 2^{-j}) > 0$ for any $j > 1$, we have that $\log(x_i) = \sum_{j=i}^{\infty} \log(1-2^{-j})$. Now, for any $z \in [0, \frac{1}{2}]$, we have the following inequalities:
\begin{align*}
0 \geq \log(1-z) = - (\log(1) - \log(1-z)) = - \int_{1-z}^{1} \frac{1}{y}\, \mathrm dy \geq -2\int_{1-z}^{1}\, \mathrm dy = -2z.
\end{align*}
Using these, we can bound $\log(x_i)$ as follows:
\begin{align*}
\sum_{j=i}^\infty -2\cdot 2^{-j} \leq \log(x_i) \leq 0 &\iff -2\cdot 2^{-i}\sum_{j=0}^\infty 2^{-j} \leq \log(x_i) \leq 0\\ &\iff -2^{-(i-2)} \leq \log(x_i) \leq 0.
\end{align*}
Hence, $\log(x_i) \xrightarrow{i \to +\infty} 0$, and by continuity of the exponential function, we finally obtain that $x_i \xrightarrow{i \to +\infty} 1$. This proves that $\lambda_C$ almost-surely consists in playing as $\lambda_{2, C}$ forever from some point on, which combined with the prefix-independence of the parity objective, implies that $\lambda_C$ ensures $\as(p_2)$, and concludes our proof.
\end{proof}

\smallskip\noindent\textbf{Global strategy.} We can now prove the left-to-right implication of Lemma~\ref{lemma_asp_to_asr}. For this, assume that for $s_0 \in S$, we have that $\lambda_{\mathcal{U}}$ is a witness for $s_0 \models \s(p_1) \wedge \as(\Diamond \mathcal{U})$, where we recall that $\mathcal{U}$ represents the union of all UGECs of the MDP $\mymdp$. Note that such a strategy can be finite-memory w.l.o.g.~as proved in Theorem~\ref{thm:reach-prob-one}. We build a global strategy $\lambda$ as follows.

\begin{definition}
Based on strategies $\lambda_{\mathcal{U}}$ and $\lambda_{C}$ for all $C \in \ugec(\mymdp)$, we build the global strategy $\lambda$ as follows.
\begin{enumerate}[label=\textbf{\alph*)}]
\item Play $\lambda_{\mathcal{U}}$ until a UGEC $C$ is reached, then go to \textbf{b)}.
\item Play $\lambda_{C}$ forever.
\end{enumerate}
\end{definition} 
This strategy requires infinite memory because it is needed for the strategies $\lambda_C$. We prove that $\lambda$ is a witness for the beyond worst-case problem $\s(p_1) \wedge \as(p_2)$.

\begin{lemma}
\label{lemma_leftright}
It holds that $s_0, \lambda \models \s(p_1) \wedge \as(p_2)$.
\end{lemma}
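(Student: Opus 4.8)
The plan is to establish the two conjuncts $\s(p_1)$ and $\as(p_2)$ separately for the global strategy $\lambda$, reusing the work already done in Lemma~\ref{lem:suf} for the inside-UGEC strategies $\lambda_C$ and in Theorem~\ref{thm:reach-prob-one} for $\lambda_{\mathcal{U}}$. The key observation is that $\lambda$ has a simple two-phase structure: a prefix played according to $\lambda_{\mathcal{U}}$ until (if ever) a UGEC $C$ is entered, followed by playing $\lambda_C$ forever from that point.

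\textbf{Surely $p_1$.} Fix any play $\pi \in \out^{\mymdp}_{s_0}(\lambda)$. Two cases arise. If $\pi$ never enters any UGEC, then $\pi$ is entirely consistent with $\lambda_{\mathcal{U}}$, which by hypothesis is a witness for $s_0 \models \s(p_1)$; hence $\pi \in \mathtt{Parity}(p_1)$. If $\pi$ enters a UGEC $C$ at some finite prefix $\rho$ (take the first such moment), then $\pi = \rho \cdot \pi'$ where $\pi'$ is a play in $\mymdp_{\downharpoonright C}$ consistent with $\lambda_C$ — here we use that $C$ is a trap for $\playerTwo$, so once inside $C$ the play never leaves it, and $\lambda_C$ is defined over $\mymdp_{\downharpoonright C}$. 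By Lemma~\ref{lem:suf}, $\pi'$ satisfies $p_1$, and since parity is prefix-independent, so does $\pi$. In both cases $\pi \in \mathtt{Parity}(p_1)$, so $s_0, \lambda \models \s(p_1)$.

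\textbf{Almost-surely $p_2$.} By Lemma~\ref{lemma_as_ec} (long-run appearance of ECs) applied to $\lambda$, almost every play $\pi$ consistent with $\lambda$ has $\infny(\pi) \in \mathcal{E}$, and in particular $\infny(\pi)$ is an EC, which is strongly connected and a trap for $\playerTwo$. Since $\lambda$ follows $\lambda_{\mathcal{U}}$ which by hypothesis is a witness for $\as(\Diamond \mathcal{U})$, with probability one the play reaches $\mathcal{U}$, hence enters some UGEC $C$, and from then on $\lambda$ plays $\lambda_C$ forever inside $\mymdp_{\downharpoonright C}$. Conditioning on this event and on the choice of $C$, the residual play is distributed exactly as a play of $\lambda_C$ from the entry state $s \in C$, so by Lemma~\ref{lem:suf}, it satisfies $p_2$ with probability one. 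Integrating over the (finitely many) UGECs and the reachability event, which together cover a probability-one set, we conclude $\prob^{\lambda}_{\mymdp, s_0}[\mathtt{Parity}(p_2)] = 1$, i.e.\ $s_0, \lambda \models \as(p_2)$.

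The main subtlety — and the step I would take care with — is the passage from statements about $\lambda_{\mathcal{U}}$ and $\lambda_C$ taken in isolation to statements about the concatenated strategy $\lambda$ on the original MDP $\mymdp$: one must check that the probability measure induced by $\lambda$ on $\mymdp$ decomposes correctly at the (random) entry time into a UGEC, that this entry time is almost-surely finite (given by the $\as(\Diamond\mathcal{U})$ guarantee), and that $\lambda_C$'s guarantees, which were proven inside $\mymdp_{\downharpoonright C}$, transfer verbatim because $C$ being a $\playerTwo$-trap means no play consistent with $\lambda_C$ ever leaves $C$. Everything else is a routine combination of prefix-independence of parity and the strong Markov property. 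Since both $\s(p_1)$ and $\as(p_2)$ are witnessed by the \emph{same} strategy $\lambda$, this establishes $s_0, \lambda \models \s(p_1) \wedge \as(p_2)$.
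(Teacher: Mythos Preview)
Your proof is correct and follows essentially the same two-phase argument as the paper: a case split on whether a UGEC is ever reached for $\s(p_1)$, and almost-sure reachability of $\mathcal{U}$ combined with Lemma~\ref{lem:suf} and prefix-independence for $\as(p_2)$. One minor imprecision: the fact that the play never leaves $C$ after entering is not solely because $C$ is a trap for $\playerTwo$ (that only constrains $\playerTwo$'s moves), but also because $\lambda_C$ is defined on $\mymdp_{\downharpoonright C}$ and thus $\playerOne$ never chooses an edge leaving $C$; the invocation of Lemma~\ref{lemma_as_ec} is also superfluous here.
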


\begin{proof}
First, we consider $\s(p_1)$. Along any play $\pi$ consistent with $\lambda$, either (i) a UGEC $C$ is eventually reached and $\lambda$ switches to $\lambda_C$, or (ii) $\lambda$ behaves as $\lambda_{\mathcal{U}}$ forever. Since all strategies $\lambda_C$ and strategy $\lambda_{\mathcal{U}}$ ensure $\s(p_1)$ and the parity objective is prefix-independent, we have that $s_0, \lambda \models \s(p_1)$.

Second, with probability one, $\lambda_{\mathcal{U}}$ reaches some UGEC $C$, in which $\lambda_C$ ensures $\as(p_2)$. Again invoking prefix-independence, we conclude that $\lambda$ is also a witness for $s_0 \models \as(p_2)$, which ends our proof.
\end{proof}

\subsection{Right-to-left implication (necessary condition)}
\label{subsec:necessary}

We now turn to the converse implication of Lemma~\ref{lemma_asp_to_asr}, i.e., that $s_0 \models \s(p_1) \wedge \as(p_2)$ implies $s_0 \models \s(p_1) \wedge \as(\Diamond \mathcal{U})$. This in particular requires $\mathcal{U}$ to be non-empty, i.e., the existence of a UGEC in $\mymdp$.

\smallskip\noindent\textbf{Initialized strategies.} We start by an intermediate lemma regarding witness strategies for the beyond worst-case problem. It establishes that all states reachable via such a strategy also satisfy the property.

\begin{lemma}
\label{lem:suf-bwc}
For every state $s \in S$, every strategy $\lambda$ such that $s, \lambda \models \s(p_1) \wedge \as(p_2)$, and every prefix $\rho \in \mathtt{Pref}(\out_s^{\mymdp}(\lambda))$, we have that $\mathtt{Last}(\rho) \models \s(p_1) \wedge \as(p_2)$.
\end{lemma}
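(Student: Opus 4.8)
The plan is to show that the \emph{initialized strategy} $\lambda[\rho]$, as defined in Section~\ref{sec:prelim}, is itself a witness for $\mathtt{Last}(\rho) \models \s(p_1) \wedge \as(p_2)$. Write $s = \mathtt{First}(\rho)$ and $t = \mathtt{Last}(\rho)$; since $\rho \in \mathtt{Pref}(\out_s^{\mymdp}(\lambda))$, the prefix $\rho$ is consistent with $\lambda$ and starts in $s$. Two elementary facts will be used. First, the cylinder of $\rho$ has positive probability, $\prob^{\lambda}_{\mymdp, s}[\rho] > 0$, because every move of $\playerOne$ prescribed along $\rho$ lies in the support of $\lambda$ and every outgoing edge of a stochastic state carries positive probability (recall $\mathtt{Supp}(\delta(s)) = \mathtt{Succ}(s)$). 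Second, by the very definition of $\lambda[\rho]$, the behaviour of $\lambda$ after the history $\rho$ coincides with the behaviour of $\lambda[\rho]$ started in $t$; the "otherwise" branch in the definition of $\lambda[\rho]$ only concerns histories $\rho'$ for which $\rho \cdot \rho'$ is not a valid prefix, and such histories never occur when following $\lambda[\rho]$ from $t$.

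For the sure conjunct, let $\pi = \pi_0\pi_1\dotso$ be any play in $\out_t^{\mymdp}(\lambda[\rho])$, so $\pi_0 = t$. Writing $\rho = \rho_0 \dotso \rho_m$ with $\rho_m = t$, the play $\rho_0 \dotso \rho_m \pi_1 \pi_2 \dotso$ is consistent with $\lambda$ from $s$, hence belongs to $p_1$ since $s, \lambda \models \s(p_1)$. As the parity objective $p_1$ is prefix-independent, $\pi \in p_1$ as well. Therefore $\out_t^{\mymdp}(\lambda[\rho]) \subseteq p_1$, that is, $t, \lambda[\rho] \models \s(p_1)$.

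For the almost-sure conjunct, suppose for contradiction that $q := \prob^{\lambda[\rho]}_{\mymdp, t}[\mathtt{Plays}(G) \setminus p_2] > 0$. Using prefix-independence of $p_2$ together with the fact that $\lambda$ after $\rho$ behaves as $\lambda[\rho]$ from $t$, the probability under $\lambda$ from $s$ of the plays that extend $\rho$ and violate $p_2$ decomposes as $\prob^{\lambda}_{\mymdp, s}[\rho] \cdot q > 0$. This contradicts $\prob^{\lambda}_{\mymdp, s}[\mathtt{Plays}(G) \setminus p_2] = 0$, which holds because $s, \lambda \models \as(p_2)$. Hence $q = 0$, i.e., $t, \lambda[\rho] \models \as(p_2)$. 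Combining the two conjuncts, $\lambda[\rho]$ witnesses $\mathtt{Last}(\rho) \models \s(p_1) \wedge \as(p_2)$, as required.

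The only slightly delicate point — and the one I expect to need the most care — is the probabilistic decomposition used in the last paragraph: one must make precise that conditioning the measure $\prob^{\lambda}_{\mymdp, s}$ on the cylinder of $\rho$ yields exactly $\prob^{\lambda[\rho]}_{\mymdp, t}$ on the suffixes. This is where the determinism of the memory update of the stochastic Moore machine encoding $\lambda$ matters: the prefix $\rho$ determines a unique memory element, and from that configuration $\lambda$ and $\lambda[\rho]$ prescribe the same distributions, so the claim can be verified directly on the induced Markov chains (or simply invoked as a standard Markov property of MDPs). Everything else is a routine use of the prefix-independence of parity.
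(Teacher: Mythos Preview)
Your proof is correct and follows essentially the same approach as the paper: both use the initialized strategy $\lambda[\rho]$ as the witness, handle $\s(p_1)$ by concatenating $\rho$ with any consistent play and invoking prefix-independence, and handle $\as(p_2)$ by contradiction using that the cylinder of $\rho$ has positive probability. Your additional remarks on the probabilistic decomposition are more explicit than what the paper provides, but the underlying argument is the same.
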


\begin{proof}
Fix $s \in S$, $\lambda$ a witness for $s \models \s(p_1) \wedge \as(p_2)$, and a prefix $\rho \in \mathtt{Pref}(\out_s^{\mymdp}(\lambda))$. We claim that the initialized strategy $\lambda[\rho]$, that behaves as $\lambda$ behaves after prefix~$\rho$, is a witness for $\mathtt{Last}(\rho) \models \s(p_1) \wedge \as(p_2)$.

We first consider $\s(p_1)$. For all $\pi \in \out_{\mathtt{Last}(\rho)}^{\mymdp}(\lambda[\rho])$, we have that $\rho \cdot \pi \in \out_s^{\mymdp}(\lambda)$. Furthermore, $\max_{s' \in \infny(\rho \cdot \pi)} p_1(s') = \max_{s' \in \infny(\pi)} p_1(s')$ is even since $s, \lambda \models \s(p_1)$. Hence, we conclude that $\mathtt{Last}(\rho), \lambda[\rho] \models \s(p_1)$.

Now, consider $\as(p_2)$. By contradiction, assume that $\mathtt{Last}(\rho), \lambda[\rho] \not\models \as(p_2)$, i.e., that $q = \prob_{\mymdp, \mathtt{Last}(\rho)}^{\lambda[\rho]}[p_2] < 1$. It implies that $\prob_{\mymdp, s}^{\lambda}[p_2] \leq 1 - \prob_{\mymdp, s}^{\lambda}[\rho] \cdot (1 - q) < 1$ (since any prefix $\rho \in \mathtt{Pref}(\out_s^{\mymdp}(\lambda))$ has strictly positive probability). This contradicts the hypothesis that $s, \lambda \models \as(p_2)$. Hence it holds that $\mathtt{Last}(\rho), \lambda[\rho] \models \as(p_2)$ and we are done.
\end{proof}

\smallskip\noindent\textbf{Existence of UGECs.} The next lemma establishes that at least one UGEC must exist in~$\mymdp$ since $s_0 \models \s(p_1) \wedge \as(p_2)$.

\begin{lemma}
\label{lem:ugec}
The following holds: $s_0 \models \s(p_1) \wedge \as(p_2) \implies \ugec(\mymdp) \neq \emptyset$.
\end{lemma}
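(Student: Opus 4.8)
The plan is to start from a witness strategy $\lambda$ for $s_0 \models \s(p_1) \wedge \as(p_2)$ and extract a UGEC from the Markov chain it induces. By Lemma~\ref{lemma_as_ec}, $\prob^{\lambda}_{\mymdp, s_0}$-almost every play ends up with its infinity set equal to some EC of $\mymdp$; since $\lambda$ ensures $\as(p_2)$ and also $\as(p_1)$ trivially (as $\s(p_1)$ implies $\as(p_1)$), there must exist at least one EC $C$ that is reached with positive probability and in which $\playerOne$, following $\lambda$, eventually stays forever while satisfying both parity objectives almost surely. The candidate UGEC is (a sub-EC of) this $C$. What remains is to verify that $C$ can be chosen so as to satisfy both $\mathbf{(1_U)}$ and $\mathbf{(2_U)}$ of Definition~\ref{def:ugec}.

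For condition $\mathbf{(2_U)}$: fix a prefix $\rho$ consistent with $\lambda$ reaching $C$ with positive probability. By Lemma~\ref{lem:suf-bwc}, $\mathtt{Last}(\rho) \models \s(p_1) \wedge \as(p_2)$, and in fact the initialized strategy $\lambda[\rho]$, conditioned on staying in $C$ (an event of positive probability), induces a behaviour confined to $\mymdp_{\downharpoonright C}$ that almost surely satisfies $p_1 \cap p_2$. Hence $\mathbf{(2_U)}$ holds for $C$: some state of $C$ satisfies $\as(p_1) \wedge \as(p_2)$ in $\mymdp_{\downharpoonright C}$, and since $C$ is strongly connected, all of $C$ does. (This step re-uses exactly the reasoning behind Point~\textbf{1}/Point~\textbf{2} of Lemma~\ref{lemma_strat2}: the witness even yields a sub-EC $D \subseteq C$ with $D^{\max}_{\even}(p_1), D^{\max}_{\even}(p_2) \neq \emptyset$.)

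For condition $\mathbf{(1_U)}$: I need $\playerOne$ to have, inside $\mymdp_{\downharpoonright C}$, a strategy that surely enforces $p_1$ while almost surely reaching $C^{\max}_{\even}(p_1)$. The key observation is that $\lambda$ (or $\lambda[\rho]$) restricted to plays staying in $C$ already surely enforces $p_1$; since $C$ is an EC, $\playerOne$ controls enough to also force visits, so one should be able to combine ``follow $\lambda[\rho]$ surely for $p_1$'' with an attractor-style redirection toward the highest even priority. More precisely, using that $\lambda[\rho]$ surely wins $p_1$ in $\mymdp_{\downharpoonright C}$, every state of $C$ is surely winning for $p_1$ there; then a memoryless sure-winning strategy for $p_1$, which must visit the maximal priority infinitely often and that priority must be even on the relevant plays, can be composed with the strong connectivity of $C$ to guarantee almost-sure (indeed sure, on a sub-EC) reachability of $C^{\max}_{\even}(p_1)$ — so $C^{\max}_{\even}(p_1) \neq \emptyset$ and $\mathbf{(1_U)}$ holds. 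This yields $C \in \ugec(\mymdp)$, hence $\mathcal{U} \neq \emptyset$.

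The main obstacle I anticipate is establishing $\mathbf{(1_U)}$ cleanly, and in particular showing $C^{\max}_{\even}(p_1) \neq \emptyset$: one cannot simply conclude from $\s(p_1)$ on plays confined to $C$ that the largest priority of $C$ is even, because surely-winning strategies for parity need not visit every state, so the maximal priority of $C$ may be odd yet avoidable. The fix is to pass to a suitable sub-EC: iterate, as in the proof of Lemma~\ref{lemma_strat2}, by removing the $\player{2}$-attractor of the states carrying the top odd priority and argue that the process cannot exhaust $C$, since otherwise $\playerOne$ could not surely win $p_1$ from the states that get removed last. Care is also needed to argue that the $\mathbf{(1_U)}$-witness and the $\mathbf{(2_U)}$-witness live in the \emph{same} sub-EC, which is why it is cleanest to first shrink $C$ to the sub-EC $D$ provided by Lemma~\ref{lemma_strat2} and verify $\mathbf{(1_U)}$ there.
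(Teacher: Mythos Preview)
Your approach has a genuine gap in the verification of $\mathbf{(1_U)}$, and the concrete counterexample is already in the paper: the MDP of Figure~\ref{fig:example}. There the unique witness strategy $\lambda$ for $a \models \s(p_1) \wedge \as(p_2)$ is the infinite-memory one described in the introduction, and for $\prob^{\lambda}$-almost every play $\pi$ one has $\infny(\pi) = \{a,b,c\}$. Hence your procedure selects $C = \{a,b,c\}$. This EC does satisfy $\mathbf{(2_U)}$, but it fails $\mathbf{(1_U)}$: in $\mymdp_{\downharpoonright \{a,b,c\}}$ the only choice of $\playerOne$ at $a$ is to go to $b$, and then the adversarial play $(ab)^\omega$ has maximal $p_1$-priority~$1$. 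The point is that $\lambda$ ensures $\s(p_1)$ precisely by occasionally leaving $\{a,b,c\}$ toward $d$ on the (measure-zero) plays where $c$ is never hit; once you restrict to $\mymdp_{\downharpoonright C}$, this escape hatch is gone and $\s(p_1)$ is lost. Your proposed fixes (removing $\playerTwo$-attractors of top odd priorities, or shrinking to the sub-EC $D$ of Lemma~\ref{lemma_strat2}) go in the wrong direction: they make the EC smaller, whereas what is missing is the state $d$ outside $C$.

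The paper's proof avoids this by working not with $\infny(\pi)$ but with the full reachable set $\mathtt{States}(\out_s^{\mymdp}(\lambda))$ of a witness strategy, which includes states visited only on measure-zero plays. It then takes \emph{minimal} such sets $R$ (over all witness pairs $(s,\lambda)$), and shows each minimal $R$ is an EC satisfying both conditions: minimality together with Lemma~\ref{lem:suf-bwc} forces the witness to stay inside $R$ (so $\s(p_1)$ holds in $\mymdp_{\downharpoonright R}$), and makes every non-empty subset of $R$ dense in the strategy tree, from which $\mathbf{(1_U)}$ follows via the density argument of Appendix~\ref{app:density}. In the example above, the minimal $R$ is $\{a,b,c,d\}$, which is indeed a UGEC.
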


\begin{proof}
Given a set of plays $\Pi \subseteq \mathtt{Plays}(G)$, we define
\[
\mathtt{States}(\Pi) = \{s \in S \mid \exists\, \pi \in \Pi,\, \exists\, n \in \mathbb{N},\, \pi(n) = s\}.
\]
To prove this lemma, we first study the following set $\mathcal{S}$ of subsets of $S$:
\begin{align*}
\mathcal{S}=\left\{ R \subseteq S \mid \exists\, s \in S,\, \exists\, \lambda \in \Lambda,\, (s,\lambda \models \s(p_1) \wedge \as(p_2)) \wedge (R =  \mathtt{States}(\out_s^{\mymdp}(\lambda)))\right\}.
\end{align*}
Intuitively, this set contains any subset of $S$ that captures all states reachable by some witness strategy $\lambda$, from some state $s \in S$.
First note that $s_0 \models \s(p_1) \wedge \as(p_2)$ implies that $\mathcal{S}$ is non-empty, as for a witness strategy $\lambda$, $R =  \mathtt{States}(\out_{s_0}^{\mymdp}(\lambda)) \in \mathcal{S}$, by definition.

Second, we show that all minimal elements of $\mathcal{S}$ for set inclusion $\subseteq$ are UGECs, i.e., that for all $R\in\min_{\subseteq}(\mathcal{S})$, it holds that $R \in \ugec(\mymdp)$. This will establish our lemma. By Definition~\ref{def:ugec}, we have to prove that for each $R \in\min_{\subseteq}(\mathcal{S})$, the following properties hold:
\begin{enumerate}[label=\textbf{\alph*)}]
\item $R$ is an EC in $\mymdp$,
\item $\forall\, s\in R,\, s\models_{\mymdp_{\downharpoonright R}} \s(p_1) \wedge \as(\Diamond R^{\max}_{\even}(p_1))$,
\item $\forall\, s\in R,\, s\models_{\mymdp_{\downharpoonright R}} \as(p_1) \wedge \as (p_2)$.
\end{enumerate}

Before proving those three items, we claim that for all $R\in \min_{\subseteq}(\mathcal{S})$, and $s \in R$, there exists a strategy $\lambda_R$ such that $s,\lambda_R\models_{\mymdp_{\downharpoonright R}} \s(p_1) \wedge \as(p_2)$, i.e., $\lambda_R$ satisfies the property without leaving $R$. This is a direct consequence of Lemma~\ref{lem:suf-bwc} and the minimality of $R$ in $\mathcal{S}$ for the $\subseteq$ order. We use strategy $\lambda_R$ in the rest of the proof.

Item \textbf{a)}. We first prove that $R$ is strongly connected. By contradiction, assume it is not the case, i.e., that there exist $s, s' \in R$ such that there is no path in $R$ from $s$ to $s'$. Then, let $R'$ be the set of states reachable with strategy $\lambda_R$ from a prefix $\rho$ ending in $s$. By Lemma~\ref{lem:suf-bwc}, we have that $R' \in \mathcal{S}$. But as there is no path from $s$ to $s'$ in $R$, we have that $s' \not\in R'$ and $R' \subsetneq R$. This contradicts the minimality of $R$, hence we conclude that $R$ is strongly connected. We now have to prove that $R$ is a trap for $\playerTwo$. Clearly, for any state $s \in R \cap S_1$, we have $\mathtt{Succ}(s) \cap R \neq \emptyset$, as $R$ is strongly connected. Hence, it remains to show that for all $s \in R \cap S_2$, we have $\mathtt{Succ}(s) \subseteq R$. By contradiction, fix some $s \in R \cap S_2$ and assume there exists $s' \not\in R$ such that $(s, s') \in E$. As $R$ belongs to $\mathcal{S}$, recall that $R =\mathtt{States}(\out_{s''}^{\mymdp}(\lambda))$ for some strategy $\lambda$ and state $s''$. Since $s \in R$, there exists a prefix $\rho \in \mathtt{Pref}(\out_{s''}^{\mymdp}(\lambda))$ such that $\mathtt{Last(\rho)} = s$. But then, prefix $\rho' = \rho \cdot s'$ also belongs to $\mathtt{Pref}(\out_{s''}^{\mymdp}(\lambda))$, and $s' \in R$. Thus, we conclude that $R$ is indeed a trap for $\playerTwo$.

Item \textbf{b)}. Fix any $s \in R$ and let us prove that $s \models_{\mymdp_{\downharpoonright R}} \s(p_1) \wedge \as(\Diamond R^{\max}_{\even}(p_1))$. As seen above, from the minimality of $R$ and Lemma~\ref{lem:suf-bwc}, we know that $s,\lambda_R\models_{\mymdp_{\downharpoonright R}} \s(p_1) \wedge \as(p_2)$. Now, again from the minimality of $R$ in $\mathcal{S}$, we know that in the subtree induced by $\out_{s}^{\mymdp}(\lambda_R)$, any non-empty subset $R' \subseteq R$ is dense. That is, for every prefix $\rho$, the subtree defined by $\lambda_R$ from $\rho$ reaches a state of $R'$, and this holds in all subsequent subtrees. Using the reduction to a parity-B\"uchi game which underlies Theorem~\ref{thm:reach-prob-one}, we deduce from the density argument presented in Appendix~\ref{app:density} that we can build $\lambda$ from $\lambda_R$ such that $s, \lambda \models_{\mymdp_{\downharpoonright R}} \s(p_1) \wedge \as(\Diamond R')$.
It remains to argue that $R^{\max}_{\even}(p_1)$ is non-empty to prove this item. This is necessarily true, otherwise $R^{\max}_{\odd}(p_1)$ would be non-empty and $\lambda_R$ would not ensure $\s(p_1)$ in $R$ (as $R^{\max}_{\odd}(p_1)$ would be a dense subset).

Item \textbf{c)}. This is trivial as, for $s \in R$, $\lambda_R$ enforces $s,\lambda_R\models_{\mymdp_{\downharpoonright R}} \s(p_1) \wedge \as(p_2)$, a stronger property.
\end{proof}

\smallskip\noindent\textbf{Reaching UGECs.} Let us now collect in $\mathcal{U}_{\min}= {\displaystyle \cup_{R \in\min_{\subseteq}(\mathcal{S})} R}$ all states that belong to minimal sets $R$ of $\mathcal{S}$, and establish that $\mathcal{U}_{\min}$ can be reached almost-surely if it holds that $s_0 \models \s(p_1) \wedge \as(p_2)$.

\begin{lemma}
\label{lemma_rightleft}
The following holds: $s_0 \models \s(p_1) \wedge \as(p_2) \implies s_0 \models \s(p_1) \wedge \as(\Diamond \mathcal{U}_{\min})$.
\end{lemma}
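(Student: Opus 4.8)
The plan is to exploit the characterization of $\mathcal{U}_{\min}$ as the union of minimal elements of $\mathcal{S}$ established in the proof of Lemma~\ref{lem:ugec}, together with Lemma~\ref{lemma_as_ec} (almost-sure long-run appearance of ECs) and Lemma~\ref{lem:suf-bwc} (propagation of the BWC property to all reachable states). Fix a witness strategy $\lambda$ for $s_0 \models \s(p_1) \wedge \as(p_2)$. The conjunct $\s(p_1)$ is immediate: $\lambda$ itself witnesses it, and remains a witness after we possibly modify it, so I only need to take care of $\as(\Diamond \mathcal{U}_{\min})$ while preserving $\s(p_1)$.

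First I would argue that, along $\lambda$, almost every play eventually settles in some end-component $C$, by Lemma~\ref{lemma_as_ec}: $\prob^{\lambda}_{\mymdp, s_0}[\{\pi \mid \infny(\pi) \in \mathcal{E}\}] = 1$. Consider such a play $\pi$ with $\infny(\pi) = C \in \mathcal{E}$, and pick a prefix $\rho \in \mathtt{Pref}(\out_{s_0}^{\mymdp}(\lambda))$ whose last state lies in $C$ and after which $\pi$ stays in $C$; by Lemma~\ref{lem:suf-bwc}, $\mathtt{Last}(\rho) \models \s(p_1) \wedge \as(p_2)$, so the set $R = \mathtt{States}(\out^{\mymdp}_{\mathtt{Last}(\rho)}(\lambda[\rho]))$ belongs to $\mathcal{S}$. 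Since $\mathcal{S}$ is closed downward under the ``reachable-states-of-an-initialized-witness'' operation (again Lemma~\ref{lem:suf-bwc}) and $S$ is finite, $R$ contains a minimal element $R' \in \min_{\subseteq}(\mathcal{S})$, hence $R' \subseteq \mathcal{U}_{\min}$. The delicate point is to conclude that $\pi$ (or a suitably modified play) actually \emph{visits} $R'$: being in the reachable set of an initialized witness does not by itself force the original play to enter that set, because $\lambda$ could have chosen differently.

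This is where I expect the main obstacle, and I would resolve it by constructing a \emph{modified} global strategy rather than arguing about $\lambda$ directly. The idea: play $\lambda$ as is; whenever the current history $\rho$ is such that $\mathtt{Last}(\rho) \models \s(p_1) \wedge \as(p_2)$ and the set $R = \mathtt{States}(\out^{\mymdp}_{\mathtt{Last}(\rho)}(\lambda[\rho]))$ is minimal in $\mathcal{S}$ — equivalently, no proper subset of $R$ arises as such a reachable set — then $\mathtt{Last}(\rho) \in \mathcal{U}_{\min}$ and we are done on that branch; otherwise, by Lemma~\ref{lem:suf-bwc} there is a prefix extending $\rho$ (consistent with a witness from $\mathtt{Last}(\rho)$) whose reachable set is strictly smaller, and we switch to that witness. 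Since each such switch strictly decreases the (finite) reachable set, only finitely many switches occur along any play, so the strategy eventually follows a fixed initialized witness $\lambda[\rho_\ast]$ whose reachable set is minimal; Lemma~\ref{lemma_as_ec} applied inside that minimal set, combined with the fact that every state of a minimal $R'$ lies in $\mathcal{U}_{\min}$, guarantees that $\mathcal{U}_{\min}$ is in fact reached (indeed the play ends up inside $R' \subseteq \mathcal{U}_{\min}$). Throughout, $\s(p_1)$ is preserved because every strategy we ever switch to is a $\s(p_1)$-witness and parity is prefix-independent; and $\as(\Diamond \mathcal{U}_{\min}) = 1$ because the ``finitely many switches, then a minimal witness'' behaviour holds on every play, not just almost every one. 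I would write the bookkeeping of ``the reachable set strictly decreases at each switch, hence the procedure stabilizes'' as the technical heart of the argument, citing Lemma~\ref{lem:suf-bwc} for each invocation and finiteness of $S$ for termination.
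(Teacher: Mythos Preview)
Your plan correctly isolates the essential combinatorial fact: for every prefix $\rho$ consistent with $\lambda$, the set $R_\rho = \mathtt{States}(\out^{\mymdp}_{\mathtt{Last}(\rho)}(\lambda[\rho]))$ lies in $\mathcal{S}$ and therefore contains some minimal $R'\in\min_{\subseteq}(\mathcal{S})\subseteq\mathcal{U}_{\min}$. In other words, $\mathcal{U}_{\min}$ is \emph{dense} in the tree $\out^{\mymdp}_{s_0}(\lambda)$. That is exactly the observation the paper uses.

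The gap is in the step where you turn density into an actual strategy reaching $\mathcal{U}_{\min}$ almost surely. Your switching mechanism is not well-defined: when $R_\rho$ is not minimal, you exhibit a particular prefix $\rho'$ extending $\rho$ at whose end a strictly smaller witness is available, and you ``switch to that witness''. But $\rho'$ is merely \emph{one} branch of the tree of the current witness; player~2 is free to deviate from $\rho'$ at any stochastic state. You never say what the strategy does on those deviating branches, and there is no reason the reachable set strictly decreases along them. Consequently the claim ``each switch strictly decreases the reachable set, hence finitely many switches along any play'' is unjustified: switches, as you describe them, only occur on the specific branch that happens to follow $\rho'$. Trying to patch this by ``re-aiming'' after each deviation runs into the same problem recursively and does not obviously terminate, nor does it obviously preserve $\s(p_1)$ once you start steering toward a designated target rather than following a fixed witness.

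The paper closes precisely this gap with the density argument developed in Appendix~\ref{app:density} (Lemma~\ref{lem:bp}): from any strategy $\lambda$ with $s_0,\lambda\models\s(p_1)$ in which a set $R$ is dense, one builds a winning strategy for player~1 in the associated parity--B\"uchi game $G^{\mymdp}_{R,p_1}$, and via Theorem~\ref{thm:reach-Almagor} and Theorem~\ref{thm:reach-prob-one} this yields a strategy $\lambda'$ with $s_0,\lambda'\models\s(p_1)\wedge\as(\Diamond R)$. That reduction is the missing ingredient; your ``shrinking reachable set'' argument is essentially a (failed) attempt to reprove it by hand. If you want a self-contained direct construction, you would need to reproduce the parity--B\"uchi gadget (or an equivalent device that lets player~1 hand control to player~2 in exchange for a B\"uchi visit), not just iterate Lemma~\ref{lem:suf-bwc}.
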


\begin{proof}
Let $\lambda$ be a witness for $s_0 \models \s(p_1) \wedge \as(p_2)$. By Lemma~\ref{lem:ugec}, we know that $\ugec(\mymdp)$ is non-empty, and so is $\mathcal{U}_{\min}$. Furthermore, we claim that $\mathcal{U}_{\min}$ is dense in the tree induced by $\out_{s_0}^{\mymdp}(\lambda)$. Indeed, by Lemma~\ref{lem:suf-bwc}, after any prefix $\rho \in \mathtt{Pref}(\out_{s_0}^{\mymdp}(\lambda))$, the following property holds: $\mathtt{Last}(\rho), \lambda[\rho] \models \s(p_1) \wedge \as(p_2)$ (hence $\mathcal{U}_{\min}$ is reached, repeating the previous arguments). Since this holds for all consistent prefixes, we have that $\mathcal{U}_{\min}$ is indeed dense in the tree of $\lambda$. Hence, again using the density argument presented in Appendix~\ref{app:density}, we can build $\lambda'$ from $\lambda$ such that $s_0, \lambda' \models \s(p_1) \wedge \as(\Diamond \mathcal{U}_{\min})$.
\end{proof}

\subsection{Algorithm}
\label{subsec:algo}
Lemma~\ref{lemma_leftright} and Lemma~\ref{lemma_rightleft} prove the correctness of the reduction presented in Lemma~\ref{lemma_asp_to_asr}. It is the cornerstone of our algorithm deciding whether the property $s_0 \models \s(p_1) \wedge \as(p_2)$ holds.

\begin{theorem}
\label{theorem_asp}
Given an MDP $\mymdp = (G = (S,E), S_1, S_2, \delta)$, a state $s_0 \in S$, and two priority functions $p_i\colon S \rightarrow \{1, \dotso, d\}$, $i \in \{1, 2\}$, it can be decided in ${\sf NP} \cap {\sf coNP}$ if $s_0 \models \s(p_1) \wedge \as(p_2)$. If the answer is $\yes$, then there exists an infinite-memory witness strategy, and infinite memory is in general necessary. This decision problem is at least as hard as solving parity games.
\end{theorem}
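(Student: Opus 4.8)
\textbf{Proof plan for Theorem~\ref{theorem_asp}.} The plan is to combine the reduction of Lemma~\ref{lemma_asp_to_asr} with the complexity bound from Theorem~\ref{thm:reach-prob-one}, so the core work is to show that the set $\mathcal{U}$ of states lying in some UGEC can be computed with an ${\sf NP} \cap {\sf coNP}$ oracle (in fact, that membership $s \in \mathcal{U}$ is decidable in ${\sf NP} \cap {\sf coNP}$, and that the whole set is obtained by polynomially many such queries). First I would recall that, by Lemmas~\ref{lemma_leftright} and~\ref{lemma_rightleft}, $s_0 \models \s(p_1) \wedge \as(p_2)$ is equivalent to $s_0 \models \s(p_1) \wedge \as(\Diamond \mathcal{U})$; note that Lemma~\ref{lemma_rightleft} is stated for $\mathcal{U}_{\min}$, but since every minimal set $R \in \min_\subseteq(\mathcal{S})$ is a UGEC (proof of Lemma~\ref{lem:ugec}) we have $\mathcal{U}_{\min} \subseteq \mathcal{U}$, and conversely every UGEC trivially satisfies $\s(p_1)\wedge\as(p_2)$ from each of its states inside the sub-MDP, so reaching $\mathcal{U}$ and reaching $\mathcal{U}_{\min}$ are interchangeable here; I would make this identification explicit. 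Then, given $\mathcal{U}$, deciding $s_0 \models \s(p_1) \wedge \as(\Diamond \mathcal{U})$ is exactly the problem of Theorem~\ref{thm:reach-prob-one} with target $T = \mathcal{U}$, which lies in ${\sf NP} \cap {\sf coNP}$ and admits a finite-memory witness for the reachability part.

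The main step is computing $\mathcal{U}$. For a fixed candidate EC $C$ one checks conditions $\mathbf{(1_U)}$ and $\mathbf{(2_U)}$ of Definition~\ref{def:ugec}: condition $\mathbf{(1_U)}$ is the super-good EC test, shown in~\cite{almagor2016minimizing} (via a reduction to a parity-B\"uchi game) to be in ${\sf NP} \cap {\sf coNP}$, and condition $\mathbf{(2_U)}$ is decidable in polynomial time by Lemma~\ref{lemma_strat2}. The subtlety is that there can be exponentially many ECs, so one cannot enumerate them. Here I would exploit the lattice structure: the union of two intersecting UGECs is again an EC, and I would argue it is again a UGEC (a strategy witnessing $\mathbf{(1_U)}$ resp.~$\mathbf{(2_U)}$ in one part can be combined with the strongly-connected structure of the union to still satisfy the conditions), so each state $s$ that belongs to some UGEC belongs to a unique maximal one. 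Hence $\mathcal{U} = \bigcup \{ C : C \text{ maximal EC with } \mathbf{(1_U)} \wedge \mathbf{(2_U)}\}$ can be computed by a decomposition procedure in the style of the maximal-end-component / super-good-EC decomposition of~\cite{almagor2016minimizing}: compute the maximal EC decomposition, test each maximal EC for $\mathbf{(1_U)} \wedge \mathbf{(2_U)}$; for those that pass, add them to $\mathcal{U}$; for those that fail, remove an offending part (the attractor within the sub-MDP of the states witnessing the violation — e.g.\ the states forcing an odd top priority) and recurse on the refined sub-MDP. Each EC is refined at most $\mathcal{O}(|S|)$ times and each refinement needs only polynomially many ${\sf NP} \cap {\sf coNP}$ tests, so the whole computation runs in polynomial time with an ${\sf NP} \cap {\sf coNP}$ oracle, and by ${\sf P}^{{\sf NP} \cap {\sf coNP}} = {\sf NP} \cap {\sf coNP}$~\cite{Bra79} we get the claimed complexity.

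For the memory requirements, I would invoke the constructions already given: the finite-memory witness for $\s(p_1) \wedge \as(\Diamond \mathcal{U})$ from Theorem~\ref{thm:reach-prob-one} composed with the in-UGEC strategy $\lambda_C$ of Definition~\ref{def:strat_ugec}, exactly as in the global strategy of Lemma~\ref{lemma_leftright}; this yields an infinite-memory witness. That infinite memory is \emph{necessary} in general follows from the Example of Figure~\ref{fig:example} together with the claim made there (player~1 needs infinite memory to ensure $p_1$ surely and $p_2$ almost-surely); I would formalize that claim with a short pumping argument: a finite-memory strategy induces a finite Markov chain whose bottom strongly connected components are entered almost-surely, and in this example any such component either visits $d$ infinitely often (violating $\s(p_1)$ on the play staying in the $a\text{-}d$ loop) or never visits $d$ after some point (then the play looping on $b$ forever is consistent, seeing top $p_2$-priority odd, so $\as(p_2)$ fails). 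Finally, hardness is immediate: taking $p_2$ constant even makes the second conjunct vacuous, so the problem subsumes deciding $s_0 \models \s(p_1)$, i.e.\ solving a parity game~\cite{DBLP:journals/ipl/Jurdzinski98}. The step I expect to be the genuine obstacle is proving closure of UGECs under union of intersecting components (and hence justifying the refinement-based polynomial computation of $\mathcal{U}$), since condition $\mathbf{(1_U)}$ mixes a sure constraint with an almost-sure reachability constraint and it is not a priori obvious that the witnessing strategies compose across the glued structure without breaking $\s(p_1)$.
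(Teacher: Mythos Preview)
Your overall architecture---reduce via Lemma~\ref{lemma_asp_to_asr} to $s_0 \models \s(p_1) \wedge \as(\Diamond \mathcal{U})$, compute $\mathcal{U}$, then invoke Theorem~\ref{thm:reach-prob-one}---is exactly the paper's. The difference is in how $\mathcal{U}$ is computed. You propose to show that intersecting UGECs are closed under union, hence maximal UGECs exist, and then run a refinement loop; you correctly flag the closure property as the obstacle. The paper sidesteps this entirely: it first computes the \emph{maximal super-good ECs} (the maximal ECs satisfying $\mathbf{(1_U)}$), which are polynomially many and computable in ${\sf NP}\cap{\sf coNP}$ by~\cite{DBLP:conf/concur/AlmagorKV16}, and then simply filters them by condition $\mathbf{(2_U)}$. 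The key observation (implicit in the proof of Lemma~\ref{lemma_strat2}) is that $\mathbf{(2_U)}$ is monotone upward along the EC-inclusion order: it holds for $C$ iff some sub-EC $D\subseteq C$ has $D^{\max}_{\even}(p_1)\neq\emptyset$ and $D^{\max}_{\even}(p_2)\neq\emptyset$, so if a maximal super-good EC fails $\mathbf{(2_U)}$ then all its sub-ECs do too, and if it passes then it is itself a UGEC. Hence $\mathcal{U}$ is exactly the union of those maximal super-good ECs that pass the polynomial-time $\mathbf{(2_U)}$ test---no closure-under-union of UGECs needed, and no recursive refinement.

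Your infinite-memory argument on Figure~\ref{fig:example} has the two cases swapped. In that MDP, $d$ carries priorities $(2,1)$ and $a,b$ carry $(1,0)$, so the play $(ad)^\omega$ has maximal $p_1$-priority $2$ (even): visiting $d$ infinitely often does \emph{not} violate $\s(p_1)$; it violates $\as(p_2)$ (maximal $p_2$-priority $1$). Conversely, the play $(ab)^\omega$ has maximal $p_1$-priority $1$ (odd) and $p_2$-priority $0$ (even), so never going to $d$ breaks $\s(p_1)$, not $\as(p_2)$. The corrected argument (essentially the one the paper gives for the example in Figure~\ref{fig:full}) is: any finite-memory strategy ensuring $\s(p_1)$ must, along the adversarial run where $\playerTwo$ always returns $b\to a$, go to $d$ within some bounded number $n$ of $a$--$b$ rounds; then in the stochastic model the probability of such a ``bad'' stretch is at least $2^{-n}$ each time, so $d$ is visited infinitely often almost-surely, yielding $\prob[p_2]=0$. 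The hardness and witness-strategy paragraphs are fine.
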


\begin{proof}
The algorithm can be sketched as follows:
\begin{enumerate}
\item Compute the set $\max_{\subseteq}(\sgec(\mymdp))$ of maximal super-good ECs, using~\cite{DBLP:conf/concur/AlmagorKV16}. Those are the maximal ECs satisfying condition $\mathbf{(1_U)}$ in Definition~\ref{def:ugec}. There are only polynomially many of them, and their computation is in ${\sf NP} \cap {\sf coNP}$.
\item For each of them, check if condition $\mathbf{(2_U)}$ holds using Lemma~\ref{lemma_strat2}. This can be done in polynomial time, and if a super-good EC does not satisfy $\mathbf{(2_U)}$, then it is also the case of all its sub-ECs (as seen in the proof of Lemma~\ref{lemma_strat2}). Hence, we have that
\[
\mathcal{U} = \left\lbrace C \in \max_{\subseteq}(\sgec(\mymdp)) \mid C \text{ satisfies } \mathbf{(2_U)} \right\rbrace.
\]
\item Decide if $s_0 \models \s(p_1) \wedge \as(\Diamond \mathcal{U})$ using Theorem~\ref{thm:reach-prob-one}. This is in ${\sf NP} \cap {\sf coNP}$. If it holds, then answer $\yes$, otherwise answer $\no$.
\end{enumerate}

The correctness of this algorithm was established in Lemma~\ref{lemma_asp_to_asr}. It belongs to ${\sf P}^{{\sf NP} \cap {\sf coNP}} = {\sf NP} \cap {\sf coNP}$~\cite{Bra79}, and it trivially generalizes classical parity games (e.g., by taking $p_2\colon s \mapsto 0$ for all $s \in S$).

Finally, let us discuss strategies. A witness strategy $\lambda$ plays as follows: (i) it plays as the finite-memory strategy witness for $s_0 \models \s(p_1) \wedge \as(\Diamond \mathcal{U})$ given by Theorem~\ref{thm:reach-prob-one} until a UGEC $C$ is reached, (ii) then it switches to the infinite-memory strategy $\lambda_{C}$ described in Definition~\ref{def:strat_ugec}. It is clear that such a strategy is a witness for $s_0 \models \s(p_1) \wedge \as(p_2)$, as expected.

Infinite memory is required in general, as shown in the UGEC $C$ depicted in Figure~\ref{fig:full}: there exists no finite-memory witness strategy in $C$. Indeed, assume $\playerOne$ is restricted to a finite-memory strategy $\lambda$. To be able to ensure $p_1$ on the play in which $\playerTwo$ always goes to~$c$ from $b$, $\playerOne$ must visit $d$ infinitely often, and because of the finite memory of $\lambda$, he must do it after a bounded number of steps along which $a$ is not visited: say $n$ steps. Hence, the probability to do it will be bounded from below by a strictly positive constant, here $2^{-\frac{n}{2}}$ (the probability that $\playerTwo$ chooses $c$ for $\frac{n}{2}$ times in a row), all along a consistent play. Therefore, $\playerOne$ will almost-surely visit $d$ infinitely often, and $p_2$ will actually be satisfied with probability zero.
\end{proof}

\section{Parity with threshold probability under parity constraints}
\label{sec:threshp}

We now turn to the problem $s_0 \models \s(p_1) \wedge \p{\sim c}(p_2)$ for $\sim\, \in \{>, \geq\}$ and $c \in \mathbb{Q} \cap [0, 1)$.

\smallskip\noindent\textbf{Very-good end-components.} In addition to UGECs, we need the new notion of \textit{very-good end-component}.

\begin{definition}
\label{def:vgec}
An end-component $C$ of $\mymdp$ is \emph{very-good} ($\vgec$) if the following two properties hold:
\begin{itemize}
	\item $\mathbf{(1_V)}$ $\forall\, s\in C,\, s\models_{\mymdp} \s(p_1)$;
	\item $\mathbf{(2_V)}$ $\forall\, s\in C,\, s\models_{\mymdp_{\downharpoonright C}} \as(p_1) \wedge \as (p_2)$, or equivalently, $s\models_{\mymdp_{\downharpoonright C}} \as(p_1 \cap p_2)$.
\end{itemize}
We introduce the following notations:
  \begin{itemize}
  	\item $\vgec(\mymdp)$ is the set of all $\vgec$s of $\mymdp$, 
	\item $\mathcal{V} = {\displaystyle \cup_{V \in \vgec(\mymdp)} V}$ is the set of states that belong to a $\vgec$ in $\mymdp$.
  \end{itemize}
\end{definition}

Note that in condition $\mathbf{(1_V)}$, $\playerOne$ is allowed to leave $C$ to ensure $\s(p_1)$: this is in contrast to condition $\mathbf{(1_U)}$ for UGECs, in Definition~\ref{def:ugec}. On the contrary, condition $\mathbf{(2_V)}$ is exactly the same as $\mathbf{(2_U)}$.

From these definitions, it is trivial to see that any UGEC is also a VGEC, but the converse is false. Consider Figure~\ref{fig:v-not-u} ($\delta$ is the uniform distribution): $\{a,b,c\}$ is a VGEC. The strategy ensuring $\mathbf{(2_V)}$ from $a$ is to go to $b$, and the strategy ensuring $\mathbf{(1_V)}$ from $a$ is to go to $d$. As we will prove in Lemma~\ref{lem:suf2} and as in all VGECs, $\playerOne$ can ensure $a \models \s(p_1) \land \p{> 1- \varepsilon}(p_2)$ for any $\varepsilon > 0$.
Still, $\{a,b,c\}$ is not a UGEC: no strategy ensures $\s(p_1)$ on $\mymdp_{\downharpoonright \{a,b,c\}}$, as $\playerTwo$ can enforce the play $(ab)^\omega$ that has odd maximal priority. This illustrates why the notion of UGEC is too strong when reasoning about threshold probability instead of almost-sure satisfaction, hence why we need to introduce VGECs.

\begin{figure}[tbh]
\centering
\scalebox{0.8}{\begin{tikzpicture}[every node/.style={font=\small,inner sep=1pt}]
\draw (0,0) node[rond,rouge] (s0) {$1,1$};
\draw (-2,1) node[carre,rouge] (s1) {$1,1$};
\draw (-2,-1) node[carre,vert] (s2) {$2,2$};
\draw (2,0) node[carre,jaune] (s3) {$0,1$};
\draw (0,-0.6) node (l0) {$a$};
\draw (-2.5,1) node (l1) {$b$};
\draw (-2.5,-1) node (l2) {$c$};
\draw (2,-0.6) node (l3) {$d$};
\draw[-latex] (s0) to (s3);
\draw[-latex] (s1) to (s2);
\draw[-latex] (s2) to (s0);
\draw[-latex] (s0) to[out=100,in=355] (s1);
\draw[-latex] (s1) to[out=290,in=175] (s0);
\draw (s3) edge[-latex,out=330,in=30,looseness=4,distance=1cm] (s3);
\end{tikzpicture}}
\caption{The EC $\{a,b,c\}$ is very-good but not ultra-good, as $\playerOne$ has to leave it to ensure $\s(p_1)$.}
\label{fig:v-not-u}
\end{figure}

\smallskip\noindent\textbf{Available strategies in VGECs}. As for UGECs, we will use witness strategies for conditions $\mathbf{(1_V)}$ and $\mathbf{(2_V)}$. 
Deciding if condition $\mathbf{(1_V)}$ holds is solving a classical parity game, which lies in ${\sf NP} \cap {\sf coNP}$~\cite{DBLP:journals/ipl/Jurdzinski98}. Uniform pure memoryless witness strategies exist. We denote by $\lambda_{1}$ such a witness. For simplicity of presentation, \textit{we assume in the following that all states of $\mymdp$ satisfy $\mathbf{(1_V)}$}, as otherwise they will trivially not satisfy the properties we consider (as $\s(p_1)$ will not be ensured).
 For condition $\mathbf{(2_V)}$, we established in Lemma~\ref{lemma_strat2} that deciding if it holds can be done in polynomial time and that uniform randomized memoryless witness strategies exist. We denote by $\lambda_{2, C}$ such a witness.

\smallskip\noindent\textbf{Reaching VGECs.} We prove a strong relationship between the measure of paths that satisfy the two parity objectives $p_1$ and $p_2$, and the measure of paths that reach VGECs, under any strategy.

\begin{lemma}
\label{lem:prel-ls}
For all $s \in S$, and all $\lambda \in \Lambda$, the following holds:
\(
\prob^{\lambda}_{\mymdp, s}[\Diamond \mathcal{V}] \geq \prob^{\lambda}_{\mymdp, s}[p_1 \cap p_2].
\)
\end{lemma}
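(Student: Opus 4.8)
The plan is to show that almost every play that satisfies both parity objectives $p_1 \cap p_2$ must eventually "stabilize" inside an end-component that is, in fact, a VGEC, so that such a play reaches $\mathcal{V}$. First I would invoke Lemma~\ref{lemma_as_ec}: for the fixed strategy $\lambda$ and initial state $s$, almost surely the set $\infny(\pi)$ of states visited infinitely often is an end-component of $\mymdp$. Since we have assumed (see the discussion preceding the lemma) that every state of $\mymdp$ satisfies $\mathbf{(1_V)}$, this end-component automatically satisfies condition $\mathbf{(1_V)}$. Hence it suffices to argue that, conditioned on $\infny(\pi)$ being some EC $C$ and on $\pi \in p_1 \cap p_2$, the EC $C$ also satisfies $\mathbf{(2_V)}$; then $C$ is a VGEC, $C \subseteq \mathcal{V}$, and $\pi \models \Diamond \mathcal{V}$.

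The key step is therefore: if there is positive probability that $\infny(\pi) = C$ \emph{and} $\pi$ satisfies both parity conditions, then $C$ satisfies $\mathbf{(2_V)}$. For this I would reason as follows. Suppose $C$ does not satisfy $\mathbf{(2_V)}$; by Lemma~\ref{lemma_strat2}(1) and its proof, $\mathbf{(2_V)}$ fails iff there is no sub-EC $D \subseteq C$ with $D^{\max}_{\even}(p_1) \neq \emptyset$ and $D^{\max}_{\even}(p_2) \neq \emptyset$. In particular $C$ itself, being its own sub-EC, must have $C^{\max}_{\even}(p_1) = \emptyset$ or $C^{\max}_{\even}(p_2) = \emptyset$, say for $p_i$; that means $C$ contains an odd $p_i$-priority at least as large as every even $p_i$-priority in $C$. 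But on any play $\pi$ with $\infny(\pi) = C$, the set of $p_i$-priorities seen infinitely often is exactly $p_i(C)$, so $\max_{s' \in \infny(\pi)} p_i(s')$ is that odd value — contradicting $\pi \models p_i$. Hence on the event $\{\infny(\pi) = C\} \cap (p_1 \cap p_2)$ the EC $C$ must satisfy $\mathbf{(2_V)}$, so $C$ is a VGEC and every such $\pi$ reaches $\mathcal{V}$.

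Putting it together: $p_1 \cap p_2$ is (up to a null set, by Lemma~\ref{lemma_as_ec}) contained in $\bigcup_{C \in \mathcal{E}} \big(\{\infny(\pi) = C\} \cap (p_1 \cap p_2)\big)$, and each nonempty such piece forces $C$ to be a VGEC, so the whole union is contained in $\Diamond \mathcal{V}$. Taking $\prob^{\lambda}_{\mymdp, s}$ of both sides yields $\prob^{\lambda}_{\mymdp, s}[\Diamond \mathcal{V}] \geq \prob^{\lambda}_{\mymdp, s}[p_1 \cap p_2]$, as required. The only mild subtlety — the main obstacle — is bookkeeping the measurability and the "almost surely" qualifier correctly: one works up to the null set on which $\infny(\pi) \notin \mathcal{E}$, and uses that $\mathcal{E}$ is finite so the union over ECs is a finite (hence harmless) union. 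No new machinery beyond Lemma~\ref{lemma_as_ec}, Lemma~\ref{lemma_strat2}, and the definition of VGEC is needed.
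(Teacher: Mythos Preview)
Your proposal is correct and follows essentially the same approach as the paper: intersect $p_1 \cap p_2$ with the full-measure event $\{\infny(\pi) \in \mathcal{E}\}$ from Lemma~\ref{lemma_as_ec}, then observe that on this event the EC $C = \infny(\pi)$ has even maximal $p_i$-priority for both $i$, so $C$ itself witnesses condition~$\mathbf{(2_V)}$ (and $\mathbf{(1_V)}$ holds by the standing assumption), hence $C$ is a VGEC and $\pi \models \Diamond \mathcal{V}$. The only cosmetic difference is that the paper exhibits the $\mathbf{(2_V)}$-witness directly (``play uniformly at random in $\infny(\pi)$'') while you route through the sub-EC characterization from Lemma~\ref{lemma_strat2}; these are two phrasings of the same observation.
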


\begin{proof}
Let $\Pi =\{ \pi \in \out^{\mymdp}_s(\lambda) \mid \max_{s \in \infny(\pi)} p_1(s) \mbox{~and~} \max_{s \in \infny(\pi)} p_2(s) \mbox{~are even} \}$ be the set of consistent plays from $s$ that are winning for both $p_1$ and $p_2$. Let $q \in [0, 1]$ be the measure of $\Pi$ under $\lambda$. Now, let us define the set $\Pi'$ as $\Pi$ from which we remove the plays $\pi$ such that $\infny(\pi)$ is not an EC. By Lemma~\ref{lemma_as_ec}, the measure of $\Pi \setminus \Pi'$ is equal to zero, and so the measure of $\Pi'$ is also $q$. Now, let us note that any remaining play $\pi$ in $\Pi'$ visits (even infinitely many times) the EC $\infny(\pi)$, and by definition of $\Pi$, there exists a witness for $\mathbf{(2_V)}$: it suffices to play uniformly at random in $\infny(\pi)$. Since $\mathbf{(1_V)}$ is satisfied everywhere, we have that $\infny(\pi)$ is a VGEC. All the states of such VGECs belong to ${\cal V}$ by definition, and so we are done as we have proved that $\prob^{\lambda}_{\mymdp, s}[\Diamond \mathcal{V}] \geq q$.
\end{proof}

\smallskip\noindent\textbf{Limit-sure satisfaction in VGECs.} For each state in a VGEC, we claim that the parity objective $p_2$ can be satisfied with probability arbitrarily close to one, while ensuring $p_1$ surely.

\begin{lemma}
\label{lem:suf2}
Let $C \in \vgec(\mymdp)$. For all $s \in C$ and $\varepsilon \in (0, 1]$, the following property holds: $s \models \s(p_1) \land \p{> 1- \varepsilon}(p_2)$.
\end{lemma}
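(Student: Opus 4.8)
The plan is to produce, for each $\varepsilon$, an infinite-memory witness strategy $\lambda^{\varepsilon}_{C}$ that plays the memoryless strategy $\lambda_{2, C}$ of Lemma~\ref{lemma_strat2} in \emph{rounds} of increasing length, equipped with a safety net for $\s(p_1)$: at the end of each round we test whether the set $D^{\max}_{\even}(p_1)$ has been visited during that round, and if not we leave $C$ for good by switching to the uniform pure memoryless witness $\lambda_{1}$ of $\mathbf{(1_V)}$. Recall from Lemma~\ref{lemma_strat2} that $D \subseteq C$ is a sub-EC with $D^{\max}_{\even}(p_1) \neq \emptyset$ and $D^{\max}_{\even}(p_2) \neq \emptyset$ (so the largest priority occurring in $D$ is even both for $p_1$ and for $p_2$), that $\lambda_{2, C}$ is memoryless with $\prob^{\lambda_{2, C}}_{\mymdp_{\downharpoonright C}, s}[\infny(\pi) = D] = 1$ for all $s \in C$, and that $D$ is a trap for $\playerTwo$, so any play consistent with $\lambda_{2, C}$ that enters $D$ stays in $D$ forever. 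Concretely, fixing $\varepsilon$ and $s \in C$, I would use the last assertion of Lemma~\ref{lemma_strat2} with parameter $2^{-(i+2)}\varepsilon$ to pick, for each $i \in \mathbb{N}$, a length $n_i$ such that from any state of $C$, playing $\lambda_{2, C}$ for $n_i$ steps visits $D^{\max}_{\even}(p_1)$ with probability at least $1 - 2^{-(i+2)}\varepsilon$. Then $\lambda^{\varepsilon}_{C}$ runs, for $i = 0, 1, 2, \dots$: round $i$ consists in playing $\lambda_{2, C}$ for $n_i$ steps; if $D^{\max}_{\even}(p_1)$ was visited during round $i$, proceed to round $i+1$; otherwise switch to $\lambda_{1}$ and play it forever. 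Infinite memory is needed just to keep the round counter.

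Next I would check $s, \lambda^{\varepsilon}_{C} \models \s(p_1)$. On a consistent play $\pi$: either some round fails, in which case a suffix of $\pi$ is consistent with $\lambda_{1}$ started in a state of $C$ (since $\lambda_{2, C}$ never exits $C$), hence satisfies $p_1$ by $\mathbf{(1_V)}$, and so does $\pi$ by prefix-independence; or no round fails, in which case $D^{\max}_{\even}(p_1) \subseteq D$ is visited already in round $0$, so $\pi$ enters and never leaves $D$, giving $\infny(\pi) \subseteq D$, while $D^{\max}_{\even}(p_1)$ is visited infinitely often, so $\infny(\pi)$ meets $D^{\max}_{\even}(p_1)$; since every odd priority occurring in $D$ is strictly below the (even) priority of the states of $D^{\max}_{\even}(p_1)$, the maximal $p_1$-priority of $\infny(\pi)$ is even. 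Either way $\pi \models p_1$.

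For the probabilistic conjunct, the probability that round $i$ is reached and fails is at most $2^{-(i+2)}\varepsilon$, so the probability that \emph{some} round ever fails is at most $\sum_{i \geq 0} 2^{-(i+2)}\varepsilon = \varepsilon/2$. Conditioned on the complementary event, $\lambda_{2, C}$ is played forever inside $D$, whence $\infny(\pi) = D$ almost surely (by Lemma~\ref{lemma_as_ec}, or directly because $\lambda_{2, C}$ randomizes uniformly inside the strongly connected $D$) and thus $\pi \models p_2$ almost surely, the largest $p_2$-priority of $D$ being even. Hence $\prob^{\lambda^{\varepsilon}_{C}}_{\mymdp, s}[p_2] \geq 1 - \varepsilon/2 > 1 - \varepsilon$, and $\lambda^{\varepsilon}_{C}$ is the desired witness.

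I expect the main obstacle to be the interplay between the almost-sure behaviour of $\lambda_{2, C}$ and the \emph{sure} requirement $\s(p_1)$: committing to $\lambda_{2, C}$ forever does not work, because some measure-zero plays may avoid $D$ altogether, or stay inside $D$ but cycle through odd-$p_1$-priority states forever while dodging $D^{\max}_{\even}(p_1)$. The round structure removes this by forcing a visit to $D^{\max}_{\even}(p_1)$ on every play that never bails out, and the bail-out to $\lambda_{1}$ rescues $p_1$ on the rest; the price is that every bail-out permanently forfeits $p_2$, which is precisely why in a VGEC one can only guarantee $p_2$ with probability arbitrarily close to $1$ --- not equal to $1$ --- in contrast with the UGEC case of Lemma~\ref{lem:suf}, where $\playerOne$ recovers $p_1$ without ever leaving $C$.
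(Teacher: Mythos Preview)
Your proof is correct and follows the same approach as the paper: play $\lambda_{2,C}$ in rounds of increasing length and bail out to $\lambda_1$ permanently if a round fails to visit $D^{\max}_{\even}(p_1)$, the only cosmetic difference being your use of a union bound ($\sum_i 2^{-(i+2)}\varepsilon = \varepsilon/2$) in place of the paper's infinite product $\prod_i f(i) > 1-\varepsilon$. One small imprecision: that plays consistent with $\lambda_{2,C}$ stay in $D$ once entered does not follow from $D$ being a trap for $\playerTwo$ alone (that only says $\playerOne$ \emph{can} stay, not that $\lambda_{2,C}$ does), but it does follow from $\lambda_{2,C}$ being memoryless together with $\prob^{\lambda_{2,C}}_{\mymdp_{\downharpoonright C},s}[\infny(\pi)=D]=1$ --- if $\lambda_{2,C}$ could exit $D$ from some $s\in D\cap S_1$, then since $s\in\infny(\pi)$ almost surely, one would exit $D$ infinitely often almost surely, contradicting $\infny(\pi)=D$.
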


\begin{proof}
Our goal is to define a witness strategy $\lambda_{\varepsilon}$. First, for $\varepsilon \in (0,1]$, we fix an infinite sequence of strictly positive rational probabilities $f\colon \mathbb{N} \rightarrow \mathbb{Q} \cap (0,1]$ such that the product of all those probabilities is larger than $1-\varepsilon$, i.e., $\prod_{i \in \mathbb{N}} f(i) > 1- \varepsilon$. Such a sequence always exists. 
In turn, we use Lemma~\ref{lemma_strat2} to associate to $f$ a sequence of natural numbers $g\colon \mathbb{N} \rightarrow \mathbb{N}$ such that if $\lambda_{2,C}$ (the witness strategy for $\mathbf{(2_V)}$) is played for $g(i)$ steps from any $s' \in C$, then the set $D^{\max}_{\even}(p_1)$ associated to $\lambda_{2, C}$ is visited during those $g(i)$ steps with probability larger than $f(i)$.

Now, we are in position to define for all $\varepsilon \in (0,1]$, a strategy $\lambda_{\varepsilon, C}$ that enforces, from $s \in C$, the property $\s(p_1) \land \p{>1-\varepsilon}(p_2)$.
The strategy $\lambda_{\varepsilon, C}$ uses a counter $i$ whose value is initially equal to $0$. At round $i$, the strategy plays as $\lambda_{2,C}$ for $g(i)$ steps. During those last $g(i)$ steps, if the set $D^{\max}_{\even}(p_1)$ is visited, then the counter $i$ is incremented and the next round is executed. Otherwise, the strategy switches to $\lambda_{1}$ forever.

It is easy to see that, because of the definition that we have used for the sequence $g$, the probability that we ever switch to the strategy $\lambda_{1}$ is less than $\varepsilon$. So with probability larger than $1-\varepsilon$, we always play $\lambda_{2,C}$, which implies that $\lambda_{\varepsilon, C}$ ensures $\p{>1-\varepsilon}(p_2)$. Also, on all consistent plays in which strategy $\lambda_{2,C}$ is played forever, we know that the maximal priority seen infinitely often for $p_1$ is even (by definition of $D^{\max}_{\even}(p_1)$). On the other plays, as we play $\lambda_{1}$, we also have that the maximal priority seen infinitely often for $p_1$ is even. So, we can also conclude that $\lambda_{\varepsilon, C}$ ensures $\s(p_1)$, which concludes our proof.
\end{proof}

\subsection{The strict threshold case}

We first solve the strict threshold case with an approach that relies on VGECs. Similarly to what we did in Section~\ref{sec:asp}, we will establish a reduction of the decision problem for $s_0\models \s(p_1) \wedge \p{> c}(p_2)$ to a reachability problem toward the set $\mathcal{V}$, i.e., the union of VGECs. The actual algorithm will be detailed in Section~\ref{sec:threshold_algo}.

The first lemma gives a sufficient condition under which the property is satisfied. Note that its proof is constructive and tells us how to construct witness strategies. 

\begin{lemma}
\label{lemma:thres-suf-strict}
The following holds: $s_0 \models \s(p_1) \wedge \p{> c}(\Diamond \mathcal{V}) \implies s_0\models \s(p_1) \wedge \p{> c}(p_2)$.
\end{lemma}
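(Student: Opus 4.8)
The plan is to combine the VGEC limit-sure satisfaction result (Lemma~\ref{lem:suf2}) with the reachability witness strategy, using a careful budgeting of the error $\varepsilon$ across the (finitely many) VGECs and across the reachability phase. First I would take a witness strategy $\lambda_{\mathcal{V}}$ for $s_0 \models \s(p_1) \wedge \p{> c}(\Diamond \mathcal{V})$; call $c' = \prob^{\lambda_{\mathcal{V}}}_{\mymdp, s_0}[\Diamond \mathcal{V}] > c$, so there is a slack $\eta = c' - c > 0$. Since the parity objective $p_1$ is surely enforced along $\lambda_{\mathcal{V}}$, all states reachable under $\lambda_{\mathcal{V}}$ satisfy $\mathbf{(1_V)}$, so the assumption that all states satisfy $\mathbf{(1_V)}$ is consistent with this setup. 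The idea is then: play $\lambda_{\mathcal{V}}$ until a VGEC $C$ is reached, and at that moment switch to the strategy $\lambda_{\varepsilon, C}$ of Lemma~\ref{lem:suf2} for an appropriately small $\varepsilon$.

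The key steps, in order, are: \textbf{(1)} Fix $\varepsilon < \eta$ (say $\varepsilon = \eta/2$). \textbf{(2)} Define the global strategy $\lambda$ that plays as $\lambda_{\mathcal{V}}$ until the first time a state of $\mathcal{V}$ is visited, identifies a VGEC $C \in \vgec(\mymdp)$ containing that state, and from then on plays $\lambda_{\varepsilon, C}$ (which is well-defined starting from any state of $C$ by Lemma~\ref{lem:suf2}). Here one should be slightly careful: $\mathcal{V}$ is a union of VGECs and the first visited state may lie in several of them, but any fixed choice works, so we pick one canonically. \textbf{(3)} Argue $s_0, \lambda \models \s(p_1)$: before reaching $\mathcal{V}$, $\lambda_{\mathcal{V}}$ ensures $\s(p_1)$; after switching to $\lambda_{\varepsilon, C}$, Lemma~\ref{lem:suf2} guarantees $\s(p_1)$ as well; since parity is prefix-independent, every consistent play of $\lambda$ satisfies $p_1$. \textbf{(4)} Bound $\prob^{\lambda}_{\mymdp, s_0}[p_2]$ from below. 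Conditioned on reaching $\mathcal{V}$ (probability $c' = c + \eta$), each continuation satisfies $p_2$ with probability $> 1 - \varepsilon$ by Lemma~\ref{lem:suf2}; hence $\prob^{\lambda}_{\mymdp, s_0}[p_2] \geq c' \cdot (1 - \varepsilon) = (c + \eta)(1 - \varepsilon)$. Choosing $\varepsilon = \eta/2$ and noting $(c+\eta)(1-\eta/2) = c + \eta - \tfrac{\eta}{2}(c+\eta) = c + \tfrac{\eta}{2}(1 - c) + \tfrac{\eta}{2}(1 - \eta) \cdot$ (some manipulation) $> c$ since $c < 1$ and $\eta \leq 1 - c$; a cleaner route is simply to pick $\varepsilon$ small enough that $(c+\eta)(1-\varepsilon) > c$, which is possible because $c + \eta > c$ and the left side tends to $c + \eta$ as $\varepsilon \to 0$. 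This yields $s_0, \lambda \models \p{> c}(p_2)$.

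The main obstacle I anticipate is step~\textbf{(4)}, specifically the interplay between the reachability measure and the limit-sure guarantee. One must be precise that the event "reach $\mathcal{V}$" and the subsequent "$p_2$ holds from the entry point" decompose the measure correctly: since $\lambda_{\varepsilon, C}$ depends only on the current state (it is initialized at the entry state of $C$), one can condition on the prefix up to the first visit of $\mathcal{V}$ and apply Lemma~\ref{lem:suf2} uniformly over all such entry states. There is a subtlety that the first-visit prefix may end in different VGECs, but since all of them give the same $> 1 - \varepsilon$ bound, the total probability of $p_2$ is at least $\prob^{\lambda}_{\mymdp, s_0}[\Diamond \mathcal{V}] \cdot (1 - \varepsilon)$ by a straightforward conditioning/summation over the disjoint cylinder sets defined by first-visit prefixes. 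A minor point to check is that $\mathcal{V}$ is actually reached (not just "reachable") along $\lambda$ with the claimed probability — but that is exactly the hypothesis $s_0 \models \s(p_1) \wedge \p{> c}(\Diamond \mathcal{V})$, so this is immediate. The resulting witness strategy uses infinite memory (inherited from $\lambda_{\varepsilon, C}$), which is consistent with the general necessity of infinite memory in this setting.
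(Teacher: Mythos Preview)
Your proposal is correct and shares the same skeleton as the paper's proof: reach $\mathcal{V}$ with probability $c'>c$, then switch to the VGEC strategy $\lambda_{\varepsilon,C}$ of Lemma~\ref{lem:suf2}, with $\varepsilon$ chosen so that $c'(1-\varepsilon)>c$. The one genuine difference is in the reachability phase. You play a witness $\lambda_{\mathcal{V}}$ for the \emph{full} hypothesis $\s(p_1)\wedge\p{>c}(\Diamond\mathcal{V})$ until $\mathcal{V}$ is first hit; since $\lambda_{\mathcal{V}}$ already enforces $\s(p_1)$, plays that never reach $\mathcal{V}$ are automatically handled and no cutoff is needed. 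The paper instead takes the pure memoryless \emph{optimal-reachability} strategy $\lambda_{\Diamond\mathcal{V}}$ from Lemma~\ref{lemma_opti_reach} (which need not ensure $p_1$), plays it for a fixed finite number $r$ of steps sufficient to push the reachability probability above $\frac{c}{1-\varepsilon}$, and then falls back to the sure-$p_1$ strategy $\lambda_1$ if $\mathcal{V}$ was not reached. Your route is a bit cleaner, avoiding both the finite-horizon argument and the separate fallback; the paper's route keeps all building blocks memoryless and explicit, and does not need to invoke a witness for the conjunction.
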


\begin{proof}
Let $\lambda_{\Diamond {\cal V}}$ be a uniform pure memoryless strategy that ensures to reach ${\cal V}$ with probability $q > c$ from $s_0$ (it exists by Lemma~\ref{lemma_opti_reach}). Recall that $\lambda_{1}$ is a uniform pure memoryless strategy that enforces $\s(p_1)$ from any state in $\mymdp$. Let $\varepsilon$ be such that $q > \frac{c}{1-\varepsilon}$, and for all $C \in \vgec(\mymdp)$, let $\lambda_{\varepsilon,C}$ be a strategy that ensures $\s(p_1) \land \p{>1-\varepsilon}(p_2)$ when played from any state in $C$, as defined in the proof of Lemma~\ref{lem:suf2}.

We construct the strategy $\lambda$ that witnesses the desired property starting from the elements defined above. Let $r$ be a number of steps sufficient to ensure that ${\cal V}$ is reached with a probability larger than $\frac{c}{1-\varepsilon}$ when playing $\lambda_{\Diamond {\cal V}}$ from $s_0$: it exists by Lemma~\ref{lemma_opti_reach}. Strategy $\lambda$ plays as $\lambda_{\Diamond {\cal V}}$ for $r$ steps. If a VGEC $C$ of $\mymdp$ is reached during those $r$ steps, $\lambda$ immediately switches to $\lambda_{\varepsilon,C}$. Otherwise, after the $r$ steps, $\lambda$ switches to $\lambda_1$ forever. It is easy to verify that, when $\lambda$ is played, the probability that $p_2$ holds is larger than $\frac{c}{1-\varepsilon}\cdot (1-\varepsilon)$, and so it is larger than~$c$. Furthermore, $p_1$ holds on all consistent plays, which proves that $s_0, \lambda \models \s(p_1) \wedge \p{> c}(p_2)$ holds.
\end{proof}

This second lemma gives a necessary condition for the property to hold.

\begin{lemma}
\label{lemma:thres-suf}
The following holds: $s_0\models \s(p_1) \wedge \p{> c}(p_2) \implies s_0 \models \s(p_1) \wedge \p{> c}(\Diamond \mathcal{V})$.
\end{lemma}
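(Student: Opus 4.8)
The plan is to prove the contrapositive flavour directly: starting from a witness strategy $\lambda$ for $s_0 \models \s(p_1) \wedge \p{> c}(p_2)$, I would extract a strategy witnessing $s_0 \models \s(p_1) \wedge \p{> c}(\Diamond \mathcal{V})$. The key observation is Lemma~\ref{lem:prel-ls}, which already tells us that for \emph{any} strategy $\lambda$, $\prob^{\lambda}_{\mymdp, s_0}[\Diamond \mathcal{V}] \geq \prob^{\lambda}_{\mymdp, s_0}[p_1 \cap p_2]$. So the bulk of the work is to argue that $\prob^{\lambda}_{\mymdp, s_0}[p_1 \cap p_2] > c$ whenever $\lambda$ witnesses $\s(p_1) \wedge \p{> c}(p_2)$. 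Since $\lambda$ ensures $p_1$ \emph{surely} (i.e., on \emph{all} consistent plays, including those of probability zero), every play in the support of $\prob^{\lambda}_{\mymdp, s_0}$ satisfies $p_1$, hence $\prob^{\lambda}_{\mymdp, s_0}[p_1] = 1$. Combining with $\prob^{\lambda}_{\mymdp, s_0}[p_2] > c$ gives $\prob^{\lambda}_{\mymdp, s_0}[p_1 \cap p_2] = \prob^{\lambda}_{\mymdp, s_0}[p_2] > c$.

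With that in hand, the chain is: $\prob^{\lambda}_{\mymdp, s_0}[\Diamond \mathcal{V}] \geq \prob^{\lambda}_{\mymdp, s_0}[p_1 \cap p_2] > c$, so $\lambda$ itself already witnesses $\p{> c}(\Diamond \mathcal{V})$, and it still witnesses $\s(p_1)$, so $s_0, \lambda \models \s(p_1) \wedge \p{> c}(\Diamond \mathcal{V})$ and we are done. Concretely I would: (i) fix a witness $\lambda$ for $s_0 \models \s(p_1) \wedge \p{> c}(p_2)$; (ii) note $\out^{\mymdp}_{s_0}(\lambda) \subseteq p_1$ by the definition of $\s$, hence $\prob^{\lambda}_{\mymdp, s_0}[p_1] = 1$ since the probability is carried entirely by consistent plays; (iii) deduce $\prob^{\lambda}_{\mymdp, s_0}[p_1 \cap p_2] = \prob^{\lambda}_{\mymdp, s_0}[p_2] > c$; (iv) invoke Lemma~\ref{lem:prel-ls} to get $\prob^{\lambda}_{\mymdp, s_0}[\Diamond \mathcal{V}] > c$; (v) conclude that the same $\lambda$ is the desired witness.

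The main (and essentially only) subtlety worth spelling out is step (ii): one must be careful that $\s(p_1)$ really means $p_1$ holds on \emph{all} plays consistent with $\lambda$, which covers the full support of the induced measure, so that the event $p_1$ has probability $1$. This is immediate from the definitions in the preliminaries, but it is the conceptual hinge that makes the ``surely'' constraint collapse the conjunction $p_1 \cap p_2$ down to $p_2$ measure-theoretically. Everything else is a direct application of Lemma~\ref{lem:prel-ls} and monotonicity of probability, so there is no real obstacle; the statement follows in a few lines once the measure of $p_1$ under $\lambda$ is identified as $1$.
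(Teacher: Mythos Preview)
Your proof is correct and, in fact, more direct than the paper's. Both arguments hinge on Lemma~\ref{lem:prel-ls}: from a witness $\lambda$ for $s_0 \models \s(p_1) \wedge \p{> c}(p_2)$, one gets $\prob^{\lambda}_{\mymdp,s_0}[p_1 \cap p_2] = \prob^{\lambda}_{\mymdp,s_0}[p_2] > c$ (since $\s(p_1)$ forces $\prob^{\lambda}_{\mymdp,s_0}[p_1]=1$), and hence $\prob^{\lambda}_{\mymdp,s_0}[\Diamond\mathcal{V}] > c$. You then simply observe that $\lambda$ itself already witnesses $s_0 \models \s(p_1) \wedge \p{> c}(\Diamond\mathcal{V})$, and stop. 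The paper instead uses this inequality only to conclude that the \emph{optimal} reachability value toward $\mathcal{V}$ exceeds $c$, then invokes Lemma~\ref{lemma_opti_reach} to obtain a pure memoryless optimal reachability strategy $\lambda_{\Diamond\mathcal{V}}$, plays it for a bounded number $r$ of steps, and switches to the memoryless parity strategy $\lambda_1$; prefix-independence of parity then gives $\s(p_1)$. What the paper's detour buys is an explicit \emph{finite-memory} witness for the right-hand side, whereas your $\lambda$ may have infinite memory; for the bare implication stated in the lemma this is irrelevant, but it aligns with the constructive spirit of the surrounding algorithmic results.
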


\begin{proof}
First, recall that $\lambda_{1}$ is a uniform pure memoryless strategy that enforces $\s(p_1)$ from any state in $\mymdp$. Second, as $s_0 \models \s(p_1) \wedge \p{> c}(p_2)$, by Lemma~\ref{lem:prel-ls} and Lemma~\ref{lemma_opti_reach}, there exists a uniform pure memoryless strategy $\lambda_{\Diamond {\cal V}}$ that reaches $\mathcal{V}$ with probability $q > c$ from $s_0$.

We construct a strategy $\lambda$ that witnesses $s_0 \models \s(p_1) \wedge \p{> c}(\Diamond {\cal V})$ starting from $\lambda_{1}$ and $\lambda_{\Diamond {\cal V}}$ above. Let $r$ be a number of steps such that, if $\lambda_{\Diamond {\cal V}}$ is played for $r$ steps, then the probability to reach ${\cal V}$ from $s_0$ is larger than $c$: it exists by Lemma~\ref{lemma_opti_reach}. 
Strategy $\lambda$ starts from $s_0$ by playing as $\lambda_{\Diamond {\cal V}}$. It stops as soon as $\mathcal{V}$ is reached or if $r$ steps have been played and $\mathcal{V}$ has not been reached: in both cases, it switches to $\lambda_{1}$ forever. It is easy to check that $\lambda$ reaches $\mathcal{V}$ with probability larger than $c$ and it enforces $\s(p_1)$ by definition of $\lambda_1$ and prefix-independence of the parity objective. Hence, it holds that $s_0, \lambda \models \s(p_1) \wedge \p{> c}(\Diamond \mathcal{V})$.
\end{proof}

\subsection{The non-strict threshold case}

First, we note that, as we have solved the strict case above, the only interesting remaining case is when $\playerOne$, while surely forcing $p_1$, can force $p_2$ with probability $c$, but no more. The following two lemmas present a solution to this case. The main conceptual tool here is UGECs. As for the strict case, the corresponding algorithm will be detailed in Section~\ref{sec:threshold_algo}.

The first lemma gives a sufficient condition. Its proof is constructive: it explains how to build witness strategies. Recall that $\mathcal{U} = {\displaystyle \cup_{U \in \ugec(\mymdp)} U}$ is the set of states that belong to a $\ugec$ in $\mymdp$.

\begin{lemma}
\label{lemma:thres-suf-non-strict}
The following holds: $s_0 \models \s(p_1) \wedge \p{\geq c}(\Diamond \mathcal{U}) \implies s_0\models \s(p_1) \wedge \p{\geq c}(p_2)$.
\end{lemma}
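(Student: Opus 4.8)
The plan is to mimic the structure of Lemma~\ref{lemma:thres-suf-strict}, but replacing the "buy ourselves $\varepsilon$ slack" argument --- which only gives a strict inequality --- with an argument that lands exactly on the threshold $c$, and this is precisely why we must use UGECs rather than VGECs. Assume $s_0 \models \s(p_1) \wedge \p{\geq c}(\Diamond \mathcal{U})$, witnessed by a finite-memory strategy $\lambda_{\Diamond \mathcal{U}}$ (finite memory being available by Theorem~\ref{thm:reach-prob-one}, since $\s(p_1) \wedge \p{\geq c}(\Diamond \mathcal{U})$ is an instance of the reachability-under-parity problem of Section~\ref{sec:reach}, possibly after the reduction to the almost-sure case). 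Recall also $\lambda_1$, the uniform pure memoryless witness for $\s(p_1)$, and for each $C \in \ugec(\mymdp)$, the infinite-memory strategy $\lambda_C$ of Definition~\ref{def:strat_ugec}, which by Lemma~\ref{lem:suf} witnesses $s, \lambda_C \models \s(p_1) \wedge \as(p_2)$ from every $s \in C$.

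The global witness strategy $\lambda$ is then: play $\lambda_{\Diamond \mathcal{U}}$ until a UGEC $C$ is entered, at which point switch to $\lambda_C$ forever; if $\mathcal{U}$ is never reached, keep playing $\lambda_{\Diamond \mathcal{U}}$ (it already ensures $\s(p_1)$ as a byproduct of the reachability-under-parity construction). This is exactly the global strategy of Definition in Section~\ref{subsec:sufficient} --- indeed it is the same construction as in the proof of Lemma~\ref{lemma_leftright}. I would then verify the two conjuncts. For $\s(p_1)$: on any consistent play $\pi$, either a UGEC $C$ is eventually reached and from then on $\lambda_C$ ensures $\s(p_1)$, or $\lambda_{\Diamond\mathcal{U}}$ is played forever and it too ensures $\s(p_1)$; in both cases prefix-independence of parity gives $\pi \models p_1$. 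For $\p{\geq c}(p_2)$: with probability at least $c$, $\lambda_{\Diamond \mathcal{U}}$ reaches some UGEC $C$; conditioned on reaching $C$, $\lambda_C$ ensures $p_2$ with probability one; hence $\prob^{\lambda}_{\mymdp,s_0}[p_2] \geq \prob^{\lambda}_{\mymdp,s_0}[\Diamond \mathcal{U}] \cdot 1 \geq c$. Here I must be slightly careful about the measure-theoretic bookkeeping: because $\lambda_C$ is switched in upon first entry into $\mathcal{U}$, the event "$\Diamond \mathcal{U}$ then $p_2$" is at least as likely as "$\Diamond \mathcal{U}$", which is what the displayed inequality records, but one should phrase this via conditioning on the (measurable) set of prefixes that first hit $\mathcal{U}$.

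The main obstacle --- and the real reason the non-strict case needs UGECs and not just VGECs --- is that in the strict case we could afford to fail with probability $\varepsilon$ (switching to $\lambda_1$ when $D^{\max}_{\even}(p_1)$ was not seen in time), losing a little $p_2$-mass but preserving $\s(p_1)$; a VGEC only affords $\p{>1-\varepsilon}(p_2)$, never $\as(p_2)$, so no choice of $\varepsilon$ lets us hit $c$ exactly from the "just barely $c$" reachability guarantee. In a UGEC, by contrast, $\lambda_C$ achieves $p_2$ \emph{almost surely} while keeping $p_1$ sure (Lemma~\ref{lem:suf}), so the multiplicative loss is exactly $1$ and $c \cdot 1 = c$ goes through. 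The one genuinely delicate point to get right in the write-up is therefore the justification that $\lambda_C$ inside the UGEC genuinely delivers almost-sure $p_2$ \emph{and} sure $p_1$ simultaneously despite requiring infinite memory --- but that is exactly the content of Lemma~\ref{lem:suf}, which we may invoke directly. Everything else is a routine reprise of the sufficiency direction of Lemma~\ref{lemma_asp_to_asr}.
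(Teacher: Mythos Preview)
Your proposal is correct and follows essentially the same approach as the paper: take a finite-memory witness $\lambda_{\Diamond\mathcal{U}}$ for $\s(p_1)\wedge\p{\geq c}(\Diamond\mathcal{U})$, switch to $\lambda_C$ upon first entering a UGEC $C$, and verify both conjuncts via prefix-independence and the fact that $\lambda_C$ delivers $\as(p_2)$ (Lemma~\ref{lem:suf}). Two minor nits: the correct reference for the finite-memory witness is Theorem~\ref{theorem_reach_p} (which handles the $\p{\geq c}$ reachability case directly), not Theorem~\ref{thm:reach-prob-one}; and $\lambda_1$ is never actually used in your construction, so you can drop it.
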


\begin{proof}
Let $\lambda_{\Diamond {\cal U}}$ be a finite-memory strategy that witnesses property $s_0 \models \s(p_1) \wedge \p{\geq c}(\Diamond {\cal U})$. It exists by Theorem~\ref{theorem_reach_p}. For all $C \in \ugec(\mymdp)$,
let $\lambda_C$ be an infinite-memory witness for $s \models \s(p_1) \wedge \as(p_2)$, for all $s \in C$: it exists by Lemma~\ref{lem:suf}.
From those strategies, we define the strategy $\lambda$ as follows. From $s_0$, $\lambda$ plays as $\lambda_{\Diamond {\cal U}}$ up to reaching a UGEC $C$ of $\mymdp$, or forever if ${\cal U}$ is never reached. If some UGEC $C$ is reached, then it switches to $\lambda_C$ forever.

We claim that $\lambda$ is a witness for $s_0\models \s(p_1) \wedge \p{\geq c}(p_2)$. First, $\s(p_1)$ clearly holds as strategy $\lambda_{\Diamond \mathcal{U}}$ and all strategies $\lambda_C$ ensure it. Second, $\lambda_{\Diamond {\cal U}}$ ensures to reach ${\cal U}$ with probability at least~$c$, and when some UGEC $C \subseteq {\cal U}$ is reached, the strategy  $\lambda_{C}$ ensures $p_2$ with probability one. Thus, $\lambda$ ensures $p_2$ with probability at least $c$ from $s_0$, and we are done.
\end{proof}

We now turn to a lemma that gives a necessary condition, keeping in mind that we are interested in the case where $\playerOne$ cannot ensure probability strictly larger than $c$.

\begin{lemma}
\label{lemma:thres-nec-non-strict}
The following holds:
\[
(s_0\models \s(p_1) \wedge \p{\geq c}(p_2)) \wedge (s_0\not\models \s(p_1) \wedge \p{> c}(p_2)) \implies s_0 \models \s(p_1) \wedge \p{\geq c}(\Diamond \mathcal{U}).
\]
\end{lemma}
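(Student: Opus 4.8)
The plan is to start from a witness strategy $\lambda$ for $s_0\models \s(p_1) \wedge \p{\geq c}(p_2)$ and extract from it a strategy witnessing $s_0 \models \s(p_1) \wedge \p{\geq c}(\Diamond \mathcal{U})$. By Lemma~\ref{lem:prel-ls} applied to $\lambda$, we already know that $\prob^{\lambda}_{\mymdp, s_0}[\Diamond \mathcal{V}] \geq \prob^{\lambda}_{\mymdp, s_0}[p_1 \cap p_2] = c$ (using that $\lambda$ ensures $\s(p_1)$, so the second parity objective coincides with satisfying both). So reaching $\mathcal{V}$ with probability at least $c$ is for free; the work is to push this up to $\mathcal{U} \subseteq \mathcal{V}$. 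This is exactly where the second hypothesis $s_0\not\models \s(p_1) \wedge \p{> c}(p_2)$ must be used: it forbids any ``slack'' that would let $\playerOne$ waste probability mass on VGECs that are not UGECs.

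The key step is to analyze, under $\lambda$, how probability mass is distributed among the end-components that are visited infinitely often. By Lemma~\ref{lemma_as_ec}, almost every consistent play $\pi$ has $\infny(\pi)$ equal to some EC. Partition the plays according to whether $\infny(\pi)$ is a UGEC, a VGEC that is not a UGEC, or neither, and let $\beta$ be the measure of the middle class. I claim $\beta = 0$. Suppose not: then on a positive-measure set of plays, the play settles in a VGEC $C$ with $\mathbf{(1_V)}$ but where $\mathbf{(2_U)}$ fails \emph{relative to the sub-MDP where one cannot leave $C$} — equivalently, $C$ is not a UGEC only because $\playerOne$ must leave $C$ to ensure $\s(p_1)$; yet having to leave $C$ infinitely often contradicts $\infny(\pi)=C$. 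One must be careful here: the obstruction to being a UGEC is condition $\mathbf{(1_U)}$, and by Lemma~\ref{lemma_strat2}(1) condition $\mathbf{(2_U)}$ is equivalent to the combinatorial existence of a sub-EC $D$ with $D^{\max}_{\even}(p_1), D^{\max}_{\even}(p_2) \neq \emptyset$; so a VGEC satisfying $\mathbf{(2_V)}=\mathbf{(2_U)}$ fails to be a UGEC exactly when $\mathbf{(1_U)}$ fails, i.e., from some state of $C$ player~1 cannot surely ensure $p_1$ \emph{without leaving $C$}. On the positive-measure set of plays eventually trapped in such a $C$, the strategy $\lambda[\rho]$ (initialized after a suitable prefix $\rho$ landing in $C$ and never leaving it again) would be a witness, inside $\mymdp_{\downharpoonright C}$, for $\s(p_1) \wedge \as(p_2)$; by Lemma~\ref{lem:ugec} this forces $\ugec(\mymdp_{\downharpoonright C}) \neq \emptyset$, so $C$ contains a sub-UGEC $C'$, and modifying $\lambda$ on that branch to reach $C'$ and play $\lambda_{C'}$ would \emph{strictly increase} the probability of $p_2$ (it would add $\beta \cdot 1$ where $\lambda$ was not already getting full weight on that part), yielding $s_0 \models \s(p_1) \wedge \p{> c}(p_2)$ and contradicting the second hypothesis. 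Hence $\beta = 0$.

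Consequently, under $\lambda$ the measure of plays whose limit EC is a UGEC is at least $\prob^{\lambda}_{\mymdp,s_0}[p_1 \cap p_2] = c$ — the $p_1 \cap p_2$-winning plays split (up to measure zero) between UGEC-limits and bad-class limits, and the bad class has measure zero — so $\prob^{\lambda}_{\mymdp, s_0}[\Diamond \mathcal{U}] \geq c$. To finish, since $\lambda$ reaches $\mathcal{U}$ with probability at least $c$ and ensures $\s(p_1)$, we build the desired witness exactly as in the proof of Lemma~\ref{lemma:thres-suf}: play $\lambda$ until $\mathcal{U}$ is reached and then switch to $\lambda_1$ forever; prefix-independence of parity gives $\s(p_1)$, and $\Diamond \mathcal{U}$ has probability at least $c$. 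Thus $s_0 \models \s(p_1) \wedge \p{\geq c}(\Diamond \mathcal{U})$.

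The main obstacle is the $\beta = 0$ argument: making precise that ``$\playerOne$ has to leave $C$ infinitely often'' is incompatible with $\infny(\pi) = C$, and correctly invoking Lemma~\ref{lem:ugec} at the level of the sub-MDP $\mymdp_{\downharpoonright C}$ to produce a sub-UGEC, then verifying that rerouting this portion of the strategy's execution tree genuinely gains positive probability of $p_2$ (rather than merely reshuffling mass already counted). The non-strictness of the threshold and the extra hypothesis are exactly what make this gain contradictory, so the bookkeeping of which plays already satisfy $p_2$ versus which are improved must be done carefully.
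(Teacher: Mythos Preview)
Your approach differs substantially from the paper's, and the argument for $\beta=0$ has a genuine gap. The paper does \emph{not} attempt to show that $\infny(\pi)$ is almost never a VGEC-that-is-not-a-UGEC. Instead, it works at the level of prefixes: for any prefix $\rho$ of a winning play with $\mathtt{Last}(\rho)$ in some VGEC $C$, it shows via the improvement argument based on Lemma~\ref{lem:suf2} that the initialized strategy $\lambda[\rho]$ is already a witness for $\s(p_1)\wedge\as(p_2)$ from $\mathtt{Last}(\rho)$ \emph{in the full MDP $\mymdp$}. One then applies Lemma~\ref{lemma_rightleft} at $\mathtt{Last}(\rho)$ to get $\s(p_1)\wedge\as(\Diamond\mathcal{U})$ from there, and since the winning plays (measure $c$) all reach $\mathcal{V}$, the conclusion follows.

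Your $\beta=0$ argument breaks at two points. First, the assertion that ``$\lambda[\rho]$ is a witness, inside $\mymdp_{\downharpoonright C}$, for $\s(p_1)\wedge\as(p_2)$'' is not justified: the fact that one particular play $\pi$ eventually stays in $C$ does not mean all plays consistent with $\lambda[\rho]$ do, so $\lambda[\rho]$ need not be a strategy on $\mymdp_{\downharpoonright C}$, and Lemma~\ref{lem:ugec} cannot be invoked there. Second, and more fatally, your improvement step assumes that rerouting the mass $\beta$ to a sub-UGEC \emph{strictly} increases the probability of $p_2$. But when $C$ itself satisfies $C^{\max}_{\even}(p_2)\neq\emptyset$ (which is perfectly compatible with $C$ being a VGEC-not-UGEC, since the failure is in condition $\mathbf{(1_U)}$), every play with $\infny(\pi)=C$ already satisfies $p_2$; rerouting that mass yields no gain for $p_2$ and hence no contradiction with the second hypothesis. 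The paper's improvement argument sidesteps this because it compares the conditional probability of $p_2$ under $\lambda[\rho]$ against the $1-\varepsilon$ achievable via Lemma~\ref{lem:suf2}, rather than comparing the contribution of one fixed limit-EC.
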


\begin{proof}
Let $\lambda$ be a strategy that witnesses $s_0\models \s(p_1) \wedge \p{\geq c}(p_2)$. As $s_0\not\models \s(p_1) \wedge \p{> c}(p_2)$, the measure of $\Pi=\{ \pi \in \out^{\mymdp}_{s_0}(\lambda) \mid \max_{s \in \infny(\pi)} p_1(s) \mbox{~and~} \max_{s \in \infny(\pi)} p_2(s) \mbox{~are even}  \}$ is exactly equal to $c$. We define $\Pi'$ as $\Pi$ from which we remove all play $\pi$ such that $\infny(\pi)$ is not an EC. By Lemma~\ref{lemma_as_ec}, the measure of $\Pi'$ is equal to the one of $\Pi$. As all plays in $\Pi'$ satisfy both $p_1$ and $p_2$ and are such that $\infny(\pi)$ is an EC, we conclude that they all reach a VGEC (as conditions $\mathbf{(1_V)}$ and $\mathbf{(2_V)}$ are satisfied in $\infny(\pi)$).

Let $\pi \in \Pi'$ and let $\rho$ be a prefix of $\pi$ such that $\mathtt{Last}(\rho) \in C$ for some $C \in \vgec(\mymdp)$. We claim that $\mathtt{Last}(\rho), \lambda[\rho] \models \s(p_1) \wedge \as(p_2)$. We prove it by contradiction. Assume that $\lambda[\rho]$ only enforces $p_2$ with probability $1 - \varepsilon$ for some $\varepsilon > 0$. As $\mathtt{Last}(\rho)$ belongs to the VGEC~$C$, we invoke Lemma~\ref{lem:suf2} to build a strategy $\lambda_{\rho}$ such that $\mathtt{Last}(\rho), \lambda_{\rho} \models \s(p_1) \wedge \p{> 1 - \varepsilon'}(p_2)$ for $\varepsilon' < \varepsilon$. Now, to obtain the contradiction, we construct a new strategy $\lambda'$ from $s_0$ that plays as $\lambda$ but switches to $\lambda_{\rho}$ after prefix $\rho$. We have that $s_0, \lambda' \models \s(p_1) \wedge \p{> c}(p_2)$, which contradicts our hypothesis. Hence, we conclude that $\mathtt{Last}(\rho), \lambda[\rho] \models \s(p_1) \wedge \as(p_2)$ holds.

Finally, applying Lemma~\ref{lemma_rightleft} to $\mathtt{Last}(\rho)$, we know that $\mathtt{Last}(\rho) \models \s(p_1) \wedge \as(\Diamond \mathcal{U})$. As the measure of $\Pi'$ is equal to $c$ and this reasoning holds for any $\pi \in \Pi'$, we conclude that $s_0, \lambda \models \s(p_1) \wedge \p{\geq c}(\Diamond \mathcal{U})$ and we are done.
\end{proof}

\subsection{Algorithm}
\label{sec:threshold_algo}

Based on the reductions shown above, we can now establish an algorithm and complexity results for the threshold problem.

\begin{theorem}
\label{theorem_thresh_p}
Given an MDP $\mymdp = (G = (S,E), S_1, S_2, \delta)$, a state $s_0 \in S$, and two priority functions $p_i\colon S \rightarrow \{1, \dotso, d\}$, $i \in \{1, 2\}$, it can be decided in ${\sf NP} \cap {\sf coNP}$ if $s_0 \models \s(p_1) \wedge \p{\sim c}(p_2)$ for $\sim\, \in \{>, \geq\}$ and $c \in \mathbb{Q} \cap [0, 1)$. If the answer is $\yes$, then there exists an infinite-memory witness strategy, and infinite memory is in general necessary. This decision problem is at least as hard as solving parity games.
\end{theorem}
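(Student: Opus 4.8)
The plan is to prove Theorem~\ref{theorem_thresh_p} by combining the reductions established in Lemmas~\ref{lemma:thres-suf-strict}--\ref{lemma:thres-nec-non-strict} with the algorithmic results from Sections~\ref{sec:reach} and~\ref{sec:asp}. First I would dispose of a preprocessing step: compute the set of states satisfying $\mathbf{(1_V)}$ (equivalently, from which $\playerOne$ can enforce $\s(p_1)$ in $\mymdp$ seen as a parity game), which is in ${\sf NP} \cap {\sf coNP}$ by~\cite{DBLP:journals/ipl/Jurdzinski98}; if $s_0$ is not among them the answer is $\no$, otherwise restrict $\mymdp$ to those states (a trap for $\playerTwo$). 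After this, $\mathbf{(1_V)}$ holds everywhere as assumed before Lemma~\ref{lem:prel-ls}.

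For the strict case $\sim\,=\,>$: compute $\mathcal{V}$, the union of VGECs. By Definition~\ref{def:vgec}, a maximal EC is a VGEC iff it satisfies $\mathbf{(2_V)}$ (since $\mathbf{(1_V)}$ holds everywhere after preprocessing), and by Lemma~\ref{lemma_strat2} this is checkable in polynomial time per EC; moreover, as noted in the proof of Lemma~\ref{lemma_strat2}, failure of $\mathbf{(2_V)}$ propagates to sub-ECs, so $\mathcal{V}$ is exactly the union of the maximal ECs satisfying $\mathbf{(2_V)}$, of which there are polynomially many and which are computable in polynomial time. Then solve $s_0 \models \s(p_1) \wedge \p{> c}(\Diamond \mathcal{V})$ using Theorem~\ref{theorem_reach_p} (in ${\sf NP} \cap {\sf coNP}$), and answer accordingly: correctness is exactly Lemmas~\ref{lemma:thres-suf-strict} and~\ref{lemma:thres-suf}.

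For the non-strict case $\sim\,=\,\geq$: first run the strict-case procedure for the same threshold $c$; if it answers $\yes$ then $s_0 \models \s(p_1) \wedge \p{> c}(p_2)$, which a fortiori gives $s_0 \models \s(p_1) \wedge \p{\geq c}(p_2)$, so answer $\yes$. Otherwise $s_0 \not\models \s(p_1)\wedge\p{>c}(p_2)$, and by Lemma~\ref{lemma:thres-nec-non-strict} together with Lemma~\ref{lemma:thres-suf-non-strict} the problem is equivalent to $s_0 \models \s(p_1) \wedge \p{\geq c}(\Diamond \mathcal{U})$, where $\mathcal{U}$ is the union of UGECs. This set is computed exactly as in the proof of Theorem~\ref{theorem_asp} (maximal super-good ECs from~\cite{DBLP:conf/concur/AlmagorKV16}, intersected with those satisfying $\mathbf{(2_U)}$ via Lemma~\ref{lemma_strat2}), in ${\sf NP} \cap {\sf coNP}$, and the reachability query is resolved by Theorem~\ref{theorem_reach_p} (case $\geq$), again in ${\sf NP} \cap {\sf coNP}$. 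Each stage is a polynomial-time computation with calls to ${\sf NP} \cap {\sf coNP}$ oracles, so the whole procedure lies in ${\sf P}^{{\sf NP} \cap {\sf coNP}} = {\sf NP} \cap {\sf coNP}$~\cite{Bra79}.

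For the remaining claims: the witness strategy is obtained by composing the finite-memory reachability witness for $\Diamond\mathcal{V}$ (resp.~$\Diamond\mathcal{U}$) with the in-VGEC strategy $\lambda_{\varepsilon,C}$ from Lemma~\ref{lem:suf2} (resp.~the in-UGEC strategy $\lambda_C$ from Lemma~\ref{lem:suf}), which needs infinite memory; conversely, infinite memory is already necessary in the special case $c=0$ with $p_2$ required almost-surely, witnessed by the UGEC of Figure~\ref{fig:full} as argued in the proof of Theorem~\ref{theorem_asp}. Hardness follows since taking $p_2\colon s\mapsto 0$ for all $s$ makes $\p{\sim c}(p_2)$ trivially true and reduces the problem to a classical parity game for $p_1$. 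The one point demanding care — the main obstacle — is the non-strict case: one must correctly argue that $s_0\not\models\s(p_1)\wedge\p{>c}(p_2)$ is the hypothesis under which the UGEC-reachability characterization is valid, so the strict subroutine must genuinely be run first, and the two-way equivalence there relies on chaining Lemma~\ref{lemma:thres-nec-non-strict} with Lemma~\ref{lemma_rightleft} as done in its proof.
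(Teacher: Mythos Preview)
Your proposal is correct and follows the paper's approach almost verbatim: preprocess by restricting to the $\s(p_1)$-winning region, compute $\mathcal{V}$ via maximal ECs and Lemma~\ref{lemma_strat2}, reduce the strict case to $\s(p_1)\wedge\p{>c}(\Diamond\mathcal{V})$ via Lemmas~\ref{lemma:thres-suf-strict}--\ref{lemma:thres-suf}, and for the non-strict case fall through to $\s(p_1)\wedge\p{\geq c}(\Diamond\mathcal{U})$ via Lemmas~\ref{lemma:thres-suf-non-strict}--\ref{lemma:thres-nec-non-strict}, with complexity closed under ${\sf P}^{{\sf NP}\cap{\sf coNP}}={\sf NP}\cap{\sf coNP}$.

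One slip to fix: your infinite-memory-necessity argument is misstated. You invoke ``the special case $c=0$ with $p_2$ required almost-surely'', but $c=0$ with $\geq$ is trivially satisfied by any strategy, and the almost-sure case corresponds to threshold $1$, which lies outside $[0,1)$. The correct argument (and the one the paper gives) is that in the UGEC of Figure~\ref{fig:full} --- and likewise in the VGEC $\{a,b,c\}$ of Figure~\ref{fig:v-not-u} --- every \emph{finite-memory} strategy ensuring $\s(p_1)$ satisfies $p_2$ with probability \emph{zero} (this is exactly what is shown in the proof of Theorem~\ref{theorem_asp}); hence for any $c\in(0,1)$ and either comparison $\sim$, no finite-memory witness exists. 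With that correction your plan is complete.
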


\begin{proof}
The algorithm can be sketched as follows:
\begin{enumerate}
\item Remove from $\mymdp$ all states where $\s(p_1)$ does not hold, as well as their attractor for $\playerTwo$: if $s_0$ is removed, then answer $\no$. Let $\mymdp'$ be the remaining MDP. This operation is in ${\sf NP} \cap {\sf coNP}$ as it consists in solving a classical parity game~\cite{DBLP:journals/ipl/Jurdzinski98}.
\item Compute the set $\mathcal{V}$ representing the union of VGECs in $\mymdp'$. This can be done in polynomial time by computing the maximal ECs of $\mymdp'$ and applying Lemma~\ref{lemma_strat2} to check condition $\mathbf{(2_V)}$ for each of them (condition $\mathbf{(1_V)}$ holds thanks to the previous step).
\item Decide if $s_0 \models \s(p_1) \wedge \p{> c}(\Diamond \mathcal{V})$ using Theorem~\ref{theorem_reach_p}. This is in ${\sf NP} \cap {\sf coNP}$. If it holds, then answer $\yes$. If it does not hold and $\sim$ is $>$, then answer $\no$, otherwise, i.e., if $\sim$ is $\geq$, continue with the next step.
\item Use the sub-algorithm described in Theorem~\ref{theorem_asp} to compute the set $\mathcal{U}$ representing the union of UGECs in $\mymdp'$. This is in ${\sf NP} \cap {\sf coNP}$.
\item Decide if $s_0 \models \s(p_1) \wedge \p{\geq c}(\Diamond \mathcal{U})$ using Theorem~\ref{theorem_reach_p}. This is in ${\sf NP} \cap {\sf coNP}$. If it holds, answer $\yes$, otherwise answer $\no$.
\end{enumerate}

The correctness of this algorithm follows from Lemma~\ref{lemma:thres-suf-strict}, Lemma~\ref{lemma:thres-suf}, Lemma~\ref{lemma:thres-suf-non-strict}, and Lemma~\ref{lemma:thres-nec-non-strict}. It belongs to ${\sf P}^{{\sf NP} \cap {\sf coNP}} = {\sf NP} \cap {\sf coNP}$~\cite{Bra79}, and it trivially generalizes classical parity games (e.g., by taking $p_2\colon s \mapsto 0$ for all $s \in S$).

Finally, let us discuss strategies. Witness strategies for the case $>$ (resp.~$\geq$) were described in Lemma~\ref{lemma:thres-suf-strict} (resp.~Lemma~\ref{lemma:thres-suf-non-strict}). In both cases, infinite memory is in general required, because it is in general necessary to play optimally in both VGECs and UGECs. For UGECs, see Theorem~\ref{theorem_asp} for an example. For VGECs, consider the VGEC $\{a, b, c\}$ in the MDP of Figure~\ref{fig:v-not-u}. We claim that for every finite-memory strategy $\lambda$ ensuring $\s(p_1)$, the probability to ensure $p_2$ is zero, hence there is no finite-memory witness for $a \models \s(p_1) \wedge \p{>1-\varepsilon}(p_2)$. As argued for the UGEC case, in order to ensure $p_1$ on the play in which $\playerTwo$ always goes to $a$ from $b$, $\playerOne$ must go to $d$ at some point, and because of the finite memory of $\lambda$, he must do it after a bounded number of steps along which $c$ is not visited: say $n$ steps. Again, the probability to do it will be bounded from below by a strictly positive constant, here $2^{-\frac{n}{2}}$ (the probability that $\playerTwo$ chooses $a$ for $\frac{n}{2}$ times in a row), all along a consistent play. Therefore, $\playerOne$ will almost-surely go to $d$, and $p_2$ will actually be satisfied with probability zero.
\end{proof}

\section{Conclusion}
We further extended the beyond worst-case synthesis framework by providing tools to reason about $\omega$-regular conditions, representing functional requirements of systems. We studied the case of two parity objectives and proved ${\sf NP} \cap {\sf coNP}$ membership for all considered variants.

It is interesting to note that our algorithms can easily be generalized to more than two parity objectives as long as we consider only the $\s$ and $\as$ operators. Indeed, we have that for any MDP $\mymdp$, any state $s$ in $\mymdp$, and any number of priority functions $p_1, \dotso p_n$, it holds that $s \models \bigwedge_{i} \s(p_i) \bigwedge_{j} \as(p_j) \iff s \models \s\big(\bigwedge_{i} p_i\big) \wedge \as\big(\bigwedge_{j} p_j\big)$, and it is easy to reduce the latter problem to $s' \models \s(p') \wedge \as(p'')$ on a (larger) MDP $\mymdp'$, using classical techniques (e.g., any conjunction of parity objectives can be expressed as a Muller condition~\cite{DBLP:conf/fossacs/ChatterjeeHP07}, that in turn can be transformed into a single parity condition on a larger graph~\cite{DBLP:conf/fsttcs/Loding99}). Extending this generalization to the operator $\p{\sim c}$ is more challenging and would require to mix our techniques to methods for percentile queries~\cite{percentile2017}: an interesting direction for future work.

Another question is the limits of finite-memory strategies. We saw that in general, infinite memory is needed to satisfy problems involving two parity objectives. We would like to investigate under which additional conditions finite-memory strategies suffice, and to develop corresponding algorithms, as finite-memory strategies are of great practical interest.

\bibliography{biblio}

\newpage
\appendix
\section{Additional tools related to density}
\label{app:density}

In this section, we develop a density argument that is needed to prove the right-to-left implications of Lemma~\ref{lemma_geq_to_as}, and Lemma~\ref{lemma_asp_to_asr} in Section~\ref{subsec:necessary}. We place this argument in appendix, as it essentially gives an explicit presentation of results that implicitly follow from~\cite{DBLP:conf/concur/AlmagorKV16}: hence it is not a new contribution.

\smallskip\noindent\textbf{Density.} Let $\mymdp = (G = (S, E), S_1, S_2, \delta)$ be an MDP, $s\in S$ an initial state, $\lambda$ a strategy, and $R\subseteq S$. We say that $R$ is \textit{dense} in $\lambda$ from $s$ if and only if for all $\rho \in \mathtt{Pref}(\out_s^{\mymdp}(\lambda))$, there exists $\rho'$ such that $\rho \cdot \rho' \in \mathtt{Pref}(\out_s^{\mymdp}(\lambda))$ and $\mathtt{Last}(\rho') \in R$. That is, after all prefixes in the tree $\out_s^{\mymdp}(\lambda)$, there is a continuation that visits $R$.

\smallskip\noindent\textbf{B\"uchi objective.} The B\"uchi objective is a parity objective with priorities in $\{1, 2\}$, where states with priority $2$ are called B\"uchi accepting states, defining a set $B$: $\mathtt{B\"uchi}(B) = \{\pi\in\mathtt{Plays}(G) \mid \infny(\pi) \cap B \neq \emptyset\}$.

\smallskip\noindent\textbf{Almost-sure reachability under parity constraints.} We first give explicitly the construction of~\cite[Lemma 3]{almagor2016minimizing} (extended version of~\cite{DBLP:conf/concur/AlmagorKV16}), that we used implicitly in Theorem~\ref{thm:reach-prob-one}.

\begin{theorem}[\cite{DBLP:conf/concur/AlmagorKV16,almagor2016minimizing}]
\label{thm:reach-Almagor}
Given an MDP $\mymdp = (G = (S, E), S_1, S_2, \delta)$, a state $s_0 \in S$, a priority function $p\colon S \rightarrow \{1,2, \dots, d\}$, and a target set $R \subseteq S$, it can be decided in ${\sf NP} \cap {\sf coNP}$ if $s_0 \models \s(\neg (\Diamond R)\rightarrow p) \wedge \as(\Diamond R)$. If the answer is $\yes$, then there exists a finite-memory witness strategy. This decision problem is at least as hard as parity games.
\end{theorem}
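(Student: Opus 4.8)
This is the construction of~\cite[Lemma 3]{almagor2016minimizing} (extended version of~\cite{DBLP:conf/concur/AlmagorKV16}); here is how I would carry it out. Once a play visits $R$, the objective $\s(\neg(\Diamond R)\to p)\wedge\as(\Diamond R)$ imposes nothing further, so I would first add to the states of $\mymdp$ a Boolean flag recording whether $R$ has already been seen; the whole difficulty then concerns the fragment of each play \emph{before} its first visit to $R$. On that fragment $\playerOne$ must, against an \emph{antagonistic} $\playerTwo$, guarantee $\mathtt{Parity}(p)$ on every consistent play that never reaches $R$; and it must guarantee $\prob[\Diamond R]=1$. The glue between these worst-case and stochastic requirements is the elementary observation that, \emph{for finite-memory strategies}, the latter is equivalent to $R$ being dense along the strategy tree (every consistent prefix has a consistent continuation visiting $R$): a finite-memory strategy induces a finite Markov chain, in which $R$ is hit almost surely iff it is reachable from every reachable state. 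So it suffices to decide whether some finite-memory $\lambda$ satisfies the surely-parity-or-$R$ condition together with the density of $R$, and then to output such a witness.

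Since both requirements are now games against an antagonistic $\playerTwo$, I would reduce to a two-player \emph{parity-B\"uchi game} $\game$ over the flagged arena: $\playerTwo$ controls the former stochastic states adversarially (which realizes the universal quantifier of the surely-condition) and is additionally allowed to \emph{challenge} $\playerOne$ on the reachability of $R$; the B\"uchi part of $\game$'s objective requires that challenges issued infinitely often are resolved infinitely often, which is precisely density, while the parity part is $\mathtt{Parity}(p)$ with the $R$-states turned into winning sinks. The challenge/verification gadget --- which has to be tuned so that a $\playerOne$ strategy wins $\game$ from $s_0$ exactly when, in $\mymdp$, it surely ensures $\neg(\Diamond R)\to p$ and keeps $R$ dense --- is the one of~\cite{DBLP:conf/concur/AlmagorKV16,almagor2016minimizing}. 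The conjunction of a parity and a B\"uchi objective is turned into a single parity objective on a polynomially larger arena by the standard product construction, so $\game$ is a polynomial-size parity game; its winner can be computed in ${\sf NP}\cap{\sf coNP}$~\cite{DBLP:journals/ipl/Jurdzinski98}, whence the announced bound via ${\sf P}^{{\sf NP}\cap{\sf coNP}}={\sf NP}\cap{\sf coNP}$~\cite{Bra79}. A memoryless winning strategy in $\game$ translates back to a finite-memory strategy in $\mymdp$, and --- possibly after a routine composition with a ``head for $R$ along a shortest path'' sub-strategy on the reachable pre-$R$ region, every state of which is cooperatively $R$-reaching so that shortest paths keep positive probability --- it is a witness for both conjuncts, the parity requirement on the measure-zero plays staying out of $R$ being handled by the $\mathtt{Parity}$-enforcing strategy there together with prefix-independence.

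For the hardness, I would reduce from the winner problem of parity games. Given a parity game $(G=(S,E),S_1,S_2)$ with priority function $p$ and initial state $q_0$, build the MDP $\mymdp$ obtained by: subdividing every edge $(s,t)\in E$ by a fresh stochastic state $e_{(s,t)}$ whose distribution is uniform over $\{t,r\}$, where $r$ is a fresh absorbing state; turning the $S_2$-states into stochastic states (only the support of the distribution matters); setting $R=\{r\}$ and giving all fresh states the minimal priority $1$. On the one hand, every infinite play visits some $e_{(s,t)}$ between any two $S$-states, and each such visit reaches $r$ with probability $1/2$, so $\as(\Diamond R)$ holds under \emph{every} strategy. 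On the other hand, an antagonistic $\playerTwo$ trying to falsify $\neg(\Diamond R)\to p$ will never move to $r$ from an $e_{(s,t)}$ (that would satisfy the implication vacuously), hence it effectively plays the original game $G$, in which the minimal-priority states $e_{(s,t)}$ are transparent; thus $q_0\models\s(\neg(\Diamond R)\to p)$ in $\mymdp$ iff $\playerOne$ wins $G$ from $q_0$. Combining the two, $q_0\models\s(\neg(\Diamond R)\to p)\wedge\as(\Diamond R)$ in $\mymdp$ iff $\playerOne$ wins $G$ from $q_0$, the witness being any translation of a winning strategy of $G$.

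The step I expect to be the main obstacle is the design and correctness proof of the game $\game$: the challenge gadget must encode ``$R$ is dense'' without collapsing it into the strictly stronger ``$\playerOne$ can force $\Diamond R$ against an adversary'', and the back-translation has to deliver a single finite-memory strategy that is simultaneously a witness for the surely-parity-or-$R$ condition and for $\as(\Diamond R)$ --- which is exactly why the restriction to finite memory (equating almost-sure reachability with density) and, if needed, the shortest-path composition are the load-bearing ingredients.
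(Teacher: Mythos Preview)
Your proposal is correct and follows essentially the same route as the paper (which recounts the construction of Almagor et al.): reduce to a two-player parity-B\"uchi game via the challenge gadget---the paper's \texttt{square}/\texttt{circle} duplication of stochastic states---where the B\"uchi component encodes density of $R$, then extract a finite-memory witness from a winning strategy in that game. The one difference worth flagging is how the parity-B\"uchi game itself is placed in ${\sf NP}\cap{\sf coNP}$: you reduce it to a single parity game via a product (which works, e.g.\ by tracking the maximal priority seen since the last B\"uchi visit, giving an $O(|S|\cdot d)$ arena), whereas the paper, following~\cite{DBLP:conf/concur/AlmagorKV16,almagor2016minimizing}, reduces it to a mean-payoff parity game~\cite{DBLP:conf/lics/ChatterjeeHJ05}, the B\"uchi condition becoming a strict mean-payoff threshold---a reduction that is sound precisely because finite-memory strategies suffice for the parity-B\"uchi game. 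Both routes yield the same complexity; yours is arguably more direct, theirs reuses an off-the-shelf ${\sf NP}\cap{\sf coNP}$ result. Your hardness reduction via edge subdivision with a probabilistic escape to a fresh sink is correct; the paper simply inherits hardness from the cited work without spelling out a construction.
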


To solve the decision problem of Theorem~\ref{thm:reach-Almagor}, Almagor et al.~construct a two-player zero-sum game with a conjunction of a B\"uchi and a parity objectives. We recall the construction here.
To simplify the formal definition and w.l.o.g., we make the hypothesis that in $\mymdp$, $E \subseteq S_1 \times S_2 \cup S_2 \times S_1$, i.e., states of $\playerOne$ and $\playerTwo$ alternate. Then the game $G^{\mymdp}_{R,p}$ is defined as follows.
  \begin{itemize}
  	\item The state space of $G^{\mymdp}_{R,p}$ is a copy of the state space of $\mymdp$ where the states in $S_2 \setminus R$ have been duplicated in three copies as shown in Figure~\ref{fig:ggt}. Then $S'$ is defined as $S'=S_1 \cup S_2 \cup (S_2 \setminus R) \times \{ {\tt square}, {\tt circle} \}$: $\playerOne$ owns the states in $S_1$ and $(S_2 \setminus R) \times \{ {\tt circle}\}$, and $\playerTwo$ owns the states in $S_2$ and $(S_2 \setminus R) \times \{ {\tt square}\}$.
	\item The set of edges are defined according to Figure~\ref{fig:ggt} with the additional property that states in $R$ are made absorbing. That is, $E'$ is the union of the following sets:
	\begin{itemize}
		\item $\{ (s,s') \mid (s,s') \in E \cap ((S_1\setminus R) \times S_2) \}$,
		\item $\{ (s,(s,\star)) \mid s \in S_2 \setminus R,\, \star \in \{ {\tt square},{\tt circle} \} ) \}$,
		\item $\{ ((s,\star),s') \mid (s,s') \in E \cap ((S_2\setminus R) \times S_1),\, \star \in \{ {\tt square},{\tt circle} \} \}$,
		\item $\{ (s,s) \mid s \in R \}$.
	\end{itemize}

	\item The priority function $p'$ is defined as $p'(s)=p(s)$ for all $s \in S_1 \cup S_2 \setminus R$, $p'((s,\star))=0$ for all $s \in S_2 \setminus R$ and $\star \in \{ {\tt square},{\tt circle} \}$, and $p'(s)=0$ for all $s \in R$.

	\item The set of B\"uchi states is $B=R \cup \{ (s,{\tt square}) \mid s \in S_2\setminus R \}$.
\end{itemize}

\begin{figure}[tbh]
\centering
\scalebox{0.8}{\begin{tikzpicture}[every node/.style={font=\small,inner sep=1pt}]
\draw (0,0) node[carre,bleu] (s0) {$i$};
\draw (6,0) node[carre,bleu] (s1) {$i$};
\draw (8,1) node[carre,double,thick,jaune] (s2) {$0$};
\draw (8,-1) node[rond,thick,jaune] (s3) {$0$};
\draw (4,0) node (l) {{\LARGE $\Longrightarrow$}};
\draw[-latex] (s0) to (1,0.5);
\draw[-latex] (s0) to (1,0);
\draw[-latex] (s0) to (1,-0.5);
\draw[-latex] (s2) to (9,1.5);
\draw[-latex] (s2) to (9,1);
\draw[-latex] (s2) to (9,0.5);
\draw[-latex] (s3) to (9,-1.5);
\draw[-latex] (s3) to (9,-1);
\draw[-latex] (s3) to (9,-0.5);
\draw[-latex] (s1) to (s2);
\draw[-latex] (s1) to (s3);
\end{tikzpicture}}
\caption{Replacing a $\player{2}$ state with priority $i$ by three new states, including a B\"uchi one.}
\label{fig:ggt}
\end{figure}
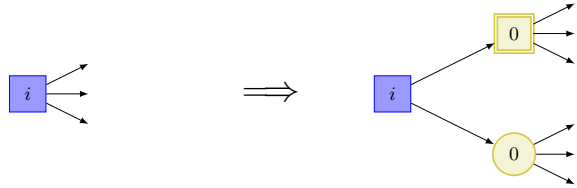

Careful inspection of~\cite{DBLP:conf/concur/AlmagorKV16,almagor2016minimizing} reveals that $s_0 \models \s(\neg (\Diamond R)\rightarrow p) \wedge \as(\Diamond R)$ if and only if $\playerOne$ has a winning strategy in the B\"uchi parity game $G^{\mymdp}_{R,p}$ from state $s_0$, and that finite-memory strategies suffice on both sides. Indeed, finite-memory strategies suffice for B\"uchi parity games (as they are a particular case of generalized parity games~\cite{DBLP:conf/fossacs/ChatterjeeHP07}), and from a finite-memory strategy in the B\"uchi parity game, one obtains straightforwardly a witness strategy for property $\s(\neg (\Diamond R)\rightarrow p) \wedge \as(\Diamond R)$. The proof presented in~\cite{DBLP:conf/concur/AlmagorKV16,almagor2016minimizing} further reduces the problem of deciding the winner in the B\"uchi parity game to the same problem on a mean-payoff parity game~\cite{DBLP:conf/lics/ChatterjeeHJ05}, in order to benefit from the ${\sf NP} \cap {\sf coNP}$ complexity of the latter. This reduction does hold since we can restrict ourselves to finite-memory strategies without loss of generality.

We now prove a related result that uses the notion of density of a set of states in a strategy tree.

\begin{lemma}
\label{lem:bp}
Given an MDP $\mymdp = (G = (S, E), S_1, S_2, \delta)$, a state $s \in S$, a priority function $p\colon S \rightarrow \{1,2, \dots, d\}$, a set $R \subseteq S$, and a strategy $\lambda$ such that $s, \lambda \models \s(p_1)$ and $R$ is dense in $\lambda$ from $s$, there is a winning strategy from $s$ in the associated B\"uchi parity game $G^{\mymdp}_{R,p}$ from state $s$.
\end{lemma}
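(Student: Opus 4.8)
The plan is to build an (a priori infinite-memory) winning strategy $\tau$ for $\playerOne$ in the B\"uchi parity game $G^{\mymdp}_{R,p}$ by \emph{tracking $\lambda$ together with a target path toward $R$ supplied by density}. Recall the shape of $G^{\mymdp}_{R,p}$ (Figure~\ref{fig:ggt}): a play that reaches $R$ is trapped there with priority~$0$, and since $R\subseteq B$ and every duplicated state also carries priority $0$, such a play is won by $\playerOne$. At a genuine state $i\in S_2\setminus R$ it is $\playerTwo$ who decides whether to go to the square copy $(i,\mathtt{square})$ --- an honest move that gives a B\"uchi visit but whose successor $\playerTwo$ still chooses --- or to the circle copy $(i,\mathtt{circle})$ --- which hands the choice of successor to $\playerOne$ but is not a B\"uchi state. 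The guiding observation is that, whatever is chosen at the circle and square copies, the underlying move is a genuine MDP transition out of $i$ (the copies of $i$ have edges exactly to $\mathtt{Succ}(i)$); hence the projection $\gamma(\pi)$ of any game play $\pi$ to $\mymdp$ (collapsing the duplicates) is a valid play of $\mymdp$, and if $\playerOne$ always plays the move prescribed by $\lambda$ at genuine $S_1$ states, then $\gamma(\pi)$ is consistent with $\lambda$ no matter which choices are made at the duplicated states.

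Concretely, $\tau$ maintains a \emph{target path}: a finite continuation $\sigma$ of the current prefix taken inside the tree $\out^{\mymdp}_{s}(\lambda)$ and ending in $R$ --- its existence is precisely the density hypothesis. Strategy $\tau$ follows $\sigma$: at a genuine $S_1$ state it plays the (automatically $\lambda$-consistent) successor dictated by $\sigma$, and at a circle copy $(i,\mathtt{circle})$ it plays the successor dictated by $\sigma$. Whenever $\playerTwo$ instead routes the play through a square copy $(i,\mathtt{square})$ and on to some successor $j$ that diverts the play off $\sigma$, $\playerOne$ recomputes a fresh target path from $j$ using density again; this is legitimate because, by the previous paragraph, the projected prefix is still consistent with $\lambda$ and ends in $j$. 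This defines $\tau$ (it may need infinite memory, since $\sigma$ is rebuilt from $\lambda$'s tree, but finite-memory winning strategies exist anyway, B\"uchi parity games being a special case of generalized parity games, exactly as invoked for Theorem~\ref{thm:reach-Almagor}).

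For correctness, fix any strategy of $\playerTwo$ and let $\pi$ be the resulting play. If $\pi$ reaches $R$, $\playerOne$ wins as noted. Otherwise $\playerOne$ never finishes a target path: $\tau$ can always faithfully follow the current $\sigma$ at every state it controls, so a target path is abandoned only when $\playerTwo$ diverts the play off it through a square copy, and completing $\sigma$ would reach $R$. Since $R$ is never reached, the current $\sigma$ is abandoned infinitely often, so square copies are visited infinitely often and the B\"uchi condition holds. For parity, $\gamma(\pi)$ is consistent with $\lambda$ and hence satisfies $\mathtt{Parity}(p)$ by $s,\lambda\models\s(p)$; moreover, as $R$ is never reached, genuine states (all of priority $\ge 1$) occur infinitely often in $\pi$ while all duplicated states have priority $0$, so the maximal priority seen infinitely often in $\pi$ equals that of $\gamma(\pi)$, which is even. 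Thus $\pi$ is won by $\playerOne$, and $\tau$ is winning from $s$.

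I expect the real work to be in pinning down the two invariants the informal argument relies on. First, that under $\tau$ the MDP-projection of every game prefix stays consistent with $\lambda$ --- needed both to re-invoke density at each square diversion and to transfer the parity of $\gamma(\pi)$ to $\pi$; this requires the careful bookkeeping that $\lambda$'s prescriptions are only ever applied at genuine $S_1$ states, while all moves at genuine $S_2$ states and at the two kinds of copies are unconstrained by $\lambda$. Second, that a play avoiding $R$ really does force infinitely many square visits --- which reduces to verifying that $\tau$ never gets "stuck" on a target path it cannot follow, and that reaching the end of a target path genuinely entails a visit to $R$ in the game (where $R$ is absorbing). Everything else is routine, and the ${\sf NP}\cap{\sf coNP}$ complexity and the finite-memory refinement are then inherited from the analysis of $G^{\mymdp}_{R,p}$ recalled above.
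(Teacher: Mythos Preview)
Your proposal is correct and takes essentially the same approach as the paper: construct a strategy in $G^{\mymdp}_{R,p}$ that mimics $\lambda$ on the underlying MDP, uses density to steer toward $R$ whenever $\playerOne$ controls the choice at a circle copy, and argues that any play either reaches $R$ or must visit square copies infinitely often, while parity is inherited from $\lambda$ via the projection. The only cosmetic difference is packaging: the paper encodes your dynamically recomputed ``target path'' as a fixed labelling $\mathtt{lab}$ on prefixes ending in $S_2$, with a coherence condition ensuring that successive labels along the path are suffixes of one another, and then its three-case B\"uchi analysis (reach $R$; eventually always circle; infinitely often square) is exactly your dichotomy phrased slightly differently.
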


\begin{proof}
We construct a strategy $\lambda'$ for $\playerOne$ to play in $G^{\mymdp}_{R,p}$ starting from the strategy $\lambda$, and we show that $\lambda'$ is winning for the B\"uchi parity objective.

Let $\alpha\colon S' \rightarrow S$ be a mapping from the states in $G^{\mymdp}_{R,p}$ to states in $\mymdp$ such that $\alpha(s)=s$ for all $s \in S_1 \cup S_2$ and $\alpha(s,\star)=\epsilon$ for all $(s,\star) \in (S_2\setminus R) \times \{ {\tt square}, {\tt circle} \}$. Then we extend $\alpha$ to map prefixes in $G^{\mymdp}_{R,p}$ to prefixes in $\mymdp$.

Now, we label each prefix $\rho \in \mathtt{Pref}(\out_s^{\mymdp}(\lambda))$ such that $\mathtt{Last}(\rho) \in S_2$ by a finite continuation $\mathtt{lab}(\rho)=\rho'$ such that (i) $\rho \cdot \rho' \in \mathtt{Pref}(\out_s^{\mymdp}(\lambda))$ and $\mathtt{last}(\rho') \in R$; and (ii) if $\rho'=\rho'_1 \cdot s \cdot \rho'_2$ is the label of $\rho$ and $s \in S_2$, then $\mathtt{lab}(\rho \cdot \rho'_1 \cdot s)=\rho'_2$. 	It should be clear that due to the density of $R$, such a labelling is always possible.

Now, we define the strategy $\lambda'$ from $\lambda$ and this labelling:
  \begin{itemize}
	\item For a prefix $\rho$ in the game such that $\mathtt{Last}(\rho) \in R$, the only possible choice is $\mathtt{Last}(\rho)$ as $R$ is absorbing in the game.
	\item For a prefix $\rho$ such that $\mathtt{Last}(\rho) \in S_1 \cup (S_2\setminus R) \times \{{\tt circle}\}$ and such that $\alpha(\rho)$ is not consistent with $\lambda$, we choose any $s'$ such that $(\mathtt{Last}(\rho),s')\in E'$. 
	\item For a prefix $\rho$ such that $\mathtt{Last}(\rho) \in S_1$ and such that $\alpha(\rho)$ is consistent with $\lambda$, we define $\lambda'(\rho)=\lambda(\alpha(\rho))$
	\item For a prefix $\rho$ such that $\mathtt{Last}(\rho) \in (S_2\setminus R) \times \{ {\tt circle} \}$ and such that $\alpha(\rho)$ is consistent with $\lambda$, then $\alpha(\rho)$ in the tree $\out_s^{\mymdp}(\lambda)$ is labelled with a finite path $\rho'$ that leads to a state in $R$, i.e., $\mathtt{lab}(\alpha(\rho))=\rho'$, then we define $\lambda'(\rho)=\mathtt{First}(\rho')$.
\end{itemize}	
	
Let us now show that $\lambda'$ wins the B\"uchi $B$ and parity $p'$ objectives in the game $G^{\mymdp}_{R,p}$ from state $s$. First, as $\lambda$ is enforcing $p$ in $\mymdp$, $p'$ replicates the priorities given by $p$, and absorbing states in $R$ have even priorities for $p'$, it is clear that $\lambda'$ enforces $p'$ in the game $G^{\mymdp}_{R,p}$. Now, for the B\"uchi objective, we consider the following case study on the plays $\pi$ consistent with $\lambda'$ in the game.
	\begin{itemize}
		\item If $\pi$ ends up in $R$, then it is clearly B\"uchi accepting.
		\item If $\pi$ is such that after a finite prefix $\rho$, $\playerTwo$ always chooses from states $s' \in S_2$ the copy $(s',{\tt circle})$, then according to the definition of $\lambda'$, the play will follow a finite path $\rho'$ to a state in $R$, and so $\pi$ reaches $R$ and as a consequence $\pi$ is B\"uchi accepting.
		\item Finally, if $\pi$ is such that it is never the case that $\playerTwo$ always chooses from a state $s'$ the copy $(s',{\tt circle})$, then $\playerTwo$ chooses infinitely often the copy $(s',{\tt square})$, and because $(s',{\tt square}) \in B$, we have that $\pi$ is also B\"uchi accepting.
	\end{itemize}
So in all cases, play $\pi$ satisfies the B\"uchi objective, and we are done.
\end{proof}

\smallskip\noindent\textbf{Consequence.} Lemma~\ref{lem:bp}, in combination with Theorem~\ref{thm:reach-Almagor} and the reduction of Theorem~\ref{thm:reach-prob-one}, has the interesting consequence that if $\lambda$ is such that $s, \lambda \models \s(p_1)$ and $R$ is dense in $\lambda$ from~$s$, then we can build $\lambda'$ such that $s, \lambda' \models \s(p_1) \wedge \as(\Diamond R)$. This is the property that we use in Lemma~\ref{lemma_geq_to_as}, but also in Lemma~\ref{lem:ugec} and Lemma~\ref{lemma_rightleft} to prove the right-to-left implication of Lemma~\ref{lemma_asp_to_asr}.
\end{document}